\definecolor{natureblue}{HTML}{343796}
\titleformat{\section}[display]{\vspace{-1em}}{}{0pt}{\normalfont\sffamily\textbf}[\vspace{-1em}]
\def\frontmatter@abstractfont{\bf\sffamily}%
\def\frontmatter@title@format{\noindent\huge\sffamily}{}%
\def\frontmatter@authorformat{\vspace{1em}\noindent\large\sffamily}%
\def\frontmatter@affiliationfont{\vspace{1em}\normalsize\noindent\sffamily}%
\def\frontmatter@above@affiliation@script{\vspace{1em}\noindent}%
\def\frontmatter@makefnmark{}
\renewcommand*\frontmatter@date[2][\Dated@name]{\def\@date{}}%
\newtheorem{theorem}{Theorem}
\newtheorem{lemma}[theorem]{Lemma}
\newtheorem{corollary}[theorem]{Corollary}
\newtheorem{definition}[theorem]{Definition}
\newtheorem{standardbox}{Box} 
\newcommand{\manuallabel}[2]{\def\@currentlabel{#2}\label{#1}}
\newcommand{\tikzbox}[1]{
\begin{table}[t]
\normalsize
\begin{tikzpicture}
\node[draw, rounded corners = .5ex,fill=gray!4]{
  \begin{minipage}{.982\columnwidth}
  #1
  \end{minipage}
  };
\end{tikzpicture}
\end{table}
}
\newcommand{\pushright}[1]{\ifmeasuring@#1\else\omit\hfill$\displaystyle#1$\fi\ignorespaces}
\newcommand{\eq}{equation}
\newcommand{\eqs}{equations}
\newcommand{\Eq}{Equation}
\newcommand{\Eqs}{Equations}
\newcommand{\fig}{Fig.}
\newcommand{\e}{\mathrm{e}}
\renewcommand{\i}{\mathrm{i}}
\newcommand{\1}{\mathbb{1}}
\newcommand{\0}{\mathbf{0}}
\DeclareMathOperator{\landauO}{\mathrm{O}}
\DeclareMathOperator{\Tr}{Tr}
\newcommand{\RR}{\mathbb{R}}
\newcommand{\NN}{\mathbb{N}}
\newcommand{\PP}{\mathbb{P}}
\renewcommand{\P}{\mathbf{P}}
\newcommand{\PPP}{\mathcal{P}}
\newcommand{\ket}[1]{\left|{#1}\right\rangle}
\newcommand{\bra}[1]{\left\langle{#1}\right|}
\newcommand{\ketbra}[2]{\ket{#1} \!\! \bra{#2}}
\newcommand{\norm}[1]{\Vert #1 \Vert}
\newcommand{\normb}[1]{\bigl\Vert #1 \bigr\Vert}
\newcommand{\ad}{^\dagger}
\newcommand{\argdot}{{ \cdot }}
\renewcommand{\ol}[1]{\overline{#1}}
\newcommand{\kw}[1]{\frac{1}{#1}}
\DeclareMathOperator{\vect}{\mathrm{vec}}
\newcommand{\ns}{C}
\newcommand{\nsA}{c}
\newcommand{\nCor}[1]{N_{#1}}
\newcommand{\epsilonmax}{\epsilon_{\mathrm{max}}}
\newcommand{\varepsilonmax}{\varepsilon_{\mathrm{max}}}
\newcommand{\smax}{s_{\mathrm{max}}}
\newcommand{\A}{\mathcal{A}}
\newcommand{\bphi}{\boldsymbol{\phi}}
\newcommand{\FSys}{F_\Sys}
\newcommand{\FT}{F_{\mathrm{T}}}
\newcommand{\Fest}{F^{(\vec{n})\ast}}
\newcommand{\Festzero}{F^{(0)\ast}}
\newcommand{\Festone}{F^{(n)\ast}}
\newcommand{\Fzero}{F^{(0)}}
\newcommand{\Fone}{F^{(n)}}
\newcommand{\Fn}{F^{(\vec{n})}}
\newcommand{\FSysn}{\FSys^{(\vec{n})}}
\newcommand{\FSysone}{{\FSys}^{(n)}}
\newcommand{\FSyszero}{{\FSys}^{(0)}}
\newcommand{\FSysestzero}{\FSys^{(0)\ast}}
\newcommand{\FSysestone}{\FSys^{(n)\ast}}
\newcommand{\rhot}{\varrho_{\mathrm{t}}}
\newcommand{\rhop}{\varrho_{\mathrm{p}}}
\newcommand{\rhosysp}{{\varrho_{\Sys}}_{\mathrm{p}}}
\newcommand{\rhosyst}{{\varrho_{\Sys}}_{\mathrm{t}}}
\newcommand{\trhop}{\tilde{\varrho}_{\mathrm{p}}}
\newcommand{\trhopn}{\tilde\varrho_{\mathrm{p},\vec n}}
\newcommand{\ntperp}{\tilde{n}^{\perp}}
\newcommand{\Epsilon}{\mathcal{E}}
\renewcommand{\S}{\mathcal{S}}
\newcommand{\CG}{\mathcal{C}_\mathrm{G}}
\newcommand{\CLON}{\mathcal{C}_\mathrm{LO}}
\newcommand{\CLPSG}{\mathcal{C}_\mathrm{LPSG}}
\newcommand{\CLPSLON}{\mathcal{C}_\mathrm{LPSLO}}
\newcommand{\MG}{\mathcal{M}_\mathrm{G}}
\newcommand{\MLON}{\mathcal{M}_\mathrm{LO}}
\newcommand{\MLPSG}{\mathcal{M}_\mathrm{LPSG}}
\newcommand{\MLPSLON}{\mathcal{M}_\mathrm{LPSLO}}
\renewcommand{\vec}[1]{\mathbf{#1}}
\renewcommand{\S}{\vec{S}} 
\newcommand{\Sys}{\mathcal{S}}
\renewcommand{\O}{\vec{O}} 
\renewcommand{\P}{\vec{P}} 
\renewcommand{\o}{\vec{o}} 
\renewcommand{\0}{\vec{0}} 
\renewcommand{\r}{\vec{r}}
\newcommand{\n}{\vec{n}}
\newcommand{\bfepsilon}{\boldsymbol{\epsilon}}
\newcommand{\D}{\vec{D}}
\newcommand{\bfgamma}{\boldsymbol{\gamma}}
\newcommand{\bfGamma}{\boldsymbol{\Gamma}}
\newcommand{\fu}{Dahlem Center for Complex Quantum Systems, Freie Universit\"{a}t Berlin, 14195 Berlin, Germany}
\newcommand{\icfo}{\mbox{ICFO-Institut de Ci\`encies Fot\`oniques, Mediterranean Technology Park, 08860 Castelldefels (Barcelona), Spain}}
\newcommand{\mpq}{\mbox{Max-Planck-Institut f\"ur Quantenoptik, Hans-Kopfermann-Str.\ 1, 85748 Garching, Germany}}
\newcommand{\titletext}{Reliable quantum certification for photonic quantum technologies}
\begin{document}
%
\title{\titletext}
\author{Leandro Aolita$^{1}$, Christian Gogolin$^\text{1,2,3}$, Martin Kliesch$^\text{1}$, and Jens Eisert$^\text{1}$}
\affiliation{\mbox{$^\text{1}$\fu}\\
  \noindent{}$^\text{2}$\icfo\\
  \noindent{}$^\text{3}$\mpq}

\begin{abstract}
A major roadblock for large-scale photonic quantum technologies is the lack of practical reliable certification tools. We introduce an experimentally friendly --- yet mathematically rigorous --- certification test for experimental preparations of arbitrary $m$-mode pure Gaussian states, pure non-Gaussian states generated by linear-optical circuits with $n$-boson Fock-basis states as inputs, and states of these two classes subsequently post-selected with local measurements on ancillary modes. The protocol is efficient in $m$ and the inverse post-selection success probability for all Gaussian states and all mentioned non-Gaussian states with constant $n$. We follow the mindset of an untrusted prover, who prepares the state, and a skeptic certifier, with classical computing and single-mode homodyne-detection capabilities only. No assumptions are made on the type of noise or capabilities of the prover. Our technique exploits an extremality-based fidelity bound whose estimation relies on non-Gaussian state 
nullifiers, which we introduce on the way as a byproduct result. The certification of many-mode photonic networks, as those used for photonic quantum simulations, boson samplers, and quantum metrology, is now within reach.
\end{abstract}

\maketitle


Many-body quantum devices promise exciting applications in ultra-precise quantum metrology \cite{Giovannetti}, quantum computing \cite{Nielsen-Chuang, BlattErrorCorrection,Barends14},  
and quantum simulators \cite{CZ12,AW12,BDN12,BR12,HTK12}.
In the quest for their large-scale realisation, impressive progress on a variety of quantum technologies has recently been made \cite{AW12,BDN12,BR12,HTK12}.
Among them, optical implementations play a key role.
For example, sophisticated manipulations of multi-qubit entangled states of up to 
eight parametrically down-converted photons \cite{Yao12,Huang12} 
have been demonstrated and continuous-variable entanglement among 60 stable \cite{Chen13} and up to 10000 flying \cite{Yokoyama13} modes has been verified in optical set-ups. In addition, small-sized simulations of BosonSampling \cite{E1,E2,E3,E4} and Anderson localisation in quantum walks \cite{Peruzzo10,Crespi13} 
have been performed with on-chip integrated linear-optical networks. 

This fast pace of advance, however, makes the problem of \emph{reliable 
certification} an increasingly pressing issue \cite{BS1, BS2, SpagnoloValidation,CarolanValidation,Tichy}.
From a practical viewpoint, further experimental progress on many-body quantum technologies is nowadays hindered by the lack of practical certification tools. At a fundamental level, certifying many-body quantum devices is ultimately about testing quantum mechanics in regimes where it has never been tested before.

Tomographic characterisation of quantum states requires the measurement of exponentially many observables. 
Compressed-sensing techniques \cite{Gro+10}
reduce, for states approximated by low-rank density matrices, the requirements significantly, 
but still demand exponentially many measurements. 
Efficient certification techniques, requiring only polynomially many measurements, for universal quantum computation\cite{Aharonov,Broadbent} and a restricted model of computation with one pure qubit\cite{Kapourniotis} exist in the form of quantum interactive proofs. However, these require either a fully fledged fault-tolerant universal quantum computer \cite{Aharonov,Broadbent} or an experimentally non-trivial measurement-based quantum device \cite{Kapourniotis}. In addition, these methods involve sequential interaction rounds with the device \cite{Aharonov,Broadbent,Kapourniotis}. In contrast, permutationally invariant tomography \cite{Perm}, Monte-Carlo fidelity estimation \cite{Flammia&Liu,Silva11,Flammia2}, and Clifford-circuit benchmarking techniques \cite{EmersonGambetta} provide experimentally friendly alternatives for the efficient certification of preparations of permutationally invariant \cite{Perm} and qubit stabiliser or W states \cite{Flammia&Liu,Silva11,Flammia2,EmersonGambetta}, respectively.
Nevertheless, none of these methods addresses continuous-variable systems, not even in Gaussian states.

\begin{figure*}[t!]
\centering
\includegraphics[width=.8\linewidth]{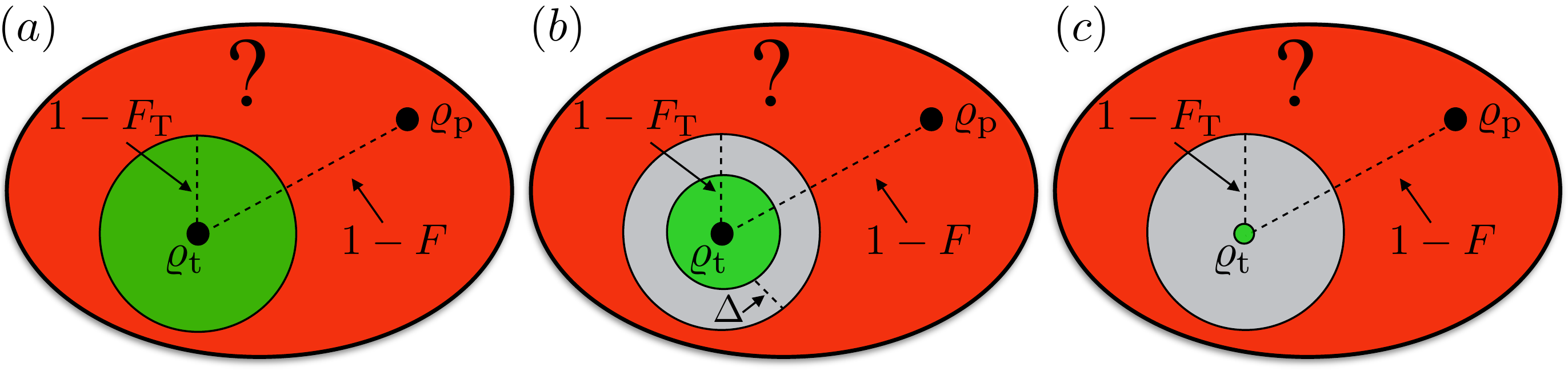}
\caption{{\bf Different certification paradigms}. ($a$) Naive approach: To certify an untrusted experimental preparation $\rhop$ of the target state ${\rhot}$, a certifier Arthur would like to run a statistical test that, for all $\rhop$, decides whether the fidelity $F$ between $\rhop$ and ${\rhot}$ is greater or equal than a pre-specified threshold $\FT<1$ (green region, accept), or smaller than it (red region, reject). However, due to the preparations at the boundary of the two regions and experimental uncertainties, a test able to make such a decision does not exist. ($b$) The ideal scenario: A more realistic certification notion is to ask that the test rejects every $\rhop$ for which $F<\FT$ (red region) and accepts every $\rhop$ for which $F\geq \FT+\Delta$ (green region), for some given $\Delta<1-\FT$. Here, a buffer region of width $\Delta$ (in grey) is introduced within which the behaviour of the test can be arbitrary, but, in return, the certification is now feasible. This type of certification is 
thus 
robust against experimental infidelities as large as $1-\FT-\Delta$. ($c$) The practical scenario: Finally, the least one can demand is that the test rejects every $\rhop$ for which $F<\FT$ (red region) and accepts at least $\rhot$ (green point). 
The former condition is sometimes called \emph{soundness} and the latter one \emph{completeness}.
Here, no acceptance is guaranteed for any $\rhop$ with $F\geq \FT$ (grey region) other than $\rhot$ itself, but any $\rhop$ accepted by the test necessarily features $F\geq \FT$. This certification notion is not robust against state deviations, but it can be more practical. In addition, in practice, the resulting tests succeed also in accepting many $\rhop\neq\rhot$ for which $F\geq \FT$.}
\label{fig:1}
\end{figure*}
Here, we introduce an experimentally friendly technique for the direct certification of continuous-variable
state preparations without estimating the prepared state itself.
First, we discuss intuitively and define rigorously reliable quantum-state certification tests. 
We do this for two notions of certification, differing in that in one of them robustness against preparation errors is mandatory. 
Then, we present a certification test, based on single-mode homodyne detection, for arbitrary $m$-mode pure Gaussian states, non-Gaussian states resulting from Gaussian unitary operators acting on Fock-basis states with $n$ photons, and states prepared by post-selecting states in either of the two classes with measurements on $a<m$ ancillary modes in arbitrary local bases. 
This covers, for instance, Gaussian quantum simulations such as those of refs.~\onlinecite{Chen13, Yokoyama13} as well as the non-Gaussian ones of refs.~\onlinecite{AW12,Yao12,Huang12, E1,E2,E3,E4,Peruzzo10,Crespi13}.
Furthermore, so-called de-Gaussified (photon-subtracted) Gaussian states~\cite{DellAnno07,Navarrete-Benlloch12,DellAnno13,Distillation} as well as all non-Gaussian states accessible to qumode-encoded qubit~\cite{KLM,KLM_Review} or finite-squeezing qumode~\cite{Menicucci06,Gu09} quantum computers also lie within the range of applicability of our method.
The protocol is efficient in $m$ and, for the cases with post-selection, in the inverse polynomial post-selection success probability, for all Gaussian states and all mentioned non-Gaussian states with constant $n$. 

With a high probability, our test rejects all experimental preparations with a fidelity with respect to the 
chosen target state lower than a desired threshold  and accepts if the preparation is sufficiently close to the target. That is, the protocol is robust against small preparation errors. 
We upper-bound the failure probability in terms of the number of experimental runs and calculate the necessary number of measurement settings.
Our method is built upon a fidelity lower bound, based on a natural extremality property, that is interesting in its own right. Finally, the experimental estimation of this bound relies on non-Gaussian state nullifiers, which we introduce on the way.

\section*{Results}
\label{sec:results}

We present our results in terms of photons propagating through optical networks, but our methods apply to any bosonic platform with equivalent dynamics. We consider a sceptic certifier, Arthur, with limited quantum capabilities, who wishes to ascertain whether an untrusted quantum prover, Merlin, presumably with more quantum capabilities, can indeed prepare certain quantum states that Arthur cannot. This mindset is reminiscent to that of quantum interactive-proof systems \cite{Aharonov,Broadbent,Kapourniotis} of computer science, but our method has the advantage that no interaction apart from the measurements of the certifier on the single-run experimental preparations from the prover is required.

In particular, we consider the situation where Merlin possesses at least a network of active single-mode squeezers and displacers as well as passive beam-splitters and phase-shifters, sufficient to efficiently implement any $m$-mode Gaussian unitary \cite{Reck,Braunstein,RMPReview,MyReview}, plus single-photon sources. Arthur's resources, in contrast, are restricted to classical computational power augmented with single-mode measurements. With that, he can characterise each of his single-mode measurement channels up to any desired constant precision. The task is for Merlin to provide him with copies of an $m$-mode pure \emph{target state} $\rhot$ of Arthur's choice. We assume that Merlin follows independent and identical state-preparation procedures on each experimental run, described by the density matrix $\rhop$. We refer to $\rhop$ as a \emph{preparation} of the target state $\rhot$. His preparation is unavoidably subject to imperfections and he might even be 
dishonest and try to trick Arthur. Thus, Arthur would like to 
run a test, with his own measurement devices, to \emph{certify} whether $\rhop$ is indeed a bona fide preparation of ${\rhot}$.

To measure how good a preparation $\rhop$ of ${\rhot}$ is, we use the fidelity between ${\rhop}$ and $\rhot$, defined as 
\begin{equation}
  \label{eq:fidelity}
  F\coloneqq F({\rhot},\rhop):=\Tr\big[(\sqrt{{\rhot}}\rhop^{\dagger}\sqrt{{\rhot}})^{1/2}\big]^2=\Tr\big[{\rhot}\rhop\big],
\end{equation}
where the last equality holds because $\rhot$ is assumed to be pure. As we see below, our measurement schemes directly estimate fidelities. However, all our results can also be adapted to the trace distance $D\coloneqq D({\rhot},\rhop)$, which can be defined via the 1-norm distance in state space as $D({\rhot},\rhop) \coloneqq \Tr[ | {\rhot}-\rhop | ]/2$. This is due to the fact that $D$ can be bounded from both sides in terms of $F$ through the well-known inequalities $1-F^2\leq D\leq\sqrt{1-F^2}$,
where the first inequality holds because ${\rhot}$ is pure. 

Let us first discuss what properties an experimental test must fulfil to qualify as a state certification protocol. Different certification paradigms are schematically represented in \fig~\ref{fig:1}. We start with the formal definition of certification in the sense of \fig~\ref{fig:1} (c). 
\begin{definition}[Quantum state certification] 
\label{def:certification}
Let $\FT<1$ be a threshold fidelity and $\alpha>0$ a maximal failure probability. A test, which takes as input a classical description of ${\rhot}$ and copies of a preparation $\rhop$ and outputs ``accept'' or ``reject'' is a \emph{certification test} for ${\rhot}$ if, with probability at least $1-\alpha$, it both rejects every $\rhop$ for which $F({\rhot},\rhop)< \FT$ and accepts $\rhop={\rhot}$. We say that any $\rhop$ accepted by such a test is a \emph{certified preparation} of ${\rhot}$.
\end{definition}

\begin{figure*}[t!]
\centering
\includegraphics{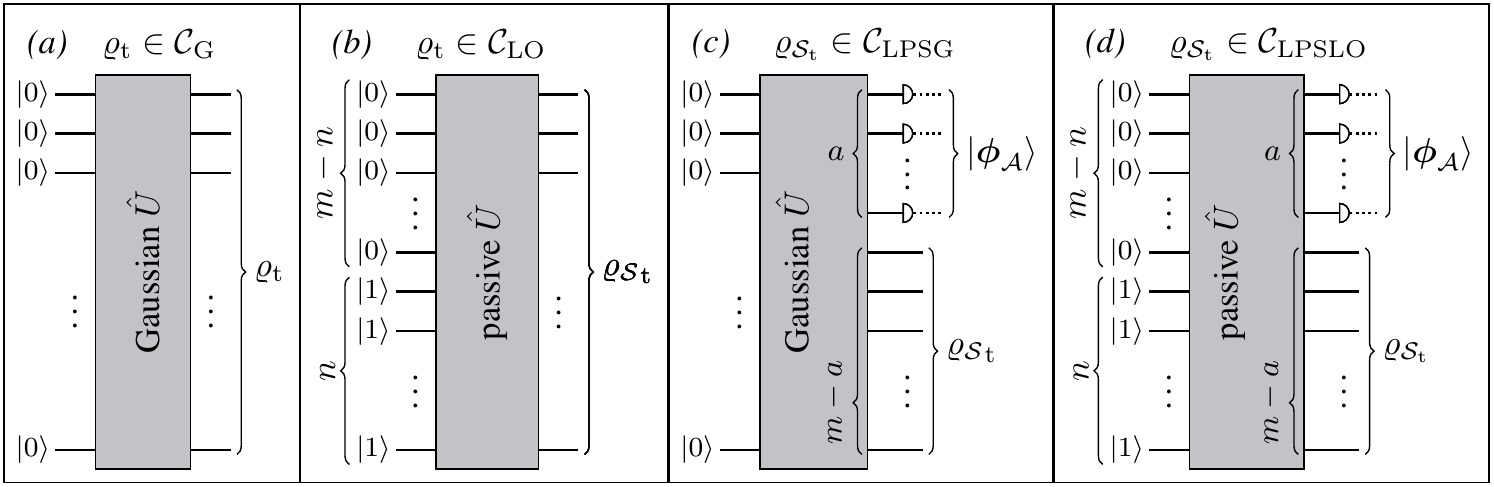}
\caption{{\bf Classes of target states}. 
($a$) $\CG$ is the class composed of all $m$-mode pure Gaussian states. These can be prepared by applying an arbitrary Gaussian unitary  $\hat{U}$ (possibly involving multi-mode squeezing) to the $m$-mode vacuum state $\ket{\0}$. ($b$) The class $\CLON$ includes all $m$-mode pure non-Gaussian states produced at the output of an arbitrary linear-optical network, which implements a passive Gaussian unitary  $\hat{U}$ (without squeezing), with the Fock-basis state $\ket{\mathbf{1}_n}$ containing one photon in each of the first $n$ modes and zero in the remaining $m-n$ ones as input. As the order of the modes is arbitrary, choosing the first 
$n$ modes as the populated ones does not constitute a restriction. ($c$) The third class,  $\CLPSG$, encompasses all $(m-a)$-mode pure non-Gaussian states obtained by projecting a subset $\A$ of $a<m$ modes of an $m$-mode pure Gaussian state $\rhot\in\CG$ onto an arbitrary pure product state $\ket{\bphi}_\A$. In practice, this is done probabilistically by measuring $\A$ in a local basis that contains $\ket{\bphi}_\A$ and post-selecting only the events in which $\ket{\bphi}_\A$ is measured. Thus, the $a$ modes in $\A$ are used as ancillas, whereas the effective system is given by the subset $\Sys$ containing the other $m-a$ modes, which carries the final target state. For concreteness, but without any loss of generality, in the plot, the ancillary modes are chosen to be the last $a$ ones.
($d$) Analogously, the class $\CLPSLON$ is that of all $(m-a)$-mode pure non-Gaussian states obtained by projecting the ancillary modes of an $m$-mode pure linear-optical network state $\rhot\in\CLON$ onto an arbitrary pure product state $\ket{\bphi}_\A$. These four classes cover the target states considered in the vast majority of quantum photonic experiments.}
\label{fig:2}
\end{figure*}

To specify the target states we need to introduce some notation.
We denote $m$-mode \emph{Fock basis states} by $\ket{\n}$, with $\n\coloneqq(n_1,n_2,\hdots ,n_m)$ being
the sequence of photon numbers $n_j\geq0$ in each mode $j\in [m]$, where the short-hand notation $[m]\coloneqq\{1,2, \hdots , m\}$ is introduced, and call $n\coloneqq\sum_{j=1}^m n_j$ the \emph{total input photon number}.
In particular, we will pay special attention to Fock basis states $\ket{\mathbf{1}_n}$ with exactly one photon in each of the first $n$ modes and the vacuum in the remaining $m-n$ ones, i.e., those for which $\n= \mathbf{1}_n$, with
\begin{equation}
  \mathbf{1}_n \coloneqq (\underbrace{1,\dots,1}_{n\ \text{times}},\underbrace{0,\dots,0}_{m-n\ \text{times}}).
\end{equation}
Note that $\ket{\mathbf{1}_0}$ is the Gaussian \emph{vacuum state} $\ket{\0}$.
We denote the \emph{photon number operator} corresponding to mode $j$ by $\hat n_j$ and the \emph{total photon number operator} by $\hat n \coloneqq\sum_{j=1}^m \hat n_j$.

In addition, for post-selected target states, we denote by $\A\coloneqq\{\A_j\}_{j\in[a]}$, where each element $\A_j\in[m]$ labels a different mode, the subset of $a\coloneqq|\A|<m$ modes on which the post-selection measurements are made. We then identify the remaining $m-a$ modes as the system subset $\Sys$, which carries the post-selected target state $\rhosyst$. The subindex $\Sys$ emphasises that $\rhosyst$ represents an $(m-a)$-mode post-selected target state and distinguishes it from $m$ mode target states without post-selection, which we denote simply as $\rhot$.
We denote by
$\ket{\bphi}_\A\coloneqq\ket{\phi_1}_{\A_1}\ket{\phi_2}_{\A_2}\hdots\ket{\phi_a}_{\A_a}$, with $\{\ket{\phi_j}_{\A_j}\}_{j\in a}$ an arbitrary pure normalised state of mode $\A_j$,  an $a$-mode product state on the modes $\A$. 
We use the short-hand notations $\bra{\bphi}_\A\rhot\ket{\bphi}_\A\coloneqq\Tr_\A\left[\rhot (\1_{\Sys}\otimes\ket{\bphi}_\A\bra{\bphi}_\A )\right]$, where $\Tr_\A$ indicates partial trace over the Fock space of $\A$, $\1_\Sys$ denotes the identity on $\Sys$, and $\mathbb{P}(\bphi_\A|\rhot)\coloneqq\Tr\left[\bra{\bphi}_\A\rhot\ket{\bphi}_\A\right]$ is the \emph{post-selection success probability}, i.e., the probability of measuring $\ket{\bphi}_\A$ in a projective measurement on $\A$.
Without loss of generality, we consider throughout only the non-trivial case $\mathbb{P}(\bphi_\A|\rhot)\neq0$.
Thus, we consider exclusively post-selected target states of the form
\begin{align}
   \rhosyst \coloneqq\frac{\bra{\bphi}_\A\rhot\ket{\bphi}_\A}{\mathbb{P}(\bphi_\A|\rhot)}.
   \label{eq:general_PS_state}
\end{align} 

With the notation introduced, we derive our results for:

\begin{compactenum}[1)]
\item Arbitrary $m$-mode pure \emph{Gaussian states}, given by the class
 \begin{align}
    \CG &
    \coloneqq \{\rhot=\hat{U}\ketbra{\0}{\0}\hat{U}^{\dagger} \colon \hat{U} \text{ Gaussian unitary} \}, \label{eq:SG} \\
\intertext{\item $m$-mode pure \emph{linear-optical network states} from the class}
    \CLON &\coloneqq \{\rhot=\hat{U}\ketbra{\mathbf{1}_n}{\mathbf{1}_n}\hat{U}^{\dagger} \colon \hat{U} \text{ passive unitary} \}  , \label{eq:SLON}
\intertext{\item arbitrary $(m-a)$-mode pure \emph{locally post-selected Gaussian states}, given by the class}
    \CLPSG &\coloneqq \left\{\rhosyst\colon\rhot\in\CG\right\}  ,\label{eq:LPSG}
\intertext{\item and $(m-a)$-mode pure \emph{locally post-selected linear-optical network states} from the class}
    \CLPSLON &\coloneqq \left\{\rhosyst\colon\rhot\in\CLON\right\}  . \label{eq:LPSLON}
\end{align}
\end{compactenum}

The class $\CG$ is crucial within the realm of ``continuous-variable'' quantum optics and quantum information processing. It encompasses, for instance, ``twin-beam" (two-mode squeezed vacuum) states under passive networks, which are used to simulate, upon coincidence detection, multi-qubit states \cite{AW12}. 
The class $\CLON$ includes all the settings sometimes referred to as ``discrete variable'' linear-optical networks.
This class covers, among others, the targets of several recent experimental simulations with on-chip integrated linear-optical networks \cite{E1,E2,E3,E4,Peruzzo10,Crespi13}.
The third class, $\CLPSG$, is the one of locally post-selected Gaussian states. This class includes crucial non-Gaussian resources for quantum information and quantum optics. For instance, when the post-selection is in the Fock basis, it encompasses de-Gaussified photon-subtracted squeezed Gaussian states~\cite{DellAnno07,Navarrete-Benlloch12,DellAnno13,Distillation}. Furthermore, if apart from Fock-basis measurements, the post-selection uses also quadrature homodyne measurements, $\CLPSG$ contains all the states accessible to finite-squeezing cluster-state qumode quantum computers~\cite{Menicucci06,Gu09}.
The last class, $\CLPSLON$, of locally post-selected linear-optical network states, covers, for the case where the post-selection is in the Fock basis and $n$ is proportional to $m$, all the states prepared  
by probabilistic schemes of the type of refs. \onlinecite{KLM,KLM_Review} for universal qumode-encoded qubit quantum computation. 
Naturally, $\CLPSLON$ also includes both photon -added or -subtracted linear-optical network states.

The basis of the our certification scheme is a technique for the estimation of the quantity
\begin{equation}
\label{eq:eq:firstineqgeneric}
\Fone\coloneqq 1 - \left\langle (\hat{n}-n)\prod_{j=1}^{n}\hat{n}_{j}\right\rangle_{\hat{U}^{\dagger}\rhop\hat{U}},
\end{equation}
with $n$ the total input photon number. 
As shown in 
the Methods section, for all target states $\rhot \in \CG \cup \CLON$, $\Fone$ is a lower bound on the fidelity $F$ and, moreover, $\Fone=F=1$ if $\rhop=\rhot$
(see also Methods and Section~\ref{sec:PS_fidelity_bound} of the Supplementary Information (SI) for analogous bounds for the post-selected target states).
This bound is a consequence of a natural extremality notion:
the smaller the expectation value 
$\big\langle (\hat{n}-n)\prod_{j=1}^{n}\hat{n}_{j}\big\rangle_{\hat{U}^{\dagger}\rhop\hat{U}}$
is, the closer are $\ket{\mathbf{1}_n}\bra{\mathbf{1}_n}$ and $\hat{U}^{\dagger}\rhop\hat{U}$ and, therefore, the closer are the preparation $\rhop$ and the target state $\rhot$.
Our test $\mathcal{T}$, summarised in Box~\ref{box:test}, yields an estimate $\Festone$ of $\Fone$. If $\Festone$ is sufficiently above the threshold $\FT$, the preparation $\rhop$ is accepted.
Otherwise it is rejected.
The estimate $\Festone$ is obtained via a measurement scheme that depends on the specific target state.
In the Gaussian case $n=0$ the measurement scheme $\MG$ can be used, while linear-optical network states with $n>0$ require the scheme $\MLON$. $\MG$ and $\MLON$ are both summarised in the Methods section and described in detail in Boxes~\ref{box:measurement_scheme} and \ref{box:measurement_scheme_1n}, respectively, in Section~\ref{sec:post_selected_Targets} in the SI.
In addition, in Section~\ref{sec:PS_cert_test} of the SI we adapt $\mathcal{T}$ to post-selected target states $\rhosyst\in\CLPSG\cup\CLPSLON$, and provide the corresponding adapted measurement schemes in Section~\ref{sec:PS_measurement_scheme} of the SI. 

\tikzbox{
  \begin{standardbox}[Certification test $\mathcal{T}$]
  \label{box:test}
  \noindent\begin{compactenum}[1)] 
  \item Arthur chooses a threshold fidelity $\FT<1$, a maximal failure probability $\alpha>0$, and an estimation error $0<\varepsilon\leq (1-\FT)/2$. 
  \item Arthur provides Merlin with the classical specification $n$, $\S$, and $\vec{x}$ of the target state $\rhot$ and requests a sufficient number of copies of it. 
  \item If $n=0$,
  Arthur measures $2 m \kappa$ two-mode correlations and $2 m$ single-mode expectation values specified by the measurement scheme $\MG$ (see the Methods section), which can be done with $m+3$ single-mode homodyne settings.
  \\
  If $n>0$,
  he measures $\landauO\big(m(4d^2+1)^{n}\big)$  multi-body correlators, each one involving  between $1$ and $2n+1$ modes, specified by the measurement scheme $\MLON$ (see the Methods section), which can be done with at most $\binom m n  2^{n+1}$ single-mode homodyne settings. 
  \item \label{itme:1.4_MAIN}
  By classical post-processing (see the Methods section), he obtains a fidelity estimate $\Festone$ such that $\Festone\in [\Fone-\varepsilon,\Fone+\varepsilon]$ with probability at least $1-\alpha$, where $\Fone$ is the lower bound to $F$ given by expression \eqref{eq:eq:firstineqgeneric}.
  \item \label{item:accept_reject_condition_MAIN}
  If $\Festone<\FT+\varepsilon$, he rejects. Otherwise, he accepts.
  \end{compactenum} 
  \end{standardbox}
}

Our theorems guarantee that the test from Box~\ref{box:test} is indeed a certification test and give a bound on the scaling of the number of samples that are needed for the test.
In order to state them we introduce some notation related to mode space descriptions of linear-optical networks first.
Any Gaussian unitary transformation $\hat U$ on Hilbert space can be represented by an affine symplectic transformation in mode space, i.e., by a \emph{symplectic matrix} $\S\in\mathrm{Sp}(2m,\mathbb{R})$ followed by a phase-space \emph{displacement} $\vec{x}\in\mathbb{R}^{2m}$  (see \eq~\eqref{eq:coordinatetransformdirect} in the Methods section), where the real
symplectic group $\mathrm{Sp}(2m,\mathbb{R})$ contains all real $2m\times2m$ matrices that preserve the canonical phase-space commutation relations \cite{RMPReview,MyReview}.
By virtue of the Euler decomposition \cite{Braunstein,RMPReview}, $\S$ can be implemented with single-mode squeezing operations and passive mode transformations.
We denote  the \emph{maximum single-mode squeezing} of $\S$ by $\smax$ and define the \emph{mode range} $d \leq m$ to be the maximal number of input modes to which each output mode is coupled (for details see Section~\ref{sec:measurement_scheme} 
of the SI).
Also, it will be useful to define
\begin{equation} 
\label{eq:kappa_def}
  \kappa\coloneqq2\min\{d^2,m\} .
\end{equation}
The displacement $\vec{x}$ can be implemented by a single-mode displacer at each mode $j\in[m]$, with amplitude $(x_{2j-1},x_{2j})$, where $x_{k}$, for $k\in[2m]$, is the $k$-th component of $\vec{x}$.
The \emph{vector $2$-norm} is denoted by $\norm{\cdot}_2$, i.e., $\norm{\vec x}_2 \coloneqq\bigl(\sum_{k=1}^{2m}x^2_{k}\bigr)^{1/2}$.

We take $\sigma_i$ to be a uniform upper bound on the variances of any product of $i$ \emph{phase space quadratures} 
in the state $\rhop$.
If $\rhop$ is Gaussian, $\sigma_{1}$ and $\sigma_{2}$ are functions of the single mode squeezing parameters of $\rhop$.
In addition, we call $\sigma_{\leq i} \coloneqq \max_{k \leq i}\{\sigma_k\}$ the \emph{maximal $i$-th variance} of $\rhop$.
Finally, we use the Landau symbol $\landauO$ to denote asymptotic upper bounds.

\begin{theorem}[Quantum certification of Gaussian states] 
\label{thm:certification_gaussian}
Let $\FT<1$ be a threshold fidelity, $\alpha>0$ a maximal failure probability, and $0<\varepsilon\leq(1-\FT)/2$ an estimation error. Let $\rhot\in \CG$ have maximum single-mode squeezing $\smax\geq1$, mode range $d\leq m$, and displacement $\vec{x}$. 
Test $\mathcal{T}$ from Box~\ref{box:test} is a certification test for $\rhot$ and requires at most 
\begin{equation}
\label{eq:numb_copies_bound_Gaussian}
\landauO\!\left( 
 \frac{\smax^4\left(2\sigma_1^2 \norm{\vec x}_2^2 m^3 +  \sigma_2^2 \kappa^3 m^4\right)}
  {\varepsilon^2  \ln(1/(1-\alpha))} 
 \right) 
\end{equation} 
copies of a preparation $\rhop$ with first and second variance bounds $\sigma_{1}>0$ and $\sigma_{2}>0$, respectively.
\end{theorem}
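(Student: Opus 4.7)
The plan is to verify both clauses of Definition~\ref{def:certification} for $\mathcal{T}$ and then to bound the number of copies required for the estimate $\Festone$ of $\Fone$ to achieve accuracy $\varepsilon$ with confidence $1-\alpha$. The certification property follows directly from the two ingredients (i)~the extremality bound $\Fone\leq F$ for $\rhot\in\CG$, with equality at $\rhop=\rhot$ (stated after \eqref{eq:eq:firstineqgeneric}), and (ii)~$|\Festone-\Fone|\leq\varepsilon$ with probability at least $1-\alpha$. For soundness, if $F<\FT$ then $\Fone\leq F<\FT$, so with high probability $\Festone\leq\Fone+\varepsilon<\FT+\varepsilon$ and step~\ref{item:accept_reject_condition_MAIN} rejects; for completeness, if $\rhop=\rhot$ then $\Fone=1$, and the choice $\varepsilon\leq(1-\FT)/2$ gives $\Festone\geq 1-\varepsilon\geq\FT+\varepsilon$, so the test accepts. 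The whole task thus reduces to analysing the estimator $\Festone$.

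To this end, I would expand $\Fone$ into a sum of expectation values of low-order polynomials in the phase-space quadratures, whose variances are controlled by $\sigma_1$ and $\sigma_2$. For $n=0$, $\Fone=1-\langle \hat U\ad\hat n\,\hat U\rangle_\rhop$, and $\hat n$ is quadratic in the canonical quadratures. The Heisenberg-picture action of $\hat U\ad$ on each quadrature is the affine symplectic map encoded by $\S$ and $\vec x$ (cf.~\eqref{eq:coordinatetransformdirect}), so $\hat U\ad\hat n\,\hat U$ is again quadratic, with each output-mode quadrature being a linear combination of at most $d$ input-mode quadratures plus a displacement by a component of $\vec x$. This yields $\landauO(m\kappa)$ two-mode quadratic correlators with coefficients of magnitude $\landauO(\smax^2)$, plus $\landauO(m)$ linear terms whose coefficients carry components of $\vec x$, matching exactly the $2m\kappa$ two-mode and $2m$ single-mode expectation values that $\MG$ packs into only $m+3$ single-mode homodyne settings.

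Each of these expectation values is estimated by a sample mean over $N$ homodyne runs, and a Chebyshev-type tail bound gives a per-correlator error of order $\sigma_{\leq 2}/\sqrt{N\ln(1/(1-\alpha))}$ once $\alpha$ is allocated uniformly across correlators by a union bound. Weighting each estimate by its coefficient in $\hat U\ad\hat n\,\hat U$, combining the errors via the triangle inequality, and solving for $N$ so that the total error in $\Festone$ stays below $\varepsilon$ yields two additive contributions to the copy count: $\smax^4\sigma_1^2\norm{\vec x}_2^2\,m^3$ from the displacement-dressed linear terms and $\smax^4\sigma_2^2\kappa^3\,m^4$ from the quadratic two-mode terms, which is exactly \eqref{eq:numb_copies_bound_Gaussian}. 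I expect the main technical obstacle to be the coefficient bookkeeping under the symplectic change of frame: showing that the squeezing inflates the relevant operator norms by at most $\landauO(\smax^2)$ per quadrature so that $\smax^4$ suffices overall, that a single homodyne configuration can resolve many correlators in parallel (needed to keep the total number of settings at $m+3$ rather than polynomial in $m$), and that the variance bounds $\sigma_1,\sigma_2$ on $\rhop$ are enough for Chebyshev-type concentration despite the unbounded support of the quadratures.
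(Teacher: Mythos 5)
Your proposal is correct and follows essentially the same route as the paper: the reduction of soundness and completeness to the extremality bound plus an $\varepsilon$-accurate estimate, the expansion of $\Fzero$ into $\landauO(m\kappa)$ quadratic and $\landauO(m)$ displacement-dressed linear correlators via the symplectic frame change (the paper's Lemmas~\ref{lem:sparsity} and \ref{lem:stability}, with $\norm{\D^{-2}}_\infty=\smax^2$ controlling the coefficients), and Chebyshev concentration per correlator combined into the stated sample complexity (Lemmas~\ref{lem:general_estimation} and \ref{lem:fidelity_bound_estimation}). The only cosmetic difference is that you allocate the failure probability by a union bound whereas the paper multiplies the independent per-correlator success probabilities, which yields the same asymptotic scaling.
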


\begin{theorem}[Quantum certification of linear-optical network states] \label{thm:certification_1n}
Let $\FT<1$ be a threshold fidelity, $\alpha>0$ a maximal failure probability, and $0<\varepsilon\leq(1-\FT)/2$ an estimation error. Let $\rhot\in\CLON$ have mode range $d\leq m$. Test $\mathcal{T}$ from Box~\ref{box:test} is a certification test for $\rhot$ and requires at most  
\begin{equation}
\label{eq:numb_copies_bound}
\landauO\!\left(
 \frac{\sigma_{\leq 2 (n+1)}^2 m^4 (\lambda\, d^{6}\, n\, m)^{n}}{\varepsilon^2 \ln(1/(1-\alpha))}
 \right)
 \end{equation}
copies of a preparation $\rhop$ with maximal $2(n+1)$-th variance $\sigma_{\leq2(n+1)}$, where $\lambda > 0$ is an absolute constant.
\end{theorem}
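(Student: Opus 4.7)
The plan is to split the proof into two parts mirroring the two claims of the theorem: (i) that the test $\mathcal{T}$ of Box~\ref{box:test} is a certification test for $\rhot$ in the sense of Definition~\ref{def:certification}, and (ii) that it does so with at most the stated number of copies. Both parts lean on two ingredients already established in the excerpt: the fidelity lower bound $\Fone\leq F$ together with the tight case $\Fone=1$ when $\rhop=\rhot$, and the estimator guarantee of step~\ref{itme:1.4_MAIN}, namely $|\Festone-\Fone|\leq\varepsilon$ with probability at least $1-\alpha$.

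The certification claim will reduce to checking soundness and completeness on the good event $G\coloneqq\{|\Festone-\Fone|\leq\varepsilon\}$, which by design has probability at least $1-\alpha$. For soundness, assume $F<\FT$; then $\Fone\leq F<\FT$, so on $G$ the estimate satisfies $\Festone\leq\Fone+\varepsilon<\FT+\varepsilon$ and step~\ref{item:accept_reject_condition_MAIN} rejects. For completeness with $\rhop=\rhot$, one has $\Fone=1$, so on $G$ one obtains $\Festone\geq 1-\varepsilon$, which together with the hypothesis $\varepsilon\leq(1-\FT)/2$ upgrades to $\Festone\geq\FT+\varepsilon$ and triggers acceptance. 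The remainder of the proof is therefore the analytical work needed to realise the estimator of step~\ref{itme:1.4_MAIN} within the copy count advertised by the theorem.

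For the sample-complexity bound I will expand the target quantity $\langle(\hat n-n)\prod_{j=1}^n\hat n_j\rangle_{\hat U^\dagger\rhop\hat U}$ as a linear combination of expectation values of monomials in the phase-space quadratures. Because $\hat U$ is a passive Gaussian unitary of mode range $d$, each rotated $\hat n_j$ is a quadratic form supported on $\landauO(d^2)$ quadratures, and the full observable becomes a sum of $\landauO\!\big(m(4d^2+1)^n\big)$ monomials of degree at most $2(n+1)$, matching the correlator count of $\MLON$ stated in Box~\ref{box:test}. Each such monomial is realised by the homodyne linear-combination trick of $\MLON$ as the empirical mean of a product of at most $2n+1$ single-mode outcomes, whose second moment is uniformly bounded by $\sigma_{\leq 2(n+1)}^2$. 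Allocating a per-correlator additive accuracy $\varepsilon'=\varepsilon/W$, where $W$ is the $\ell_1$-weight of the coefficients in the expansion, and invoking Chebyshev's inequality together with a union bound over the $\landauO\!\big(m(4d^2+1)^n\big)$ summands will deliver the claimed sample count; the $1/\ln(1/(1-\alpha))$ factor arises from the Chebyshev tail estimate after the standard reparametrisation of the confidence level.

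The main obstacle will be the quantitative tracking of the coefficient weight $W$ and of the resulting $(\lambda d^6 n m)^n$ scaling. Two items require care. First, the combinatorial expansion of $\prod_{j=1}^n\hat U^\dagger\hat n_j\hat U$ must be shown to produce at most $\landauO(d^{2n})$ nonvanishing monomials with bounded coefficients, avoiding a looser $\landauO(m^n)$ count; this follows from the locality of each rotated $\hat n_j$ on the $\landauO(d^2)$ modes within range $d$ of $j$. Second, one must uniformly bound the second moments of products of up to $2(n+1)$ quadratures in the unknown preparation $\rhop$ by $\sigma_{\leq 2(n+1)}^2$, which I will handle by an iterated Cauchy--Schwarz reduction to the single-mode $(2(n+1))$-th moments implicit in the definition of $\sigma_{\leq 2(n+1)}$. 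With both estimates in place, substituting $W$ and the per-correlator variance into the Chebyshev-plus-union-bound analysis yields the stated copy count.
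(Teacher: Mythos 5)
Your proposal is correct and takes essentially the same approach as the paper: the certification property is reduced to the high-probability event $|\Festone-\Fone|\leq\varepsilon$ exactly as in the paper's proofs of Theorems~\ref{thm:certification_gaussian} and~\ref{thm:certification_1n}, and the sample count comes from the same three ingredients --- the sparsity count $\landauO\big(m(4d^2+1)^n\big)$ of relevant correlators (Lemma~\ref{lem:sparsity_1n}), an error-propagation bound from per-correlator errors to the fidelity estimate (Lemma~\ref{lem:stability_1n}, done there with H\"older and norm inequalities where you track an $\ell_1$ coefficient weight $W$), and Chebyshev combined with a simultaneous-success argument (Lemma~\ref{lem:general_estimation}). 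The only cosmetic differences are that the paper multiplies independent per-correlator success probabilities rather than taking a union bound (both yield the $\landauO(1/\ln(1/(1-\alpha)))$ factor for $\alpha$ bounded away from $1$), and that the paper's $\sigma_{\leq 2(n+1)}$ is by definition already a variance bound on products of quadratures, so your iterated Cauchy--Schwarz reduction is not needed.
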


The proofs of all our theorems are provided in the SI. The treatments of the classes $\CLPSG$ or $\CLPSLON$ follow as corollaries of Theorems \ref{thm:certification_gaussian} and \ref{thm:certification_1n}, respectively, and are also provided in the SI (see Section~\ref{sec:PS_Corollary} there). Expressions \eqref{eq:numb_copies_bound_Gaussian} and \eqref{eq:numb_copies_bound} are highly simplified upper bounds on the total number of copies of $\rhop$ that $\mathcal{T}$ requires. For more precise expressions see Lemmas~\ref{lem:fidelity_bound_estimation} and~\ref{lem:fidelity_bound_estimation_1n} of the SI.
Note that neither of the two theorems requires any energy cut-off or phase-space truncation. While our bound~\eqref{eq:numb_copies_bound} is inefficient in $n$, both for the Gaussian and linear-optical cases, the number of copies of $\rhop$ scales polynomially with all other parameters, in particular with $m$. Thus, arbitrary $m$-mode target states from the classes $\CG$ and $\CLON$ with constant $n$, are certified by $\mathcal{T}$ efficiently.

Interestingly, since states in $\CLON$ in general display negative Wigner functions, sampling from their measurement probability distributions cannot be efficiently done by the available classical sampling methods \cite{Mari, Veitch}. 
Furthermore, for Fock-state measurements, these distributions define BosonSampling, for which hardness results exist\cite{AA13} for $m$ asymptotically lower-bounded by $n^5$.

Also, note that there are no restrictions on $\rhop$ except that, in practice, to apply the theorems, one needs bounds on $\sigma_{1}$, $\sigma_{2}$, and $\sigma_{\leq2(n+1)}$. 
These variances are properties of $\rhop$ and are therefore a priori unknown to Arthur. 
However, he can reasonably estimate them from his measurements. 
Note that, for random variables that can take any real value, assuming that the variances are bounded is a fundamental and unavoidable assumption to make estimations from samples; and it is one that can be contrasted with the measurement results.

To end up with, we consider certification in the robust sense of \fig~\ref{fig:1} (b):
\begin{definition}[Robust quantum state certification] 
\label{def:robust_certification}
Let $\FT<1$ be a threshold fidelity, $\alpha>0$ a maximal failure probability, and $\Delta<1-\FT$ a fidelity gap. A test, which takes as input a classical description of the target state ${\rhot}$ and copies of a preparation $\rhop$ and outputs ``accept'' or ``reject'' is a \emph{robust certification test} for ${\rhot}$ if, with probability at least $1-\alpha$, it both rejects every $\rhop$ for which $F({\rhot},\rhop)< \FT$ and accepts every $\rhop$ for which $F(\rhot,\rhop)\geq \FT+\Delta$. We say that any $\rhop$ accepted by such a test is a \emph{certified preparation} of ${\rhot}$. 
\end{definition}
\noindent This definition is more stringent than Definition~\ref{def:certification} in that it guarantees that preparations sufficiently close to ${\rhot}$ are necessarily accepted, rendering the certification robust against state deviations with infidelities as large as $1-(\FT+\Delta)$. We show below that our test $\mathcal{T}$ from Box~\ref{box:test} is actually a robust certification test.

To this end, we first write $\rhop$ as
\begin{equation}
\label{eq:rhop_in_terms_rhot}
\rhop=F\rhot+(1-F)\rhot^{\perp},
\end{equation}
where $\rhot^{\perp}$ is an operator orthogonal to $\rhot$ with respect to the Hilbert-Schmidt inner product, i.e., such that $\Tr[{\rhot}\,\rhot^{\perp}]=0$.
As $\rhot$ is assumed to be pure, it follows immediately that $\rhot^{\perp}$ is actually a state.
In fact, multiplying by $\rhot$ and taking the trace on both sides of \eq~\eqref{eq:rhop_in_terms_rhot}, one readily sees that the decomposition \eqref{eq:rhop_in_terms_rhot} is just another way to express the fidelity \eqref{eq:fidelity}. We define the \emph{photon mismatch} $\tilde{n}^{\perp}$ between $\rhot$ and $\rhop$ as 
\begin{equation}
\label{eq:def_phononmismatch}
\ntperp\coloneqq
\langle (\hat{n}-n)\prod_{j=1}^{n}\hat{n}_{j}\bigr \rangle_{\hat{U}^{\dagger}\rhot^{\perp}\hat{U}} .
\end{equation}
The photon mismatch gives the expectation value that Arthur would obtain if he had access to $\rhot^{\perp}$, applied the inverse of Merlin's network to it, and then measured $(\hat{n}-n) \prod_{j=1}^{n}\hat{n}_{j}$. For the ideal case $\rhop=\rhot$, it clearly holds that $\ntperp=0$.

\begin{theorem}[Robust quantum certification] 
\label{thm:strong_certification}
Under the same conditions as in Theorems~\ref{thm:certification_gaussian} and \ref{thm:certification_1n}, test $\mathcal{T}$ from Box~\ref{box:test} is a robust certification test with fidelity gap 
\begin{equation}
\label{eq:delta}
\Delta\coloneqq\max\left\{\frac{2\varepsilon+(1-\FT)(\tilde{n}^{\perp}-1)}{\tilde{n}^{\perp}},2\varepsilon\right\},
\end{equation}
where $\ntperp$ is the photon mismatch. 
\end{theorem}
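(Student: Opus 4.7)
The plan is to reduce the robust certification claim to a quantitative relation between the fidelity $F$ and the quantity $\Fone$ that the test actually estimates. Theorems~\ref{thm:certification_gaussian} and~\ref{thm:certification_1n} already guarantee soundness, i.e., that with probability at least $1-\alpha$ the test rejects any $\rhop$ with $F<\FT$. What remains is the completeness-with-gap statement: every preparation with $F\geq \FT+\Delta$ must be accepted with probability at least $1-\alpha$. The acceptance rule in step~\ref{item:accept_reject_condition_MAIN} of Box~\ref{box:test} together with the estimation guarantee $\Festone \geq \Fone - \varepsilon$ (probability $\geq 1-\alpha$) from step~\ref{itme:1.4_MAIN} shows that it is sufficient to establish $\Fone\geq \FT+2\varepsilon$ whenever $F\geq \FT+\Delta$.

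First I would decompose the preparation as $\rhop=F\,\rhot+(1-F)\,\rhot^{\perp}$ as in Eq.~\eqref{eq:rhop_in_terms_rhot} and plug this into the definition~\eqref{eq:eq:firstineqgeneric} of $\Fone$. The key observation is that, for both target classes $\CG$ ($n=0$) and $\CLON$ ($n>0$), the circuit $\hat U$ in the expectation value is chosen so that $\hat U^{\dagger}\rhot\hat U = \ketbra{\mathbf{1}_n}{\mathbf{1}_n}$. Acting on this Fock state, $\hat n_j\ket{\mathbf{1}_n}=\ket{\mathbf{1}_n}$ for $j\in[n]$ and $\hat n\ket{\mathbf{1}_n}=n\ket{\mathbf{1}_n}$, so the operator $(\hat n - n)\prod_{j=1}^n \hat n_j$ annihilates $\ket{\mathbf{1}_n}$ and contributes nothing. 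Linearity of the expectation value, together with the definition~\eqref{eq:def_phononmismatch} of the photon mismatch, then yields
\begin{equation}
\Fone = 1-(1-F)\,\ntperp .
\end{equation}
Note that because $\Fone\leq F\leq1$ is guaranteed by the fidelity bound of the Methods section, this identity also forces $\ntperp\geq1$ whenever $F<1$, so $\ntperp$ is well-controlled in the regime relevant for the argument.

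Substituting $F\geq \FT+\Delta$ into this identity and imposing $\Fone\geq \FT+2\varepsilon$ reduces, after rearrangement, to the requirement
\begin{equation}
\Delta \;\geq\; \frac{2\varepsilon+(1-\FT)(\ntperp-1)}{\ntperp}.
\end{equation}
Taking the maximum with $2\varepsilon$ handles the degenerate regime in which $\ntperp$ is very small or undefined (notably the ideal case $\rhop=\rhot$, where $\rhot^{\perp}$ need not be a state), ensuring that the assumption $\varepsilon\leq(1-\FT)/2$ already built into $\mathcal{T}$ translates into a bona fide fidelity gap $\Delta<1-\FT$. Combined with the probabilistic estimate $\Festone\in[\Fone-\varepsilon,\Fone+\varepsilon]$ supplied by Theorems~\ref{thm:certification_gaussian} and~\ref{thm:certification_1n}, this yields precisely Definition~\ref{def:robust_certification} with the advertised $\Delta$.

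The main hurdle I anticipate is not the final algebra but the clean linearisation $\Fone=1-(1-F)\,\ntperp$. It hinges entirely on the nullifier identity $(\hat n - n)\prod_{j=1}^n \hat n_j\ket{\mathbf{1}_n}=0$ imported from the Methods section, which is what makes $\rhot$ drop out of the expectation value; once that is in place, Theorem~\ref{thm:strong_certification} amounts to bookkeeping. A mild subtlety is whether to express the gap through $\ntperp$, which depends on the unknown $\rhop$, or to further bound it by something preparation-independent; keeping the explicit form~\eqref{eq:delta} preserves the tightness of the guarantee and makes it transparent how the gap widens as the orthogonal error component drifts further from the photon-number sector of $\rhot$.
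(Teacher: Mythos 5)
Your proposal is correct and follows essentially the same route as the paper's proof: soundness is inherited from Theorems~\ref{thm:certification_gaussian} and \ref{thm:certification_1n}, and the completeness-with-gap part rests on the decomposition \eqref{eq:rhop_in_terms_rhot}, the identity $\Fone = 1-(1-F)\,\ntperp$ obtained because $(\hat n - n)\prod_{j=1}^n\hat n_j$ annihilates $\ket{\mathbf{1}_n}$, and the same rearrangement showing $\Fone\geq\FT+2\varepsilon$, hence acceptance via the estimate $\Festone\geq\Fone-\varepsilon$. Your extra side remark that $\Fone\leq F$ forces $\ntperp\geq 1$ whenever $F<1$ is not load-bearing and does not affect the argument.
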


\noindent As expected, the gap cannot be smaller than twice the estimation error for any photon mismatch. Notice also that in the limit $\ntperp\to\infty$ it holds that $\Delta\to1-\FT$, so that the certification becomes less robust with increasing $\ntperp$. 
As $\ntperp$ decreases from infinity to one, the gap decreases to its minimal value $\Delta=2\varepsilon$, where it remains for all $0\leq\ntperp\leq1$. 
We emphasise that $\ntperp$ depends on $\rhot^\perp$. Thus it cannot be directly estimated from measurements on $\rhop$ alone. However, for any $\ntperp<\infty$, Theorem~\ref{thm:strong_certification} guarantees the existence of an entire region of states around $\rhot$ that are rightfully accepted.
Furthermore, in the experimentally relevant situations, $\ntperp$ is expected to be small.  In this case, Theorem~\ref{thm:strong_certification} 
provides a lower bound on the size of the region of accepted states.

Finally, a statement equivalent to Theorem \ref{thm:strong_certification} for target states $\rhosyst\in\CLPSG\cup\CLPSLON$ follows as an immediate corollary of it and is presented in Section~\ref{sec:PS_Corollary_strong} in the SI.

\section*{Discussion}
\label{Conclusion}
Large-scale photonic quantum technologies promise important scientific advances and technological applications.
So far, considerably more effort has been put into their realisation than into the verification of their correct functioning and reliability. This imposes  a  serious obstacle for further experimental advance, specifically in the light of the speed at which progress towards many-mode architectures takes place.
Here, we  have presented a practical reliable certification tool 
for a broad family of multi-mode bosonic quantum technologies.

We have proven theorems that upper-bound the number of experimental runs sufficient for our protocol to be a certification test. Our theorems provide large-deviation bounds from a simple extremality-based fidelity lower-bound that is interesting in its own right. 
Importantly, our theorems hold only for statistical errors, but the stability analyses on which they rely (see Lemmas~\ref{lem:stability} and \ref{lem:stability_1n} in the SI) holds regardless of the nature of the errors. As a matter of fact, in Section~\ref{sec:systematic_errors} 
in the SI, we show that our fidelity estimates are robust also against systematic errors.

From a more practical viewpoint, our test allows one to certify the state preparations of most current optical experiments, in both the ``continuous-variable'' and the ``discrete-variable'' setting. This is achieved under the minimal possible assumptions: namely, only that the variances of the measurement outcomes are finite. Thus, the certification is as unconditional as the fundamental laws of 
statistics allow.
In particular, no assumption on the type of quantum noise is made.
Despite the rigorous bounds on the estimation errors and failure probabilities, our methods are both experimentally friendly and resource efficient.

Notably, our test can for instance be applied to the 
certification of optical circuits of the type used in BosonSampling:
There, $m$-mode Fock-basis states of $n$ photons are subjected to a linear-optical network described by a random unitary $\hat U$ drawn from the Haar measure \cite{AA13} and, subsequently, each output mode is measured in the Fock basis. 
While the question of the certification of the classical outcomes of such samplers without assumptions on the device is still largely open \cite{BS1,BS2},  with the methods described here
the pre-measurement non-Gaussian quantum outputs of BosonSampling devices \cite{E1,E2,E3,E4} can be certified reliably and, for constant $n$, even
efficiently. In this sense, this work goes significantly beyond previously proposed schemes to rule out particular cheating strategies by the prover \cite{BS2,SpagnoloValidation,CarolanValidation,Tichy}. 
Furthermore, a variety of non-Gaussian states paradigmatic in quantum optics and quantum information    are also covered by our protocol (see Section~\ref{sec:post_selected_Targets} in the SI for details). These include, for instance, de-Gaussified photon-subtracted multi-mode Gaussian states~\cite{DellAnno07,Navarrete-Benlloch12,DellAnno13,Distillation},  
multi-mode squeezed Gaussian states post-selected through photon-number or quadrature measurements, as in finite-squeezing cluster-state qumode quantum computers~\cite{Menicucci06,Gu09}, and linear-optical network outputs post-selected though photon-number measurements, ranging from photon -added or -subtracted linear-optical network states to all the states preparable with Knill-Laflamme-Milburn-like schemes~\cite{KLM,KLM_Review}. 
For all such states, our test is efficient in the inverse post-selection success probability $1/\mathbb{P}(\bphi_\A|\rhot)$.

The present method constitutes a step forward in the field of photonic quantum certification, with potential implications on the certification of other many-body quantum-information technologies.
Apart from that of BosonSamplers and optical schemes with post-selection, the efficient and reliable certification of large-scale photonic networks as those used, for instance, for multi-mode Gaussian quantum-information processing \cite{Chen13, Yokoyama13}, non-Gaussian Anderson-localisation simulations \cite{Peruzzo10,Crespi13}, and quantum metrology \cite{Giovannetti}, with a constant number of input photons, is now within reach.

\section*{Methods}
\textbf{Fidelity lower bound}. In this section, we formalise the extremality notion and derive a lower bound on the fidelity $F$. 
All target states are of the form 
\begin{equation}
\label{eq:general_target}
\rhot = \hat{U}\ketbra{\vec{n}}{\vec{n}}\hat{U}^{\dagger},
\end{equation}
where $\hat{U}$ is an arbitrary Gaussian unitary and $\ket{\vec{n}}$ an arbitrary Fock-basis state. First, we derive a general fidelity lower bound and then consider the linear-optical $\rhot \in \CLON$ and Gaussian  $\rhot \in \CG$ cases separately. 
Analogous bounds for the post-selected target states are provided further below in the Measurement Scheme and Section~\ref{sec:PS_fidelity_bound} of the SI. 

We start recalling that 
\begin{equation}
\label{number}
  \ket{\n} = \prod_{j=1}^m \frac{1}{\sqrt{n_j!}}  (\hat{a}\ad_j)^{n_j}  \ket{\mathbf{0}},
\end{equation}
where $a\ad_j$ is the creation operator of the $j$-th mode. Its Hermitian conjugated $\hat{a}_{j}$ is the corresponding annihilation operator. These operators satisfy $[\hat{a}_j,\hat{a}^{\dagger}_{j'}]=\delta_{j,j'}$, where $\delta_{j,j'}$ denotes the \emph{Kronecker delta} of $j$ and $j'$, and $\hat{n}_j=\hat{a}^{\dagger}_{j}\hat{a}_{j}$, for all $j,j'\in [m]$. 
The fidelity \eqref{eq:fidelity} can be written as $F= F(\ket{\n}\bra{\n},\tilde{\varrho}_{\mathrm{p}})$, where $\trhop \coloneqq \hat{U}^{\dagger}\rhop \hat{U}$ is the Heisenberg representation of $\varrho_{\mathrm{p}}$ with respect to $\hat{U}^{\dagger}$. 
With this, \eq~\eqref{number}, and the cyclicality property of the trace, we obtain that
\begin{equation}
 \label{firstFidelityNG}
 F=\Tr\left[\ketbra{\vec 0}{\vec 0} \trhopn \right]
 = F(\ketbra{\vec 0}{\vec 0}, \trhopn) ,
\end{equation}
where
\begin{equation} \label{eq:trhopn_def}
  \trhopn
  \coloneqq
  \prod_{j'=1}^m \frac{1}{\sqrt{n_{j'}!}}  (\hat{a}_{j'})^{n_{j'}} \tilde{\varrho}_{\mathrm{p}}\prod_{j=1}^m \frac{1}{\sqrt{n_j!}}  (\hat{a}\ad_j)^{n_j} .
\end{equation}

To lower-bound $F(\ketbra{\vec 0}{\vec 0}, \trhopn)$,
we consider the average total photon-number 
$\langle \hat{n}\rangle_{\trhopn}
\coloneqq 
\Tr[\hat{n}\trhopn]$ of $\trhopn$. 
We write $\1$ for the \emph{identity operator}.
From the facts $\1-\ketbra{\mathbf{0}}{\mathbf{0}}\leq\hat{n}$ and $\trhopn\geq0$, it follows that
\begin{align}
\label{eq:firstineq}
\langle \hat{n}\rangle_{\trhopn}&=\Tr\left[\sum_{\n} n \ketbra{\n}{\n}\trhopn\right]\nonumber \\
&\geq\Tr\left[(\1-\ketbra{\mathbf{0}}{\mathbf{0}})\trhopn\right]\nonumber \\
&=1-F 
\end{align}
and hence,
\begin{equation}
\label{eq:Fn_def}
F\geq \Fn \coloneqq1-\langle \hat{n}\rangle_{\trhopn}
.
\end{equation}
This bound justifies the  natural extremality intuition mentioned: The lower the average number of photons of $\trhopn$ is, the closer to the vacuum it must be and, therefore, the closer $\rhop$ to $\rhot$. Notice that, for $\rhop=\rhot$, the inequality in \eq~\eqref{eq:firstineq} becomes an equality and therefore bound \eqref{eq:Fn_def} is saturated, as announced earlier.

Next, we define the operator valued Pochhammer-Symbol
\begin{equation}
 \label{eq:pochhammer}
 p_{t}(\hat{n}_j) \coloneqq \hat{n}_j  (\hat{n}_j-1)  (\hat{n}_j-2)  \cdots  (\hat{n}_j-t) ,
\end{equation}
for any integer $t\geq 0$, and $p_{-1}(x) \coloneqq \1$. In Section~\ref{sec:proof_of_eq:general_fidelity_bound} 
in the SI we show that
\begin{subequations}
\label{eq:adagger}
\begin{align}
 \label{eq:adaggera}
 (\hat{a}\ad_j)^{n_j}\hat{n}_j(\hat{a}_j)^{n_j} &= p_{n_j}(\hat{n}_j), \\
\intertext{and}
 \label{eq:adaggerb}
 (\hat{a}\ad_j)^{n_j}(\hat{a}_j)^{n_j} &= p_{n_j-1}(\hat{n}_j).
\end{align}
\end{subequations}
Inserting \eq~\eqref{eq:trhopn_def} into \eq~\eqref{eq:Fn_def}, using the cyclicity property of the trace, grouping the operators of each mode together, using \eqs~\eqref{eq:adagger} and that $p_{t}(\hat{n}_j)=p_{t-1}(\hat{n}_j)\, (\hat{n}_j -t)$, we obtain the general fidelity lower bound
\begin{equation}
  \label{eq:general_fidelity_bound}
  F\geq \Fn = 1-\frac{1}{\n!}\bigg\langle (\hat n - n)  \prod_{j=1}^m p_{n_j-1}(\hat n_j)\bigg\rangle_{\tilde{\varrho}_{\text{p}}} ,
\end{equation}
where $\n!\coloneqq n_1! n_2! \hdots n_m!$. In order to specialise to the linear-optical case $\rhot \in \CLON$, we simply take $\vec{n} = \mathbf{1}_n$, i.e., $n_j=1$ for all $j\in[n]$ and $n_j=0$ otherwise. With this, $\Fn$ in \eq~\eqref{eq:general_fidelity_bound} simplifies to precisely the bound $\Fone$ in \eq~\eqref{eq:eq:firstineqgeneric}. Finally, to restrict it to the Gaussian case $\rhot \in \CG$, we take $n_j=0$ for all $j\in[m]$. This yields the particularly simple expression
\begin{equation}
\label{eq:boundfirst}
F\geq \Fzero \coloneqq1-\langle \hat{n}\rangle_{\tilde{\varrho}_{\mathrm{p}}}.
\end{equation}

Arthur does not have enough quantum capabilities to directly estimate $\langle \hat{n}\rangle_{\tilde{\varrho}_{\mathrm{p}}}$ by undoing the operation $\hat{U}$ on Merlin's outputs and then measuring $\hat{n}$ in the Fock state basis. However, we show in the next section that he can efficiently obtain $\langle \hat{n}\rangle_{\tilde{\varrho}_{\mathrm{p}}}$, as well as the expectation values in \eqs~\eqref{eq:general_fidelity_bound} and \eqref{eq:eq:firstineqgeneric}, from the results of single-mode homodyne measurements.

\textbf{Measurement scheme}. 
First, we introduce some notation.
By $\hat{q}_j$ and $\hat{p}_j$ we denote, respectively, the conjugated position and momentum \emph{phase-space quadrature operators} of the $j$-th mode in the canonical convention \cite{RMPReview,MyReview}, 
i.e., with the commutation relations $[\hat{q}_j,\hat{p}_{j'}]=\i\ \delta_{j,j'}$.
The \emph{particle number operator} of the $j$-th mode can be written in terms of the phase-space quadratures as $\hat{n}_j=\hat{q}^2_j+\hat{p}^2_j-1/2$.
In addition, it will be convenient to group all quadrature operators into a $2m$-component column vector $\hat{\r}$, with elements 
\begin{equation}
\label{eq:r_in_terms_p_q}
  \hat{r}_{2j-1}\coloneqq\hat{q}_j 
  \quad \text{and}\quad 
  \hat{r}_{2j}\coloneqq\hat{p}_j .
\end{equation}
As already mentioned, the action of $\hat{U}$ on mode space is given by a symplectic matrix $\S\in\mathrm{Sp}(2m,\mathbb{R})$ and a displacement vector $\vec{x}\in\mathbb{R}^{2m}$. More precisely, under a Gaussian unitary $\hat{U}$, $\hat{\r}$ transforms according to the affine linear map \cite{RMPReview}
\begin{equation}
  \label{eq:coordinatetransformdirect}
  \hat{\r} \mapsto \hat{U}^{\dagger}\hat{\r}\hat{U} =\S\hat{\r}+\vec{x}.
\end{equation}
Equivalently, the right-hand side of this equation defines the Heisenberg representation of $\hat{\r}$ with respect to $\hat{U}$. In addition, it will be useful to denote the Heisenberg representation of $\hat{\r}$ with respect to $\hat{U}^{\dagger}$ by $\hat{\tilde{\r}}\coloneqq \hat{U}\hat{\r}\hat{U}^{\dagger}$. 
Thanks to \eq~\eqref{eq:coordinatetransformdirect}, we can write $\hat{\tilde{\r}}$ in terms of the  symplectic matrix $\S$ and displacement vector $\vec{x}$ that define $\hat{U}$, as
\begin{equation}
\label{eq:coordinatetransform}
\hat{\tilde{\r}} =\S^{-1}(\hat{\r}-\vec{x}) .
\end{equation}
The symbols $\hat{r}^2\coloneqq\hat{\r}^T\hat{\r}$ and $\hat{\tilde{r}}^2\coloneqq\hat{\tilde{\r}}^T\hat{\tilde{\r}}$ will  represent, respectively,  the scalar products of $\hat{\r}$ and $\hat{\tilde{\r}}$ with themselves.
Also, we will use the same notation for the Heisenberg representations of each quadrature operator with respect to $\hat{U}^{\dagger}$, i.e., $\hat{\tilde{q}}_{j}\coloneqq\hat{U}^{\dagger}\hat{q}_{j}\hat{U}$ and $\hat{\tilde{p}}_{j}\coloneqq\hat{U}^{\dagger}\hat{p}_{j}\hat{U}$.

Next, for $\beta\in\{0, n, \n\}$, we express our fidelity bounds in the general form
\begin{equation}
\label{eq:bound_general}
 F^{(\beta)}=1-\left\langle \hat{N}^{(\beta)}\right\rangle_{\rhop},
\end{equation}
where $\hat{N}^{(\beta)}$ is an observable decomposed explicitly in terms of the local observables to which Arthur has access.
We start with the Gaussian case $\rhot \in \CG$. To express the bound \eqref{eq:boundfirst} as in \eq~\eqref{eq:bound_general}, we first write the total photon-number operator as
\begin{equation}
 \hat{n}=\sum_{j=1}^m\hat{n}_j=\sum_{j=1}^m(\hat{q}^2_j+\hat{p}^2_j-\frac{1}{2})=\hat{r}^2-\frac{m}{2}.
\end{equation}
This, in combination with \eq~\eqref{eq:boundfirst}, yields
\begin{align}
\label{eq:gen_nullifier_0}
\hat{N}^{(0)}\coloneqq \hat{\tilde{r}}^2-\frac{m}{2}.
\end{align}
 Note that, due to \eq~\eqref{eq:coordinatetransform}, each component of $\hat{\tilde{\r}}$ is a linear combination of at most $2m$ components of $\hat{\r}$. This implies that Arthur can obtain $\langle \hat{\tilde{r}}^2\rangle_{\rhop}$ by measuring at most $2m$ single-quadrature expectation values of the form $\langle \hat{r}_k\rangle_{\rhop}$ and $4m^2$ \emph{second moments} of the form $\Gamma^{(1)}_{k,k'}\coloneqq\langle \frac{1}{2}(\hat{r}_k\hat{r}_{k'}+\hat{r}_{k'}\hat{r}_{k})\rangle_{\rhop}$.
He can then classically efficiently combine them as dictated by $\S$ and $\vec{x}$ in \eq~\eqref{eq:coordinatetransform}. In Section~\ref{sec:DetailsmeasurementGaussian} of the SI, we give the details of this  measurement procedure, which we call $\MG$, and show that measuring $m\, \kappa$ second moments, instead of $4m^2$, is actually enough. Furthermore, in Section~\ref{sec:SettingsGaussian} of the SI, we show that only $m+3$ experimental settings suffice.

Now, proceeding in a similar fashion with the generic bound \eqref{eq:general_fidelity_bound}, we obtain
\begin{align}
\label{eq:gen_nullifier_n}
\hat{N}^{(\n)}\coloneqq\frac{1}{\n!}\Big(\hat{\tilde{r}}^2-\frac{m+2n}{2}\Big) \prod_{j=1}^m p_{n_j-1}\Big(\hat{\tilde{q}}^2_{j}+\hat{\tilde{p}}^2_{j}-\frac{1}{2}\Big).
\end{align}
Note that the observable in \eq~\eqref{eq:gen_nullifier_0} is contained as the special case $n=0$.
For target states in the class $\CLON$, $\hat{U}$ is assumed to be a passive Gaussian unitary.
Such unitaries preserve the area in phase space, i.e., if $\rhot \in \CLON$ it holds that $\hat{\tilde{r}}^2=\hat{r}^2$ (for details, see Section~\ref{sec:DetailsmeasuremenNonGaussian}
in the SI). Hence, using this and specialising to the case $\vec{n} = \mathbf{1}_n$, \eq~\eqref{eq:gen_nullifier_n} 
simplifies to 
\begin{equation}
\label{eq:gen_nullifier_one}
\hat{N}^{(n)}\coloneqq\Big(\hat{r}^2-\frac{m+2n}{2}\Big) \prod_{j=1}^n \Big(\hat{\tilde{q}}^2_{j}+\hat{\tilde{p}}^2_{j}-\frac{1}{2}\Big).
\end{equation}

Again by virtue of \eq~\eqref{eq:coordinatetransform}, Arthur can now obtain the expectation values 
of the observables in \eqs~\eqref{eq:gen_nullifier_n} and \eqref{eq:gen_nullifier_one}
by measuring \emph{$2j$-th moments} of the form $\Gamma^{(j)}_{k_1,l_1,\dots, k_j, l_j} \coloneqq\langle
\frac{1}{2^j}(\hat{r}_{k_1}\hat{r}_{l_1}+\hat{r}_{l_1}\hat{r}_{k_1})\cdots(\hat{r}_{k_j}\hat{r}_{l_j}+\hat{r}_{l_j}\hat{r}_{k_j})
\rangle_{\rhop}$ and then classically recombining them, which --- for constant $n$ --- he can do efficiently.
In Section~\ref{sec:DetailsmeasuremenNonGaussian} of the SI, we give the details of the measurement procedure to obtain $\Fone$, which we call $\MLON$. In particular, we show that, to obtain 
$\langle \hat{N}^{(n)}\rangle_{\rhop}$, estimating a total of $\landauO\big(m(4d^2+1)^{n}\big)$ $2j$-th moments, with $j\in[n+1]$, is enough.
Also, we list which moments are the relevant ones in terms of $\rhot \in \CLON$. Furthermore, in Section~\ref{sec:Settingsnon-Gaussian} of the SI, we show that only $\binom m n  2^{n+1}$ experimental settings suffice.

Finally, in the SI, we derive a bound analogous to that of \eqs~\eqref{eq:bound_general} with \eqref{eq:gen_nullifier_n} for post-selected target states $\rhosyst$. 
More precisely, we show that the fidelity $\FSys\coloneqq F(\rhosyst,\rhosysp)$ between $\rhosyst$ and an arbitrary, unknown $(m-a)$-mode system preparation $\rhosysp$ is lower bounded as 
\begin{align}
\label{eq:bound_PS}
\FSys&\geq\FSysn=1-\left\langle \hat{N}^{(\n)}_\Sys\right\rangle_{\rhosysp},\\
\intertext{with}
\label{eq:gen_nullifier_PS}
\hat{N}^{(\n)}_\Sys&\coloneqq\frac{\mathbb{P}(\bphi_\A|\rhot)-1 + \frac{1}{\n!}\bra{\bphi}_\A\hat{N}^{(\n)}\ket{\bphi}_\A}{\mathbb{P}(\bphi_\A|\rhot)}.
\end{align}
From this, the corresponding expressions for the classes $\CLPSG$ an $\CLPSLON$ follow, in turn, as the two particular cases $\vec{n}=\vec{0}$ and $\vec{n} = \mathbf{1}_n$ with $\hat{U}$ passive, respectively. See Section~\ref{sec:PS_fidelity_bound} of the SI for details.

\textbf{Non-Gaussian state nullifiers}. 
It is instructive to mention that the operators 
\begin{equation}
  \label{eq:Gaussian_nullifiers}
  \hat{N}^{(0)}_j\coloneqq\hat{\tilde{q}}^2_j+\hat{\tilde{p}}^2_j-1/2,
  \end{equation}
for $j\in[m]$, correspond to the so-called \emph{nullifiers} of the Gaussian states in $\CG$. The nullifiers are commuting operators  that, despite originally introduced  \cite{Gu09} as a tool to define Gaussian graph states, can be tailored to define any pure Gaussian state \cite{Aolita10,Menicucci11}: If a state is the simultaneous null-eigenvalue eigenstate of all $m$ nullifiers of a given pure Gaussian state, then the former is necessarily equal to the latter. The bound $\Fzero$, given by \eqs~\eqref{eq:bound_general} and \eqref{eq:gen_nullifier_0}, exploits the fact that if a preparation gives a sufficiently low expectation value for the sum $\hat{N}^{(0)}=\sum_{j=1}^m\hat{N}^{(0)}_j$ of all $m$ nullifiers then its fidelity with the target state must be high. A similar intuition has been previously exploited \cite{Chen13,Yokoyama13} to experimentally check for multimode entanglement of ultra-large Gaussian cluster states. Here, we can not only certify 
entanglement but the quantum state itself.

Analogously, in the non-Gaussian case, from the derivation of \eq~\eqref{eq:gen_nullifier_n}, we can identify the operator 
\begin{equation}
  \label{eq:non_Gaussian_nullifiers}
  \hat{N}^{(\n)}_j  \coloneqq \left(\hat{\tilde{q}}^2_j+\hat{\tilde{p}}^2_j-\frac{1+2n_j}{2}\right)\prod_{k=1}^m p_{n_k-1}\left(\hat{\tilde{q}}^2_{k}+\hat{\tilde{p}}^2_{k}-1/2\right)
\end{equation}
 as the $j$-th nullifier of the $m$-mode non-Gaussian state $\rhot$ of \eq~\eqref{eq:general_target}. Indeed, all $m$ observables given by \eq~\eqref{eq:non_Gaussian_nullifiers} for all $j\in[m]$ commute and have $\rhot$ as their unique, simultaneous null-eigenvalue eigenstate. 
 To end up with,  due to the projection onto $\ket{\bphi}_\A$, the equivalent observables for post-selected target states do not in general commute. Nevertheless, their sum, given by $\hat{N}^{(\n)}_\Sys$, still defines an observable with $\rhosyst$ as its unique null-eigenvalue eigenstate. 
These observables constitute, to our knowledge \cite{Aolita10,Menicucci11,RMPReview}, the first examples of nullifiers for non-Gaussian states.

\section*{Acknowledgements}
We would like to thank F.\ G.\ S.\ L.\ Brand\~ao and S.\ T.\ Flammia for discussions on certification of state preparation. We thank the EU (RAQUEL, SIQS, AQuS, REQS - Marie Curie IEF No 299141), the BMBF, the FQXI, the Studienstiftung des Deutschen Volkes, MPQ-ICFO, and FOQUS for support.

\section*{Author Contributions}
All four authors participated in all the discussions and contributed with insights.
LA conceived the fidelity bound and its estimation technique.
LA, CG, and MK carried out all the calculations and worked out the details of the formalism.


\vfill 
\pagebreak 
\onecolumngrid 
\renewenvironment{widetext}{}{} 

\renewcommand\thesection{S\arabic{section}}
\renewcommand\theequation{S\arabic{equation}}
\renewcommand\thestandardbox{S\arabic{standardbox}}
\renewcommand\thetheorem{S\arabic{theorem}}

\setcounter{equation}{0}
\setcounter{section}{0}
\setcounter{theorem}{0}
\setcounter{standardbox}{0}

\section*{\normalfont\noindent\huge\sffamily Supplementary Information}

In this Supplementary Information we present the technical details of the certification test and the proofs of the theorems.
It is organised as follows:
In Section~\ref{sec:measurement_scheme} we provide a detailed description of the measurements schemes $\MG$ and $\MLON$ for the classes of target states $\CG$ and $\CLON$, respectively. In particular, in Boxes~\ref{box:measurement_scheme} and \ref{box:measurement_scheme_1n} in that section, a full specification of the necessary correlators to measure is given.
In Section~\ref{sec:post_selected_Targets}, we extend Theorems~\ref{thm:certification_gaussian} and \ref{thm:certification_1n} to the classes of post-selected target states $\CLPSG$ and $\CLPSLON$.
Section~\ref{Proofs} contains the proofs of our theorems, as well as of the Corollaries for post-selected target states. In particular, Lemmas~\ref{lem:fidelity_bound_estimation} and \ref{lem:fidelity_bound_estimation_1n} provide more precise expressions of the bounds \eqref{eq:numb_copies_bound_Gaussian} and \eqref{eq:numb_copies_bound} of the Theorems~\ref{thm:certification_gaussian} and \ref{thm:certification_1n} in the main text. In Section~\ref{sec:meas_settings} we upper-bound the number of  experimental settings necessary for our measurement schemes. In Section~\ref{sec:systematic_errors} we analyse the stability of our fidelity estimates under systematic errors. Finally, Section~\ref{sec:auxiliary_maths} contains some auxiliary mathematical relations necessary for our treatments. 
Equation and theorem numbers that do not start with an upper case S refer to the respective equations and theorems of the main text.



\titleformat{\section}[display]{\vspace{-1em}}{}{0pt}{\normalfont\bf\sffamily}[\vspace{-1em}]


\titleformat{\section}[display]{\vspace{-1em}}{}{0pt}{\normalfont\bf\sffamily\thesection\quad}[\vspace{-1em}]
\titleformat{\subsection}[display]{\vspace{-1em}}{}{0pt}{\normalfont\bf\sffamily\thesection.\thesubsection\quad}[\vspace{-1em}]

\section{The measurement scheme}
\label{sec:measurement_scheme}
In this section we elaborate on the fidelity bounds $\Fzero$ and $\Fone$ of the fidelity bounds for the Gaussian and linear-optical case, respectively.
To this end, it will be convenient to first specify some details of the symplectic matrix $\S$, which describes the optical network.

By virtue of the Euler decomposition \cite{Braunstein,RMPReview}, $\S$ can be decomposed as
\begin{equation}
  \label{eq:Euler_decomp}
  \S=\O\,\D\, \O',
\end{equation}
where $\D\in\mathbb{R}^{2m\times 2m}$ is positive-definite and diagonal, with elements $D_{2j-1,2j-1}\coloneqq s_j\geq1$ and $D_{2j,2j}\coloneqq s^{-1}_j$, for $j\in[m]$,
and $\mathbf{O}\in\mathbb{R}^{2m\times 2m}$ and $\mathbf{O}'\in\mathbb{R}^{2m\times 2m}$ are orthogonal matrices. 
$\D$ describes $m$ active single-mode squeezers in parallel, each one with squeezing parameter $s_j$ along the position quadrature. 
The maximum single-mode squeezing is $\smax\coloneqq\max_{1\leq j\leq m}\{s_j\}$.
$\O$ and $\O'$, in turn, describe passive mode transformations that can be implemented by linear-optical networks of at most $m(m-1)/2$ beam-splitters and single-mode phase shifters \cite{Reck}.
In the two settings considered here, i.e., for any $\rhot\in\CG \cup \CLON$, the unitary $\hat{U}$ in \eqs~\eqref{eq:SG} and \eqref{eq:SLON}
is such that $\O'$ can be taken as the identity matrix. In the first setting, i.e., for $\rhot\in\CG$, this holds because $\hat{U}$ acts on the vacuum state vector $\ket{\0}$ and any passive mode transformation maps the vacuum into itself.
For the second setting, i.e., for $\rhot\in\CLON$,  this holds simply because there we assume  that the total transformation itself is passive, i.e., in that case it holds also that $\D=\1$, so that $\S=\O$. 

In both cases, coupling between different modes only takes place through the linear-optical network described by $\O$. 
A general circuit can couple all $m$ modes with each other, meaning that the quadrature operators of each output mode are linear combinations of those of all $m$ input modes. 
However, often, each mode is only coupled to at most $d\leq m$ other modes. 
In these situations, $\O$ is a sparse matrix with at most $4 m d$ non-zero elements. 
More precisely, the columns of $\O$ are given by $2m$ orthonormal vectors $(\o^{(k)})_{k\in [2m]}$ each having at most $2d$ non-zero entries. 
Furthermore, since the position and momentum of each mode is coupled to at most the $2d$ quadratures of the same $d$ modes, each pair $\o^{(2j-1)}$ and $\o^{(2j)}$ shares the same \emph{sparsity property}, i.e., $\o^{(2j-1)}$ and $\o^{(2j)}$ have at least $2(m-d)$ zero entries in common, for all $j\in [m]$. 

\subsection{Gaussian case}
\label{sec:DetailsmeasurementGaussian}

Using that in the Gaussian case $\S=\O\, \D$ and squaring \eq~\eqref{eq:coordinatetransform} 
yields
\begin{align}
\label{eq:rtransfromed}
\nonumber
\hat{\tilde{r}}^2
&=
\hat{\r}^{T}\O\D^{-2}\O^{-1}\hat{\r}-2\vec{x}^{T}\O\D^{-2}\O^{-1}\hat{\r}
+
\vec{x}^{T}\O\D^{-2}\O^{-1}\vec{x}
\\
&=
\Tr\left[\O\D^{-2}\O^T[\hat{\r}\hat{\r}^{T}-(2\hat{\r}-\vec{x})\vec{x}^{T}]\right],
\end{align}
where $\O^{-1}=\O^{T}$ has been used and the trace is taken not over the Hilbert space but over the $2m\times 2m$ matrix with operators as entries. 
Combining \eqs~\eqref{eq:rtransfromed}, \eqref{eq:bound_general}, and \eqref{eq:gen_nullifier_0} yields
\begin{equation}
\label{eq:previoustoboundexplicit}
\begin{split}
 \Fzero
 =1-\Tr\left[\O{\D^{-2}}\O^T[\langle \hat{\r}\hat{\r}^{T}\rangle_{\rhop}
  - (2\langle \hat\r \rangle_{\rhop} -\vec{x})\vec{x}^T]\right]
 +\frac{m}{2} .
\end{split}
\end{equation}
Now we introduce the \emph{first moment vector} $\bfgamma\in\mathbb{R}^{2m}$ and the symmetric \emph{second moment matrix} $\bfGamma^{(1)} \in \RR^{2m \times 2m}$ of $\rhop$, with components 
\begin{equation}
\label{eq:Gamma_def}
\gamma_l \coloneqq \langle \hat{r}_l \rangle_{\rhop} 
\quad \text{and} \quad 
\Gamma^{(1)}_{l,l'}\coloneqq
\biggl\langle\frac{\hat{r}_l\hat{r}_{l'}+\hat{r}_{l'}\hat{r}_{l}}{2}\biggr\rangle_{\rhop}  ,
\end{equation}
respectively.
Since the matrix $\O{\D^{-2}}\O^T$ is symmetric, it holds that
\begin{equation}
 \Tr\left[\O{\D^{-2}}\O^T[\langle \hat{\r}\hat{\r}^{T}\rangle_{\rhop} \right]
 =
 \Tr\left[\O{\D^{-2}}\O^T[\langle \hat{\r}\hat{\r}^{T}\rangle_{\rhop}^T \right],
\end{equation}
so that we can rewrite \eq~\eqref{eq:previoustoboundexplicit} in terms of the observables which Arthur has access to as
\begin{equation}
\label{eq:boundexplicit}
\begin{split}
 \Fzero
 =1-\Tr\left[\O{\D^{-2}}\O^{-1}[\bfGamma^{(1)}-(2\bfgamma-\vec{x})\vec{x}^T]\right]
 +\frac{m}{2}. 
\end{split}
\end{equation}

We will show later (see Lemma~\ref{lem:sparsity} in Section~\ref{sec:ProofsGaussian} and the discussion immediately after its proof) that the bound \eqref{eq:boundexplicit} actually depends on at most $2 m \kappa$ out of the $4m^2$ entries of $\bfGamma^{(1)}$, with $\kappa=2\min\{d^2,m\}$, as defined in \eq~\eqref{eq:kappa_def}. Thus, only the $2 m \kappa$ corresponding observables, and the $2m$ observables necessary for $\bfgamma$, as indicated in Box~\ref{box:measurement_scheme}, need to be measured.
All these observables  can be measured by homodyne detection \cite{RMPReview}. 
Furthermore, in Section~\ref{sec:SettingsGaussian} we show that only $m+3$ different measurement settings are required.
Finally, by classical post-processing, Arthur recombines his estimates according to the third step of Box~\ref{box:measurement_scheme} and obtains the fidelity estimate $\Festzero$. This last step is also efficient in $m$.

\tikzbox{
\begin{standardbox}[Measurement scheme $\MG$]
\label{box:measurement_scheme}\hfill
\begin{compactenum}[1)] 
\item 
For each $1\leq l\leq 2m$ Arthur uses $\ns_1$ copies of $\rhop$, with $\ns_1$ given by \eq~\eqref{eq:bound_for_cs1}, to measure the observable $\hat{r}_l$, obtaining an estimate $\gamma^\ast_{l}$ of the expectation value $\gamma_{l}=\langle\hat{r}_l\rangle_{\rhop}$.
\item 
For each $1\leq l\leq l'\leq 2m$ for which $(\O{\D^{-2}}\O^{-1})_{l,l'}=\sum_{k=1}^{2m}o^{(k)}_lD^{-2}_{k,k}o^{(k)}_{l'}\neq0$, Arthur uses $\ns_2$  copies of $\rhop$, with $\ns_2$ given by \eq~\eqref{eq:bound_for_cs2}, to measure the observable $\frac{1}{2}(\hat{r}_l\hat{r}_{l'}+\hat{r}_{l'}\hat{r}_{l})$, obtaining an estimate $\Gamma^{{(1)}\ast}_{l,l'}$ of the expectation values $\Gamma^{(1)}_{l,l'}=\Gamma^{(1)}_{l',l}$ in \eq~\eqref{eq:Gamma_def}.  
\item 
He obtains the estimate $\Festzero$ of $\Fzero$ by replacing in \eq~\eqref{eq:boundexplicit} the actual expectation values $\bfGamma^{(1)}$ and $\bfgamma$ by the estimates $\Gamma^{{(1)}\ast}$ and $\bfgamma^\ast$, respectively. 
\end{compactenum}
\end{standardbox}
}

\subsection{Linear-optical case}
\label{sec:DetailsmeasuremenNonGaussian}
For $\rhot \in \CLON$ the unitary $\hat U$ is assumed to be passive. Hence, one has $\vec{x} = \vec{0}$ and $\S = \O$, and it follows that 
\begin{align}
\label{eq:area_preservation}
  \hat{\tilde{r}}^2=\hat{r}^2 \, .
\end{align}
The components of $\tilde r$ are
\begin{align}\label{eq:pqHeisenberg}
  \hat{\tilde{q}}_{j}={\o^{(2j-1)}}^T\hat{\r}
  \quad \text{and} \quad 
  \hat{\tilde{p}}_{j}={\o^{(2j)}}^T\hat{\r},
\end{align}
where $\o^{(k)}$ denotes the $k$-th column of $\O$.
Defining
\begin{equation}
\label{Projector}
\P^{(j)}\coloneqq\o^{(2j-1)}{\o^{(2j-1)}}^T+\o^{(2j)}{\o^{(2j)}}^T 
\end{equation}
as the projector onto the subspace spanned by the two vectors 
$\o^{(2j-1)}$ and $\o^{(2j)}$ and using \eqs~\eqref{eq:pqHeisenberg},~\eqref{eq:bound_general}, and \eqref{eq:gen_nullifier_one}, we obtain
\begin{equation}
\label{eq:boundexplicitnonGaussian}
\Fone=1-\bigg\langle\Big(\hat{r}^2-\frac{m+2n}{2}\Big)\prod_{j=1}^{n}
\Big(\hat{\r}^{T}\P^{(j)}\hat{\r}-\frac{1}{2}\Big)\bigg\rangle_{\rhop} . 
\end{equation}

 Next, we consider the $\binom{n}{j}$ subsets of $\{1,2, \hdots , n\}$ of length $j$ and define $\Omega^{(j)}_\mu$ as the $\mu$-th of these subsets for some arbitrary ordering. With this, we expand the product inside \eq~\eqref{eq:boundexplicitnonGaussian} as 
\begin{equation}
\label{eq:productexpansion}
\prod_{j=1}^{n}\Big(\hat{\r}^{T}\P^{(j)}\hat{\r}-\frac{1}{2}\Big)=\sum_{j=0}^{n}(-1/2)^{n-j}\sum_{\mu=1}^{\binom{n}{j}}\bigotimes_{i\in\Omega^{(j)}_\mu}\hat{\r}^{T}\P^{(i)}\hat{\r}.
\end{equation}
Using that a product of traces can be written as a trace over tensor products, \eq~\eqref{eq:boundexplicitnonGaussian} can be written as
\begin{equation} \label{eq:Fone_in_terms_rr_expectations_first}
\Fone
= 
1-\bigg\langle\big(\hat{r}^2-\frac{m+2n}{2}\big)\sum_{j=0}^{n}(-1/2)^{n-j}
\sum_{\mu=1}^{\binom{n}{j}}\Tr\big[(\bigotimes_{i\in\Omega^{(j)}_\mu}\P^{(i)})(\hat{\r}\hat{\r}^{T})^{\otimes j}\big]\bigg\rangle_{\rhop}
 ,
\end{equation}
where $\bigotimes_{i \in \Omega^{(0)}_\mu=\emptyset} \P^{(i)} \coloneqq 1$ 
and the traces are again taken not over the Hilbert space but over tensors that have operators as components. 
For each $j\in [n+1]$, we introduce the \emph{$2j$-th moment tensors} 
$\bfGamma^{(j)} \in (\RR^{2m\times 2m})^{\otimes j}$ 
with components
\begin{equation}
\label{eq:def_Gamma_i}
\Gamma^{(j)}_{k_1,l_1,\dots, k_j, l_j} \coloneqq 
\biggl\langle
\frac{\hat{r}_{k_1}\hat{r}_{l_1}+\hat{r}_{l_1}\hat{r}_{k_1}}{2}\cdots \frac{\hat{r}_{k_j}\hat{r}_{l_j}+\hat{r}_{l_j}\hat{r}_{k_j}}{2}
\biggr\rangle_{\rhop}
\end{equation}
and define $\bfGamma^{(0)}\coloneqq 1$.
Clearly, these tensors are invariant under the partial transposition with respect to any $j'$-th pair of subindices $k_{j'}$ and $l_{j'}$,
\begin{equation}
\label{eq:symmetry_property}
\Gamma^{(j)}_{k_1,l_1,\dots, k_{j'}, l_{j'},\dots, k_j, l_j} 
= \Gamma^{(j)}_{k_1,l_1,\dots, l_{j'}, k_{j'},\dots, k_j, l_j}.
\end{equation}
With the definition \eqref{eq:def_Gamma_i} and the fact that each projector $\P^{(i)}$ is a symmetric matrix, \eq~\eqref{eq:Fone_in_terms_rr_expectations_first} finally becomes
\begin{widetext}
\begin{align} 
\Fone&= 
1-\sum_{j=0}^{n}(-1/2)^{n-j} \sum_{\mu=1}^{\binom{n}{j}} 
\Biggr\{
\Tr\biggl[ \Bigl(\1 \otimes \bigotimes_{i\in \Omega^{(j)}_\mu} \P^{(i)}\Bigr) 
\bfGamma^{(j+1)}
\biggr]
- \frac{m+2n}{2}\Tr\biggl[ \Bigl(\bigotimes_{i \in \Omega^{(j)}_\mu} \P^{(i)}\Bigr) 
\bfGamma^{(j)} \rangle_{\rhop}\biggr]
\Biggl\} .
\label{eq:boundexplicitnonGaussian2}
\end{align}
\end{widetext}
\tikzbox{
\begin{standardbox}[Measurement scheme $\MLON$]
\label{box:measurement_scheme_1n}
\noindent\begin{compactenum}[1)] 
\item  For each $1\leq j\leq n$, each $1\leq\mu\leq\binom{n}{j}$, and each $1\leq k_1, l_1,k_2,l_2,\hdots, k_j, l_j\leq 2m$, for which 
\begin{equation}
\Bigl(\bigotimes_{i \in \Omega^{(j)}_\mu} \P^{(i)}\Bigr)_{k_1,l_1,k_2,l_2,\hdots , k_j, l_j}\neq0,
\end{equation} 
\noindent
\item 
Arthur uses $\ns_{\leq2(n+1)}$ copies of $\rhop$, with $\ns_{\leq2(n+1)}$ given by \eq~\eqref{eq:ns_1n}, to measure the observable 
$({\hat{r}_{k_1}\hat{r}_{l_1}+\hat{r}_{l_1}\hat{r}_{k_1}})/{2}\cdots ({\hat{r}_{k_j}\hat{r}_{l_j}+\hat{r}_{l_j}\hat{r}_{k_j}})/{2}$, obtaining an estimate 
${\Gamma^{(j)\ast}}_{k_1,l_1,k_2,l_2,\hdots , k_j, l_j}$ of the $2j$-th moment $\Gamma^{(j)}_{k_1,l_1,k_2,l_2,\hdots,  k_j, l_j}$. 
For each $1\leq k_{j+1}\leq 2m$, he uses $\ns_{\leq2(n+1)}$ copies of $\rhop$ to measure the observable $(({\hat{r}_{k_1}\hat{r}_{l_1}+\hat{r}_{l_1}\hat{r}_{k_1}})/{2})\cdots
(({\hat{r}_{k_j}\hat{r}_{l_j}+\hat{r}_{l_j}\hat{r}_{k_j}})/{2})\hat{r}^2_{k_{j+1}}$, 
obtaining an estimate ${\Gamma^{(j+1)\ast}}_{k_1,l_1,k_2,l_2,\hdots  ,  k_j, l_j,k_{j+1},k_{j+1}}$ of the $2(j+1)$-th moment $\Gamma^{(j+1)}_{k_1,l_1,k_2,l_2,\hdots ,  k_j, l_j,k_{j+1},k_{j+1}}$.
\item He obtains the estimate $\Fest$ of $\Fone$ by replacing in \eq~\eqref{eq:boundexplicitnonGaussian2} for all $1\leq j\leq n+1$ the actual expectation values $\bfGamma^{(j)}$ by the estimates ${\bfGamma^{(j)\ast}}$.
\end{compactenum} 
\end{standardbox}
}
\noindent Note that this is an explicit expression for $\Fone$ in terms of the correlators \eqref{eq:def_Gamma_i} that Arthur can measure.
Due to the sparsity of $\O$, each matrix $\P^{(i)}$ has at most $(2d)^2$ 
non-zero entries. Then, it follows (see Lemma~\ref{lem:sparsity_1n} in Section~\ref{sec:ProofsNonGaussian} for details) that the measurement of 
$\landauO\left(m\left(4d^2+1\right)^n\right)$
observables, those listed in Box~\ref{box:measurement_scheme_1n}, suffices for the estimation of \eqref{eq:boundexplicitnonGaussian2}. 
As in the Gaussian case, all these observables can be measured by homodyne detection \cite{RMPReview}. Furthermore, in Section~\ref{sec:Settingsnon-Gaussian} we show that at most $\binom m n  2^{n+1}\leq(2m)^{n}/n!$ measurement settings are sufficient. Once again, by classical post-processing, Arthur recombines his estimates according to the third step of Box~\ref{box:measurement_scheme_1n} and obtains the fidelity estimate $\Festone$. Provided that $n$ is constant, this last step is also efficient in $m$. 
\section{Quantum certification of locally post-selected target states}
\label{sec:post_selected_Targets}
In this section, we extend our results to locally post-selected $(m-a)$-mode target states $\rhosyst$ in $\CLPSG$ or $\CLPSLON$. The entire treatment of the classes $\CLPSG$ or $\CLPSLON$ is similar to, and follows directly from, that already seen for the classes $\CG$ or $\CLON$. Therefore, instead of repeating all the details, we simply explain the specific differences. 
\subsection{The fidelity bound}
\label{sec:PS_fidelity_bound}
The first step is to derive 
the fidelity bound $\FSysn$ given by \eq~\eqref{eq:bound_PS}.
We proceed in a similar fashion to the Methods Section in the main text. Due to \eqs~\eqref{eq:fidelity} and \eqref{eq:general_PS_state}, the facts that $\rhosyst$ and $\rhot$ are pure, and the  properties of the trace, it holds that 
\begin{align}
\label{eq:fid_syst}
\FSys = F(\rhosyst,\rhosysp)=
\Tr_\Sys\left[\Tr_\A\left[\frac{\rhot (\1_\Sys\otimes\ket{\bphi}_\A\bra{\bphi}_\A)}{\mathbb{P}(\bphi_\A|\rhot)}\right]\rhosysp\right]
=\frac{\Tr\left[\rhot(\rhosysp\otimes\ket{\bphi}_\A\bra{\bphi}_\A)\right]}{\mathbb{P}(\bphi_\A|\rhot)}
=\frac{F\left(\rhot,\rhosysp\otimes\ket{\bphi}_\A\bra{\bphi}_\A\right)}{\mathbb{P}(\bphi_\A|\rhot)},
\end{align}
 where $\Tr_\Sys$ indicates partial trace over the Fock space of the $m-a$ modes in $\Sys$. 
Now, due to \eq~\eqref{eq:bound_general}, it holds that  
\begin{align}
F\left(\rhot,\rhosysp\otimes\ket{\bphi}_\A\bra{\bphi}_\A\right)
&\geq1 - \Tr\left[\hat{N}^{(\n)}(\rhosysp\otimes\ket{\bphi}_\A\bra{\bphi}_\A)\right]
=
1 - \Tr_\Sys\left[\bra{\bphi}_\A\hat{N}^{(\n)}\ket{\bphi}_\A\, \rhosysp\right],
\label{eq:fid_bound}
\end{align}
with $\hat{N}^{(\n)}$ the observable of \eq~\eqref{eq:gen_nullifier_n}. Using \eqs~\eqref{eq:fid_syst} and \eqref{eq:fid_bound}, we obtain the general fidelity bound $\FSysn$ of \eq~\eqref{eq:bound_PS}.

In particular, setting $\vec{n}=\vec{0}$ in \eqs~\eqref{eq:bound_PS} and~\eqref{eq:gen_nullifier_PS} yields the specialized fidelity bound $\FSyszero \geq 1- \left\langle\hat N^{(0)}_\Sys \right\rangle_{\rhosysp}$ for the case $\rhosyst \in \CLPSG$, with
\begin{align}
\label{eq:fid_bound_final_LPSG}
\hat{N}^{(0)}_\Sys\coloneqq\frac{\mathbb{P}(\bphi_\A|\rhot)-1 + \bra{\bphi}_\A\hat{N}^{(0)}\ket{\bphi}_\A}{\mathbb{P}(\bphi_\A|\rhot)} ,
\end{align}
where $\hat{N}^{(0)}$ is the observable of \eq~\eqref{eq:gen_nullifier_0} and $\rhot$ is the $m$-mode state in $\CG$ associated with $\rhosyst$ through \eq~\eqref{eq:general_PS_state}. 
Analogously, taking $\vec{n} = \mathbf{1}_n$ and $\hat{U}$ passive yields the corresponding fidelity bound $\FSysone \geq 1- \left\langle\hat N^{(n)}_\Sys \right\rangle_{\rhosysp}$ for $\rhosyst \in \CLPSLON$, with
\begin{align}
\label{eq:fid_bound_final_LPSLON}
\hat{N}^{(n)}_\Sys\coloneqq\frac{\mathbb{P}(\bphi_\A|\rhot)-1 + \bra{\bphi}_\A\hat{N}^{(n)}\ket{\bphi}_\A}{\mathbb{P}(\bphi_\A|\rhot)},
\end{align}
where $\hat{N}^{(n)}$ is the observable from \eq~\eqref{eq:gen_nullifier_n} and $\rhot$ is the $m$-mode state in $\CLON$ associated to $\rhosyst$ through \eq~\eqref{eq:general_PS_state}.

\subsection{The certification test}
\label{sec:PS_cert_test}
Next, in Box~\ref{box:test_PS}, we present a test $\mathcal{T}_\text{LPS}$ that works for post-selected target states in $\CLPSG$ or $\CLPSLON$ and which is a slightly modified version of the test $\mathcal{T}$ from Box~\ref{box:test}. It is, of course, possible to unify both tests so as to account for all four classes of target states in one single test. We have however opted for splitting the tests into the two cases with and without post-selection to avoid an excessive notational overhead in Box~\ref{box:test} of the main text.
\tikzbox{
  \begin{standardbox}[Certification test $\mathcal{T}_\text{LPS}$]
  \label{box:test_PS}
  \noindent\begin{compactenum}[1)] 
  \item Idem as in $\mathcal{T}$ from Box~\ref{box:test}. 
  \item Arthur provides Merlin with the classical specification $n$, $\S$, $\vec{x}$, $a$, and $\ket{\bphi}_\A$ of the target state $\rhosyst$ and requests a sufficient number of copies of it. 
  \item If $n=0$,
  Arthur measures $2 m \kappa$ two-mode correlations and $2 (m-a)$ single-mode expectation values specified by the measurement scheme $\MLPSG$ (see Section~\ref{sec:PS_measurement_scheme}), which can be done with $m-a+3$ single-mode homodyne settings.
  \\
  If $n>0$,
  he measures $\landauO\big(m(4d^2+1)^{n}\big)$  multi-body correlators, each one involving  between $1$ and $2n+1$ modes, specified  by the measurement scheme  $\MLPSLON$  (see Section~\ref{sec:PS_measurement_scheme}), which can be done with at most $\binom m n  2^{n+1}$ single-mode homodyne settings. 
  \item \label{itme:1.4_SI}
  By classical post-processing, he obtains a fidelity estimate $\FSysestone$ such that $\FSysestone\in [\FSysone-\varepsilon,\FSysone+\varepsilon]$ with probability at least $1-\alpha$, where $\FSysone$ is the lower bound to $\FSys$ given by \eq~\eqref{eq:fid_bound_final_LPSLON}.
  \item \label{item:accept_reject_condition_SI}
  If $\FSysestone<\FT+\varepsilon$, he rejects. Otherwise, he accepts.
  \end{compactenum} 
  \end{standardbox}
}

\subsection{The measurement scheme}
\label{sec:PS_measurement_scheme}
The measurement schemes $\MLPSG$ and $\MLPSLON$
to estimate $\FSyszero$ and $\FSysone$, respectively, are essentially replicas of the schemes $\MG$ and $\MLON$ to estimate $\Fzero$ and $\Fone$, already described in detail in boxes~\ref{box:measurement_scheme} and \ref{box:measurement_scheme_1n}. Thus, instead of repeating all the details of boxes~\ref{box:measurement_scheme} and \ref{box:measurement_scheme_1n}, we simply outline the concrete differences between $\MLPSG$ and $\MG$, as well as between $\MLON$ and $\MLPSLON$. There are only three specific differences. 
\begin{enumerate}
\item The moment vector and tensors are now defined with respect to $\rhosysp\otimes\ket{\bphi}_\A\bra{\bphi}_\A$ instead of $\rhop$. More precisely, we now need to estimate the vector $\bfgamma_\Sys\in\mathbb{R}^{2m}$ and tensors $\bfGamma_\Sys^{(j)} \in \left(\RR^{2m\times 2m}\right)^{\otimes j}$, with elements
\begin{align}
\label{eq:def_gamma_S}
{\gamma_\Sys}_l \coloneqq& \left\langle \hat{r}_l \right\rangle_{\rhosysp\otimes\ket{\bphi}_\A\bra{\bphi}_\A}
=
\left\langle \bra{\bphi}_\A\hat{r}_l\ket{\bphi}_\A \right\rangle_{\rhosysp}
=
\left\{
  \begin{array}{lr}
    \bra{\phi_l}_{\A_l}\hat{r}_l\ket{\phi_l}_{\A_l} &, \text{ if } l\in\A,\\
    \left\langle \hat{r}_l \right\rangle_{\rhosysp} &, \text{ if } l\notin\A,
  \end{array}
\right.\\
\intertext{and} 
\nonumber
{\Gamma^{(j)}_\Sys}_{k_1,l_1,\dots, k_j, l_j} \coloneqq& \left\langle
\frac{\hat{r}_{k_1}\hat{r}_{l_1}+\hat{r}_{l_1}\hat{r}_{k_1}}{2}\cdots \frac{\hat{r}_{k_j}\hat{r}_{l_j}+\hat{r}_{l_j}\hat{r}_{k_j}}{2}
\right\rangle_{\rhosysp\otimes\ket{\bphi}_\A\bra{\bphi}_\A}\\
\label{eq:def_Gamma_S}
=&
\left\langle\bra{\bphi}_\A
\frac{\hat{r}_{k_1}\hat{r}_{l_1}+\hat{r}_{l_1}\hat{r}_{k_1}}{2}\cdots \frac{\hat{r}_{k_j}\hat{r}_{l_j}+\hat{r}_{l_j}\hat{r}_{k_j}}{2}
\ket{\bphi}_\A\right\rangle_{\rhosysp},
\end{align}
respectively. 
\item $\FSyszero$ and $\FSysone$ are obtained dividing the expressions on the right-hand sides of \eqs~\eqref{eq:boundexplicit} and \eqref{eq:boundexplicitnonGaussian2} by $\mathbb{P}(\bphi_\A|\rhot)$, and with $\bfgamma$ and  $\bfGamma^{(j)}$ replaced by $\bfgamma_\Sys$ and $\bfGamma_\Sys^{(j)}$, respectively. 
\item The presence of the divisor $\mathbb{P}(\bphi_\A|\rhot)$ in $\FSyszero$ and $\FSysone$ is the reason for the third difference. 
As discussed in Lemmas \ref{lem:stability_PS} and \ref{lem:stability_1n_PS} in Sections~\ref{sec:proof_corollary_G} and \ref{sec:proof_corollary_LON}, respectively, this divisor makes $\FSyszero$ and $\FSysone$ $1/\mathbb{P}(\bphi_\A|\rhot)$ times more unstable than $\Fzero$ and $\Fone$
. As a consequence, the number of copies of $\rhosysp$ required to estimate each relevant moment of $\FSyszero$ are $\frac{\ns_1}{\mathbb{P}(\bphi_\A|\rhot)}$ and $\frac{\ns_2}{\mathbb{P}(\bphi_\A|\rhot)}$, instead of $\ns_1$ and $\ns_2$. This is summarized in Lemma~\ref{lem:fidelity_bound_estimation_PS}. Analogously, the number required for each relevant moment of $\FSysone$ is $\frac{\ns_{\leq2(n+1)}}{\mathbb{P}(\bphi_\A|\rhot)}$, instead of $\ns_{\leq2(n+1)}$. This is summarized in Lemma~\ref{lem:fidelity_bound_estimation_1n_PS}.
\end{enumerate}

As is clear from \eqs~\eqref{eq:def_gamma_S} and \eqref{eq:def_Gamma_S}, the estimation of $\bfgamma_\Sys$ and $\bfGamma_\Sys^{(j)}$  requires only the measurement of multi-body correlators  among  the $(m-a)$ system modes in $\Sys$. 
This is due to the facts that after post selection the system is in a product state with respect to the bipartition in $\Sys$ and $\A$ and that the quadrature operators in \eq~\eqref{eq:def_Gamma_S} can also be correspondingly grouped into two factors, one containing exclusively operators of modes in $\Sys$ and the other in $\A$. 
Furthermore, since $\ket{\bphi}_\A$ is a product state known to Arthur, he can efficiently calculate the expectation vale of any product of quadrature operators of modes in $\A$ with respect to $\ket{\bphi}_\A$. 
For instance, suppose that $k_1, l_1, k_2\in\A$ and that $l_2,k_3, l_3 \dots, k_j, l_j\notin\A$. Then, the corresponding $2j$-th moment decomposes as
\begin{align}
\label{eq:exp_value_factor}
{\Gamma^{(j)}_\Sys}_{k_1,l_1,\dots, k_j, l_j}=
\bra{\bphi}_\A
\frac{\hat{r}_{k_1}\hat{r}_{l_1}+\hat{r}_{l_1}\hat{r}_{k_1}}{2}\hat{r}_{k_2}\ket{\bphi}_\A
\left\langle\hat{r}_{l_2}\frac{\hat{r}_{k_3}\hat{r}_{l_3}+\hat{r}_{l_3}\hat{r}_{k_3}}{2}\cdots \frac{\hat{r}_{k_j}\hat{r}_{l_j}+\hat{r}_{l_j}\hat{r}_{k_j}}{2}\right\rangle_{\rhosysp},
\end{align}  
and only the measurement of the $(2j-3)$-th moment given by the second factor of \eq~\eqref{eq:exp_value_factor} is required. As another example, consider the case where a given $\ket{\phi_j}_{\A_j}$ is a Fock-basis state. Then, all the moments containing an odd number of quadrature operators of the $\A_j$-th mode automatically vanish and need therefore not be measured at all. 

In  general, Arthur can always efficiently obtain $\bfgamma_\Sys$ and the $\bfGamma_\Sys^{(j)}$'s as a product of an (a priori known) expectation value with respect to $\ket{\bphi}_\A$ of a multi-body product of quadrature operators of modes in $\A$ and a (measured) expectation value with respect to $\rhosysp$ of a multi-body product of quadrature operators of modes in $\Sys$, in a way analogous to the example of \eq~\eqref{eq:exp_value_factor}. 

\subsection{Corollaries of Theorems \ref{thm:certification_gaussian} and \ref{thm:certification_1n}}
\label{sec:PS_Corollary}
Since the moments to be estimated are now given, in \eqs~\eqref{eq:def_gamma_S} and \eqref{eq:def_Gamma_S}, by expectation values with respect to $\rhosysp\otimes\ket{\bphi}_\A\bra{\bphi}_\A$, instead of $\rhop$, a simple way to extend Theorems~\ref{thm:certification_gaussian} and \ref{thm:certification_1n} to target states in  $\CLPSG$ or $\CLPSLON$ is by redefining the variance upper bounds $\sigma_i$. More precisely, taking $\sigma_i$  as an upper bound on the variances of any product of $i$ phase space quadratures now in the state $\rhosysp$, we introduce the quantities 
\begin{equation}
\label{eq:new_variance}
\varsigma_i\coloneqq\max_{j\in[a]\wedge k_1,k_2, \hdots k_j\in\A}\left\{\bra{\bphi}_\A\hat{r}_{k_1}\hat{r}_{k_2}\hdots\hat{r}_{k_j}\ket{\bphi}_\A\,\sigma_{i-j}\right\}, 
\end{equation}
for $i\in[2(n+1)]$. In addition, we call $\varsigma_{\leq i} \coloneqq \max_{k \leq i}\{\varsigma_k\}$ the \emph{maximal $i$-th generalised variance} of $\rhosysp$. 

The parameters $\varsigma_i$ quantify the maximal variances of random variables defined by products of $i-j$ quadrature-measurement outcomes on $\rhosysp$ renormalised by the expectation value of products of $j$ quadrature operators with respect to $\ket{\bphi}_\A$, therefore automatically accounting for factorisations of the type of \eq~\eqref{eq:exp_value_factor}. They constitute very non-tight upper bounds to the real variances. In particular experimental situations, tighter bounds can be found. Here, we are simply interested in taking advantage of the proofs of Theorems \ref{thm:certification_gaussian} and \ref{thm:certification_1n} without introducing too much extra notational overhead, for which the definition of \eq~\eqref{eq:new_variance} is enough.
Indeed, with these redefinitions, the following corollaries follow straightforwardly from Theorems \ref{thm:certification_gaussian} and \ref{thm:certification_1n}.
\begin{corollary}[Quantum certification of locally post-selected Gaussian states] 
\label{cor:certification_post_selected_G}
Under the same conditions and for the same $\rhot$ as in Theorem~\ref{thm:certification_gaussian}, test $\mathcal{T}_\text{LPS}$ from Box~\ref{box:test_PS} is a certification test for $\rhosyst\in\CLPSG$ and requires at most 
\begin{equation}
\label{eq:numb_copies_PS_Gaussian}
\landauO\!\left( 
 \frac{\smax^4\left(2\varsigma_1^2 \norm{\vec x}_2^2 m^3 +  \varsigma_2^2 \kappa^3 m^4\right)}
  {\left[\mathbb{P}(\bphi_\A|\rhot)\,\varepsilon\right]^2  \ln(1/(1-\alpha))} 
 \right) 
\end{equation}
copies of a preparation $\rhosysp$ with first and second generalized variance bounds $\varsigma_{1}>0$ and $\varsigma_{2}>0$, respectively.
\end{corollary}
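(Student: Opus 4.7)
The plan is to reduce the proof of Corollary~\ref{cor:certification_post_selected_G} to the proof of Theorem~\ref{thm:certification_gaussian} by exploiting the fidelity lower bound $\FSys \geq \FSyszero = 1 - \langle \hat{N}^{(0)}_\Sys \rangle_{\rhosysp}$ established in Section~\ref{sec:PS_fidelity_bound}, which already encodes the post-selection through the ancilla projection and the divisor $\mathbb{P}(\bphi_\A|\rhot)$. Since $\FSyszero = \FSys = 1$ whenever $\rhosysp = \rhosyst$ and $\FSyszero \leq \FSys$ in general, the acceptance/rejection rule of step~\ref{item:accept_reject_condition_SI} in Box~\ref{box:test_PS} yields both completeness and soundness via the same triangle-inequality argument as in Theorem~\ref{thm:certification_gaussian}, provided only that $\FSysestzero \in [\FSyszero - \varepsilon, \FSyszero + \varepsilon]$ holds with probability at least $1 - \alpha$. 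The task thus reduces to a sample-complexity statement for the estimator produced by the scheme $\MLPSG$.

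Next, I would unfold $\hat{N}^{(0)}_\Sys$ by substituting \eqref{eq:gen_nullifier_0} into \eqref{eq:fid_bound_final_LPSG} and expanding the sandwich with $\ket{\bphi}_\A$ mode by mode. Because $\ket{\bphi}_\A$ is a product state known classically to Arthur, every quadrature that acts on an ancilla mode can be factored out as an a~priori computable number, leaving only the $2(m-a)$ first moments and the at~most $2 m \kappa$ symmetrised second moments supported on system modes as quantities requiring experimental estimation, exactly in the pattern exemplified by \eqref{eq:exp_value_factor}. This mirrors the measurement structure of $\MG$ and justifies the $m-a+3$ homodyne-settings count claimed in Box~\ref{box:test_PS}. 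I would then carry out a linearity/stability analysis of \eqref{eq:fid_bound_final_LPSG}: an error of size $\delta_i$ on the $i$-th estimated moment propagates to an error at most $C_i \, \delta_i / \mathbb{P}(\bphi_\A|\rhot)$ in $\FSysestzero$, where the coefficients $C_i$ are precisely those appearing in the stability step of Theorem~\ref{thm:certification_gaussian} and depend only on $\smax$, $\kappa$, $\norm{\vec{x}}_2$, and $m$. Demanding the total propagated error to be at most $\varepsilon$ therefore forces each individual moment to be estimated to precision scaling as $\varepsilon \, \mathbb{P}(\bphi_\A|\rhot)$ divided by the sum over $C_i$.

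Finally, I would apply Hoeffding's inequality (or its Bernstein-type refinement using the bounds $\varsigma_i$ of \eqref{eq:new_variance}) to each of the $\landauO(m\kappa)$ sample-mean moment estimators, whose summand second moments are by definition dominated by $\varsigma_1^2$ and $\varsigma_2^2$. A union bound over all moments, controlling the joint failure probability by $\alpha$, reproduces the derivation behind \eqref{eq:numb_copies_bound_Gaussian} under the two substitutions $\sigma_i \mapsto \varsigma_i$ and $\varepsilon \mapsto \varepsilon\, \mathbb{P}(\bphi_\A|\rhot)$, thereby yielding the claimed \eqref{eq:numb_copies_PS_Gaussian}. I expect the main obstacle to be the variance-propagation bookkeeping: one must verify carefully that no additional factor of $1/\mathbb{P}(\bphi_\A|\rhot)$ sneaks in beyond the single one inherited from \eqref{eq:fid_bound_final_LPSG}, and that the asymptotic $m$-scaling of the sum over the sparse $\bfGamma_\Sys^{(1)}$ entries is genuinely identical to that of $\bfGamma^{(1)}$ rather than being subtly enhanced by the product-state factorisation against $\ket{\bphi}_\A$.
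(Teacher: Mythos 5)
Your overall architecture coincides with the paper's: reduce to Theorem~\ref{thm:certification_gaussian} via the bound $\FSys\geq\FSyszero$, establish that completeness and soundness follow from $\FSysestzero\in[\FSyszero-\varepsilon,\FSyszero+\varepsilon]$ holding with probability $1-\alpha$, and obtain that estimate by re-running the sparsity and stability analyses of the Gaussian case with the single extra factor $1/\mathbb{P}(\bphi_\A|\rhot)$ inherited from \eq~\eqref{eq:fid_bound_final_LPSG}, absorbing the known ancilla expectation values into the redefined variances $\varsigma_i$. This is precisely how the paper proceeds, via Lemmas~\ref{lem:sparsity_PS}, \ref{lem:stability_PS}, and \ref{lem:fidelity_bound_estimation_PS} substituted for their non-post-selected counterparts, with the replacements $\sigma_i\to\varsigma_i$ and $\varepsilon\to\mathbb{P}(\bphi_\A|\rhot)\,\varepsilon$.

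There is, however, one concrete misstep in your final step: Hoeffding's inequality (and Bernstein's inequality in its standard form) requires almost-surely bounded random variables, whereas the quantities being sampled here are products of homodyne quadrature outcomes on a continuous-variable state, which are unbounded real-valued random variables about which the only assumption available is the variance bound $\varsigma_i$ of \eq~\eqref{eq:new_variance}. This is not a cosmetic point: the paper stresses that assuming finite variances is the minimal hypothesis under which estimation from samples is possible at all, and accordingly its Lemma~\ref{lem:general_estimation} derives the concentration statement from Chebyshev's inequality, combined with the independence of the per-observable sample averages (a product bound rather than a union bound, which is why the failure probability enters as $\ln(1/(1-\alpha))$ in \eq~\eqref{eq:numb_copies_PS_Gaussian}). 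If you replace your invocation of Hoeffding/Bernstein by an application of Lemma~\ref{lem:general_estimation}, the rest of your argument goes through and reproduces the paper's proof; as written, that step would fail because no range or sub-exponential tail bound on the quadrature products is available. Your two closing worries are both resolved as you suspected: the definition \eqref{eq:new_variance} of $\varsigma_i$ is engineered so that exactly one factor of $1/\mathbb{P}(\bphi_\A|\rhot)$ appears (via Lemma~\ref{lem:stability_PS}), and Lemma~\ref{lem:sparsity_PS} shows the sparsity count $2m\kappa$ is unchanged by the factorisation against $\ket{\bphi}_\A$.
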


\begin{corollary}[Quantum certification of locally post-selected linear-optical network states] 
\label{cor:certification_post_selected_LON}
Under the same conditions and for the same $\rhot$ as in Theorem~\ref{thm:certification_1n}, test $\mathcal{T}_\text{LPS}$ from Box~\ref{box:test_PS} is a certification test for $\rhosyst\in\CLPSLON$ and requires at most 
\begin{equation}
\label{eq:numb_copies_PS_LON}
\landauO\!\left(
 \frac{\varsigma_{\leq 2 (n+1)}^2 m^4 (\lambda\, d^{6}\, n\, m)^{n}}{\left[\mathbb{P}(\bphi_\A|\rhot)\,\varepsilon\right]^2 \ln(1/(1-\alpha))}
 \right)
 \end{equation}
copies of a preparation $\rhosysp$  with maximal $2(n+1)$-th generalised variance $\varsigma_{\leq2(n+1)}$, where $\lambda > 0$ is the same absolute constant as in Theorem~\ref{thm:certification_1n}.
\end{corollary}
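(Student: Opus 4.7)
The plan is to reduce Corollary~\ref{cor:certification_post_selected_LON} to Theorem~\ref{thm:certification_1n} via the fidelity bound \eqref{eq:bound_PS} specialized to $\vec{n}=\mathbf{1}_n$ with $\hat{U}$ passive, i.e., $\FSysone\geq 1-\langle \hat{N}^{(n)}_\Sys\rangle_{\rhosysp}$ with $\hat{N}^{(n)}_\Sys$ as in \eqref{eq:fid_bound_final_LPSLON}. First I would construct the estimator $\FSysestone$ exactly as $\Festone$ is built in the measurement scheme of Box~\ref{box:measurement_scheme_1n}, but with two modifications dictated by \eqref{eq:bound_PS}--\eqref{eq:gen_nullifier_PS}: (i) each $2j$-th moment $\Gamma^{(j)}$ of $\rhop$ appearing in the expansion \eqref{eq:boundexplicitnonGaussian2} is replaced by the generalized moment $\Gamma^{(j)}_\Sys$ of \eqref{eq:def_Gamma_S}; and (ii) the resulting linear combination is shifted by $(\mathbb{P}(\bphi_\A|\rhot)-1)/\mathbb{P}(\bphi_\A|\rhot)$ and rescaled by $1/\mathbb{P}(\bphi_\A|\rhot)$ so as to yield an estimator of $1-\langle\hat{N}^{(n)}_\Sys\rangle_{\rhosysp}$.

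Next I would use that each relevant $\Gamma^{(j)}_\Sys$ factorizes across the $\Sys/\A$ bipartition, so that the ancillary factor $\bra{\bphi}_\A\hat{r}_{k_{i_1}}\cdots\hat{r}_{k_{i_l}}\ket{\bphi}_\A$ is a classically computable scalar and only a shorter product of quadrature operators of modes in $\Sys$ has to be measured on $\rhosysp$. The definition \eqref{eq:new_variance} of $\varsigma_i$ precisely absorbs this classical prefactor times $\sigma_{i-j}$ into a uniform variance bound for each random variable contributing to the estimator. Consequently, the counting of relevant correlators, the sparsity exploitation, and the Chebyshev-type stability analysis underlying Theorem~\ref{thm:certification_1n} apply verbatim to $\FSysestone$ with the single substitution $\sigma_{\leq 2(n+1)}\mapsto\varsigma_{\leq 2(n+1)}$.

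The genuinely new feature is the overall factor $1/\mathbb{P}(\bphi_\A|\rhot)$ in \eqref{eq:fid_bound_final_LPSLON}. Any additive error $\delta$ incurred when estimating $\bra{\bphi}_\A\hat{N}^{(n)}\ket{\bphi}_\A$ is amplified to $\delta/\mathbb{P}(\bphi_\A|\rhot)$ on $\FSysestone$. Requiring $|\FSysestone-\FSysone|\leq\varepsilon$ with failure probability at most $\alpha$ therefore forces each individual generalized moment to be estimated to accuracy of order $\mathbb{P}(\bphi_\A|\rhot)\,\varepsilon$. Combining a Chebyshev (or Hoeffding) tail bound on each of the $\landauO(m(4d^2+1)^n)$ relevant correlators with a union bound then inflates the per-correlator sample count by $1/\mathbb{P}(\bphi_\A|\rhot)^2$, producing the scaling \eqref{eq:numb_copies_PS_LON}. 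The soundness-completeness argument in the last step of $\mathcal{T}_\text{LPS}$ (Box~\ref{box:test_PS}) then closes the certification claim in the sense of Definition~\ref{def:certification}.

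The main obstacle I anticipate is the careful bookkeeping of how projection onto the product state $\ket{\bphi}_\A$ interacts with the sparsity pattern of the passive network: one has to verify that the list of relevant correlators retains its $\landauO(m(4d^2+1)^n)$ size after the partial expectation and that the effective sparsity parameter $d$ is not enlarged. Since factorizing out the $\A$-modes can only shrink each surviving operator string and cannot create new couplings, this verification is tedious but routine, and it ultimately pins down the exponent of $d$ that appears in \eqref{eq:numb_copies_PS_LON}.
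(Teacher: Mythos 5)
Your proposal is correct and follows essentially the same route as the paper: the paper proves Corollary~\ref{cor:certification_post_selected_LON} by establishing post-selected analogues of the sparsity, stability, and reliable-estimation lemmas (Lemmas~\ref{lem:sparsity_1n_PS}, \ref{lem:stability_1n_PS}, \ref{lem:fidelity_bound_estimation_1n_PS}), in which the only changes are the substitution $\bfGamma\to\bfGamma_\Sys$, $\sigma\to\varsigma$, and the $1/\mathbb{P}(\bphi_\A|\rhot)$ amplification of the stability bound, which is exactly the error rescaling $\varepsilon\to\mathbb{P}(\bphi_\A|\rhot)\,\varepsilon$ and the resulting $1/\mathbb{P}(\bphi_\A|\rhot)^2$ inflation of the sample count that you identify. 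Your observations that the relevant-correlator count and sparsity parameter are unchanged after factoring out the $\A$-modes, and that the final soundness/completeness step carries over verbatim, match the paper's argument; the only cosmetic difference is that the paper combines the per-correlator Chebyshev bounds via independence rather than a union bound.
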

Corollary \ref{cor:certification_post_selected_G} is proven in Section~\ref{sec:proof_corollary_G} and Corollary \ref{cor:certification_post_selected_LON} in Section~\ref{sec:proof_corollary_LON}. Equations~\eqref{eq:numb_copies_PS_Gaussian} and \eqref{eq:numb_copies_PS_LON} correspond to exactly the same expressions as in \eqs~\eqref{eq:numb_copies_bound_Gaussian} and \eqref{eq:numb_copies_bound}, respectively, with the replacements  $\sigma\to\varsigma$ and $\varepsilon\to\mathbb{P}(\bphi_\A|\rhot)\,\varepsilon$. The rescaling of $\varepsilon$ with the factor $\mathbb{P}(\bphi_\A|\rhot)$ originates directly from the new expression for the fidelity given in \eq~\eqref{eq:fid_syst}. As mentioned in Section~\ref{sec:PS_measurement_scheme}, this makes the fidelity bounds $\FSyszero$ and $\FSysone$ more unstable than $\Fzero$ and $\Fone$, leading to the error rescaling discussed earlier. 
In most interesting cases, the post-selection success probability $\mathbb{P}(\bphi_\A|\rhot)$ turns out to be 
exponentially small in $a$. 
Moreover, one can always come up with families of target states and post selection procedures for which $\mathbb{P}(\bphi_\A|\rhot)$ decreases arbitrarily fast in $m$.
In such cases, the scalings in \eqs~\eqref{eq:numb_copies_PS_Gaussian} and \eqref{eq:numb_copies_PS_LON} are not efficient in $m$, inheriting the inefficiency of the state preparation by local measurements and post selection. However, both bounds are efficient in $1/\PP(\bphi_\A|\rhot)$. That is, in every practical situation where state preparation via post selection is feasible, so is state certification. 
Interestingly, even for families of target states and post selection procedures for which $\PP(\bphi_\A|\rhot)$ decays exponentially in $a$, the overall scaling (of both bounds) with $a$ is better than the scalings (of the bounds in \eqs~\eqref{eq:numb_copies_bound} and \eqref{eq:numb_copies_PS_LON}) with $n$. 
Indeed, the bound in \eq~\eqref{eq:numb_copies_PS_LON} grows, just like the bound in \eq~\eqref{eq:numb_copies_bound}, faster than exponentially in $n$. 
Finally, both bounds \eqref{eq:numb_copies_PS_Gaussian} and \eqref{eq:numb_copies_PS_LON} scale polynomially with all the other relevant parameters, including $1/\varepsilon$. Thus, arbitrary $m$-mode target states from the classes $\CLPSG$ and $\CLPSLON$, with constant $n$, are certified by $\mathcal{T}_\text{LPS}$ efficiently in $m$, $1/\PP(\bphi_\A|\rhot)$, and all the other relevant parameters.

\subsection{Corollary of Theorem~\ref{thm:strong_certification}}
\label{sec:PS_Corollary_strong}
Finally, it is possible to show that our certification test is robust also for the locally post-selected target states of the classes $\CLPSG$ or $\CLPSLON$. 
Writing $\rhosysp$ as
\begin{equation}
\label{eq:rhosysp_in_terms_rhosys}
\rhosysp=\FSys\rhosyst+(1-\FSys)\rhosyst^{\perp},
\end{equation}
where $\rhosyst^{\perp}$ is such that $\Tr[{\rhosyst}\,\rhosyst^{\perp}]=0$,
and introducing the \emph{generalised photon mismatch} $\tilde{n}_\Sys^{\perp}$ between $\rhosyst$ and $\rhosysp$ as 
\begin{align}
\label{eq:def_phononmismatch_PS}
\ntperp_\Sys\coloneqq\left\langle \frac{\mathbb{P}(\bphi_\A|\rhot)-1 + (\hat{n}-n)\prod_{j=1}^{n}\hat{n}_{j}}{\mathbb{P}(\bphi_\A|\rhot)}
\right\rangle_{\hat{U}^{\dagger}\rhosyst^{\perp}\otimes\ket{\bphi}_\A\bra{\bphi}_\A\hat{U}}=\left\langle \hat{N}^{(n)}_\Sys\right\rangle_{\rhosyst^{\perp}},
\end{align}
where $\hat{N}^{(n)}_\Sys$ is the same observable as in \eqref{eq:fid_bound_final_LPSLON}, the following holds true.
\begin{corollary}[Robust quantum certification of locally post-selected states] 
\label{col:strong_certification_PS}
Under the same conditions as in Corollaries~\ref{cor:certification_post_selected_G} and \ref{cor:certification_post_selected_LON}, test $\mathcal{T}$ from Box~\ref{box:test} is a robust certification test for $\rhosyst \in \CLPSG \cup \CLPSLON$ with fidelity gap 
\begin{equation}
\label{eq:delta_PS}
\Delta_\Sys\coloneqq\max\left\{\frac{2\varepsilon+(1-\FT)(\ntperp_\Sys-1)}{\ntperp_\Sys},2\varepsilon\right\},
\end{equation}
where $\ntperp_\Sys$ is the generalised photon mismatch. 
\end{corollary}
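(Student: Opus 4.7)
The plan is to mirror the proof of Theorem~\ref{thm:strong_certification}, with the Gaussian/linear-optical observable $\hat N^{(\n)}$ replaced by the post-selected nullifier $\hat N^{(\n)}_\Sys$ and the fidelity $F$ replaced by $\FSys$. The soundness half of Definition~\ref{def:robust_certification} (rejection whenever $\FSys<\FT$) is already established by Corollaries~\ref{cor:certification_post_selected_G} and \ref{cor:certification_post_selected_LON}, since those corollaries show that with probability at least $1-\alpha$ the estimate $\FSysestone$ lies in $[\FSysn-\varepsilon,\FSysn+\varepsilon]$ and, being an estimate of a lower bound to $\FSys$, triggers rejection below threshold. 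Thus what remains is to establish completeness: that every preparation $\rhosysp$ with $\FSys\geq \FT+\Delta_\Sys$ is accepted with probability at least $1-\alpha$.

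The key computation is to relate $\FSysn$ directly to $\FSys$ and the generalised photon mismatch. Inserting the decomposition~\eqref{eq:rhosysp_in_terms_rhosys} into the expectation value appearing in~\eqref{eq:bound_PS} and using linearity gives
\begin{equation}
\FSysn = 1-\FSys\,\langle \hat N^{(\n)}_\Sys\rangle_{\rhosyst}-(1-\FSys)\,\langle \hat N^{(\n)}_\Sys\rangle_{\rhosyst^{\perp}}.
\end{equation}
By the nullifier property of $\hat N^{(\n)}_\Sys$ (established in the Methods section, where $\rhosyst$ is identified as the unique null-eigenvalue eigenstate of $\hat N^{(\n)}_\Sys$), the first expectation value vanishes, while the second equals $\ntperp_\Sys$ by the definition~\eqref{eq:def_phononmismatch_PS}. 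Hence
\begin{equation}
\FSysn=1-(1-\FSys)\,\ntperp_\Sys.
\end{equation}

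From step~\ref{item:accept_reject_condition_SI} of Box~\ref{box:test_PS} together with the estimation guarantee $\FSysestone\geq \FSysn-\varepsilon$, the test accepts whenever $\FSysn\geq \FT+2\varepsilon$. Substituting the identity above and solving for $\FSys$ yields two regimes. When $\ntperp_\Sys\leq 1$ one has $\FSysn\geq \FSys$, so $\FSys\geq \FT+2\varepsilon$ already suffices; when $\ntperp_\Sys>1$ one needs $\FSys\geq 1-(1-\FT-2\varepsilon)/\ntperp_\Sys$, which rearranges to $\FSys\geq \FT+[2\varepsilon+(1-\FT)(\ntperp_\Sys-1)]/\ntperp_\Sys$. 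Taking the maximum of the two resulting gaps recovers precisely~\eqref{eq:delta_PS}. The only subtlety, which I regard as the main (though minor) obstacle, is the careful bookkeeping of the two regimes $\ntperp_\Sys\lessgtr 1$ to justify the $\max$ in~\eqref{eq:delta_PS}; everything else is linearity of expectation, the nullifier identity, and invocation of the already-proven corollaries.
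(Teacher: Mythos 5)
Your proposal is correct and follows essentially the same route as the paper: the paper proves this corollary by declaring it identical to the proof of Theorem~\ref{thm:strong_certification} under the replacements $F\to\FSys$, $\Fone\to\FSysone$, $\Festone\to\FSysestone$, $\Delta\to\Delta_\Sys$, $\ntperp\to\ntperp_\Sys$, and that proof rests on exactly your identity $\FSysn=1-(1-\FSys)\,\ntperp_\Sys$ (obtained from the decomposition \eqref{eq:rhosysp_in_terms_rhosys}, the vanishing of the nullifier expectation on $\rhosyst$, and the definition \eqref{eq:def_phononmismatch_PS}) together with the estimation guarantee \eqref{eq:statement_new_lemma_1n_PS}. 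Your explicit bookkeeping of the two regimes $\ntperp_\Sys\lessgtr 1$ is a slightly more detailed presentation of the same step the paper handles with the single inequality $\Delta_\Sys\geq\bigl(2\varepsilon+(1-\FT)(\ntperp_\Sys-1)\bigr)/\ntperp_\Sys$, which holds for either branch of the $\max$.
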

The proof is identical to the proof of Theorem~\ref{thm:strong_certification} presented in Section~\ref{Proofofthm:strong_certificationGaussian} but with the replacements $F\to\FSys$, $\Fone\to\FSysone$, $\Festone\to\FSysestone$, $\Delta\to\Delta_\Sys$, and $\ntperp\to\ntperp_\Sys$.

\section{Proofs of the theorems and corollaries}
\label{Proofs}
Before going to the proofs, we devote two sections to establish necessary notation, review some known facts, and prove a general lemma.

\subsection{Norms}
Here, we introduce some helpful notation used in the proofs and review a few facts about norms on finite dimensional vector spaces. 
The \emph{max norm} $\norm{\cdot}_{\max{}}$ of a tensor is the largest of the absolute values of its entries. For a matrix $\vec A$, for example, $\norm{\vec A}_{\max{}} \coloneqq \max_{k,l} |A_{k,l}|$.
For $p \in [1, \infty]$, we denote the \emph{vector $p$-norm} of a vector $\vec a$ by $\norm{\vec a}_p$ and the \emph{Schatten $p$-norm} of a matrix $\vec A$ by $\norm{\vec A}_p$, which is defined to be the vector $p$-norm of the vector of its singular values.
For any matrix $\vec A$, we define $\vect(\vec A)$ to be a vector containing all the entries of $\vec A$ (in some order). Then one can see that
\begin{equation}\label{eq:2norm_property}
 \norm{\vec A}_2 = \norm{\vect(\vec A)}_2  
\end{equation}
and 
\begin{equation}\label{eq:max_norm_property}
 \norm{\vec A}_{\max{}} = \norm{\vect(\vec A)}_\infty  .
\end{equation}
For the vector and Schatten $p$-norm of vectors with $N$ elements and $N\times N$ matrices, respectively, the following inequalities hold
\begin{equation}\label{eq:norm_ineqs}
 \norm{\argdot}_1 \leq \sqrt N \norm{\argdot}_2 \leq N \norm{\argdot}_\infty  .
\end{equation}
Because the Schatten $\infty$-norm is induced by the vector $2$-norm, i.e.,
\begin{equation}
 \norm{\vec A}_\infty = \sup_{\vec y} \frac{\norm{\vec A \vec y}_2}{\norm{\vec y}_2}  ,
\end{equation}
it follows that for any two vectors $\bfepsilon$ and $\vec x$
\begin{equation}\label{eq:norm_of_rank_one_matrix}
 \norm{\bfepsilon \vec x^T}_\infty \leq \norm{\bfepsilon}_2 \norm{\vec x}_2  .
\end{equation}

\subsection{Reliable estimation of expectation values from samples} 
We continue by proving a general large-deviation bound for estimates of expectation values from a finite number of measurements on independent copies, which we need for the proofs of Theorems~\ref{thm:certification_gaussian} and \ref{thm:certification_1n}.

\begin{lemma}[Reliable estimation of multiple expectation values from samples]
\label{lem:general_estimation}
Let $\sigma >0$, $\rho$ be a state, and let $\hat A_1, \dots, \hat A_N$ be observables with expectation values 
$A_i \coloneqq \Tr[\rho \hat A_i]$ and variances bounded as 
$\Tr[\rho \hat A_i^2] - A_i^2 \leq \sigma^2$.
For each $i \in [N]$ and $\chi$, let $X_i^{(\chi)}$ be the random variable given by the measurement statistics of $\hat A_i$ on state $\rho$; such that, in particular, the $(X_i^{(\chi)})_{i,\chi}$ are independent random variables 
 and the \emph{finite sample average} over $\nsA$ measurements of $\hat A_i$ is the random variable
\begin{equation} \label{eq:def_sample_average}
  A_i^\ast \coloneqq \kw \nsA \sum_{\chi=1}^\nsA X_i^{(\chi)} . 
\end{equation} 
Then, the $\{A_i^\ast\}_i$ are independent and, for every $\epsilon>0$ and $\ol \alpha\in [1/2,1)$, it holds that
\begin{equation}
\label{eq:suficient}
 \PP\bigl[ \forall i : \ |A_i^\ast - A_i| \leq \varepsilon \bigr] \geq \ol \alpha   
\end{equation}
whenever
\begin{equation}
 \nsA \geq \frac{\sigma^2(N+1)}{\varepsilon^2\ln(1/\ol \alpha)} .
\end{equation}
\end{lemma}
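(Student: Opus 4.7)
My plan is to combine Chebyshev's inequality applied separately to each estimator with the key structural observation that the $\{A_i^\ast\}_i$ are mutually independent. Mutual independence holds because each $A_i^\ast$ is a deterministic function only of the samples $X_i^{(1)}, \dots, X_i^{(\nsA)}$, and the hypothesis declares all $(X_i^{(\chi)})_{i,\chi}$ jointly independent; thus disjoint subgroups yield independent functions. This independence is precisely what lets the proof sidestep a naive union bound over $i$ (which would only give a $1/(1-\ol\alpha)$ dependence) and instead recover the claimed $\ln(1/\ol\alpha)$ scaling.

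The first computational step is to bound the variance of each $A_i^\ast$. Since the sample mean of $\nsA$ i.i.d.\ copies has variance $\operatorname{Var}(\hat A_i)/\nsA \leq \sigma^2/\nsA$, Chebyshev's inequality yields $\PP[|A_i^\ast - A_i| > \varepsilon] \leq \delta$ with $\delta \coloneqq \sigma^2/(\nsA \varepsilon^2)$. The hypothesis on $\nsA$ is equivalent to $\delta \leq \ln(1/\ol\alpha)/(N+1)$. Independence then gives the clean factorisation
\begin{equation*}
\PP\bigl[\forall i : |A_i^\ast - A_i| \leq \varepsilon\bigr] = \prod_{i=1}^N \PP\bigl[|A_i^\ast - A_i| \leq \varepsilon\bigr] \geq (1-\delta)^N ,
\end{equation*}
reducing the claim to showing $(1-\delta)^N \geq \ol\alpha$.

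For that final inequality I would use the elementary bound $-\ln(1-\delta) \leq \delta/(1-\delta)$ (immediate from monotonicity of the integrand in $-\ln(1-\delta) = \int_0^\delta dt/(1-t)$), take logarithms, and rearrange. The resulting sufficient condition is $\delta \leq \ln(1/\ol\alpha)/(N + \ln(1/\ol\alpha))$, which is implied by the hypothesis whenever $\ln(1/\ol\alpha) \leq 1$, i.e.\ whenever $\ol\alpha \geq 1/e$. The restriction $\ol\alpha \geq 1/2$ in the statement is more than enough to secure this.

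The only step requiring any genuine thought is this last substitution: producing a clean $(N+1)$ in the denominator rather than $N + \ln(1/\ol\alpha)$, at the mild cost of assuming $\ol\alpha \geq 1/2$. The Chebyshev step is routine and the appeal to independence is immediate from the stated hypothesis, so I do not anticipate further obstacles.
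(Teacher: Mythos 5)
Your proof is correct, and its skeleton — Chebyshev applied to each sample mean, then exploiting the mutual independence of the $\{A_i^\ast\}_i$ to factorise the joint success probability into $(1-\delta)^N$ with $\delta=\sigma^2/(\nsA\varepsilon^2)$ — is exactly the paper's. The only divergence is in how the closing inequality $(1-\delta)^N\geq\ol\alpha$ is converted into the stated bound on $\nsA$. The paper inverts it exactly to $\nsA\geq\lceil(\sigma^2/\varepsilon^2)/(1-\ol\alpha^{1/N})\rceil$ and then invokes an auxiliary bound, $1/(1-\e^{-1/x})\leq x+\tfrac{1}{2+2x}+\tfrac12$, whose proof occupies a separate appendix and rests on a second-derivative comparison; the condition $\ol\alpha\geq1/2$ then tames the middle term. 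You instead take logarithms and use the one-line estimate $-\ln(1-\delta)\leq\delta/(1-\delta)$, arriving at the sufficient condition $\delta\leq\ln(1/\ol\alpha)/(N+\ln(1/\ol\alpha))$, which the hypothesis implies precisely because $\ol\alpha\geq1/2>\e^{-1}$ gives $\ln(1/\ol\alpha)\leq1$. Your route is more elementary and dispenses with the appendix lemma entirely; the paper's route has the mild virtue of first exhibiting the exact threshold $\nsA_{\mathrm{opt}}$ before relaxing it, which makes the slack in the final bound explicit. One cosmetic point: you should note in passing that $\delta\leq1/(N+1)<1$ under the hypothesis, so that $1-\delta>0$ and your logarithmic manipulation is legitimate — this is immediate, but worth a clause.
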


\begin{proof}
The sample averages $\{A_i^\ast\}_i$ are independent by definition.
By Chebyshev's inequality it holds that
\begin{equation}
  \forall i\in[\nsA] :\quad \PP\bigl[ |A_i^\ast - A_i| > \varepsilon \bigr] < \frac{\sigma^2}{\nsA \varepsilon^2} . 
\end{equation}
Since the $\{A_i^\ast\}_i$ are independent random variables, this yields
\begin{equation}
 \PP\bigl[ \forall i : \ |A_i^\ast - A_i| \leq \varepsilon \bigr] 
 \geq \left(1 - \frac{\sigma^2}{\nsA \varepsilon^2}\right)^N .
\end{equation}
Finally,
\begin{equation}
 \left(1 - \frac{\sigma^2}{\nsA \varepsilon^2}\right)^N \geq \ol \alpha 
\end{equation}
is satisfied if
\begin{equation}
 \nsA \geq \nsA_{\mathrm{opt}} \coloneqq \left\lceil \frac{\sigma^2/\varepsilon^2}{1-\ol\alpha^{1/N}} \right\rceil .
\end{equation}
To finish the proof we upper bound
\begin{equation}
 \nsA_{\mathrm{opt}} = \left\lceil \frac{\sigma^2/\varepsilon^2}{1-\e^{-\frac{\ln(1/\ol \alpha)}{N}}} \right\rceil .
\end{equation}
Using that (see Section~\ref{sec:app_bound}) for all $x \geq 0$
\begin{equation} 
\label{eq:bound_for_cs}
 \frac{1}{1-\e^{-1/x}} \leq x + \frac 1 {2+2x} + \frac 1 2,
\end{equation}
it follows that
\begin{equation}
 \nsA_{\mathrm{opt}} \leq 
 \frac{\sigma^2}{\varepsilon^2} 
 \left( \frac{N}{\ln(1/\ol \alpha)} + \kw{2+\frac{2N}{\ln(1/\ol \alpha)}} +\kw 2 \right) .
\end{equation} 
To simplify the right-hand side of this inequality, we use that, since $\ol \alpha \geq \frac{1}{2}\geq\e^{-1}$, it holds that $\ln(1/\ol \alpha)\leq 1$ and, therefore, $2+\frac{2N}{\ln(1/\ol \alpha)}\geq4$. So, using again that $\ln(1/\ol \alpha)\leq 1$, we finally arrive at
\begin{equation}
 \nsA_{\mathrm{opt}} \leq 
 \frac{\sigma^2}{\varepsilon^2} 
 \left( \frac{N}{\ln(1/\ol \alpha)} + \frac{3}{4} \right) \leq \frac{\sigma^2(N+1)}{\varepsilon^2\ln(1/\ol \alpha)}.
\end{equation} 
\end{proof}

\subsection{Proof of Theorem~\ref{thm:certification_gaussian}}
\label{sec:ProofsGaussian}
Before the proof of Theorem~\ref{thm:certification_gaussian}, we present three auxiliary lemmas  specific to the fidelity bound $\Fzero$ for the Gaussian case.

The first lemma upper-bounds the number of elements of $\bfGamma^{(1)}$ which the fidelity bound $\Fzero$ depends on.
\begin{lemma}[Sparsity of the Gaussian fidelity bound]
\label{lem:sparsity}
$\Fzero$ depends on at most $2 m \kappa$ elements of $\bfGamma^{(1)}$.
We call these the \emph{relevant elements} of $\bfGamma^{(1)}$.
\end{lemma}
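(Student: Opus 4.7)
The plan is to track exactly which entries of $\bfGamma^{(1)}$ appear in the explicit expression \eqref{eq:boundexplicit} for $\Fzero$ and then bound their number via the sparsity of $\O$.

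First, I would observe that in \eq~\eqref{eq:boundexplicit} the only term involving the second moments is $\Tr[\vec M\,\bfGamma^{(1)}]$, where
\[
\vec M \coloneqq \O\,\D^{-2}\,\O^T,\qquad M_{l,l'}=\sum_{k=1}^{2m} o^{(k)}_l\,D^{-2}_{k,k}\,o^{(k)}_{l'}.
\]
Expanding the trace as $\sum_{l,l'} M_{l,l'}\,\Gamma^{(1)}_{l,l'}$, a given entry $\Gamma^{(1)}_{l,l'}$ enters the fidelity bound only when $M_{l,l'}\neq 0$. Since both $\vec M$ and $\bfGamma^{(1)}$ are symmetric, it suffices to count non-zero entries of $\vec M$ in the range $l\leq l'$.

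Next I would exploit the sparsity of $\O$. By the coupling hypothesis, each output mode couples to at most $d$ input modes, so each column $\o^{(k)}$ has at most $2d$ non-zero components. In addition, by the symmetric nature of this coupling (an input mode appears in the expansion of at most $2d$ output quadratures), each \emph{row} of $\O$ also has at most $2d$ non-zero components. For a fixed row index $l$, $M_{l,l'}\neq 0$ requires the existence of some $k$ with both $o^{(k)}_l\neq 0$ and $o^{(k)}_{l'}\neq 0$. Only $\leq 2d$ columns $k$ satisfy the first condition (row sparsity at $l$), and for each of these columns only $\leq 2d$ indices $l'$ satisfy the second condition (column sparsity of $\O$). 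Hence row $l$ of $\vec M$ has at most $(2d)(2d)=4d^2$ non-zero entries; trivially it also has at most $2m$ non-zero entries.

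Combining the two bounds, each row of $\vec M$ carries at most $\min\{4d^2,2m\}$ non-zero entries. A short case distinction ($d^2\leq m$ vs.\ $d^2>m$) yields $\min\{4d^2,2m\}\leq 2\kappa$ with $\kappa=2\min\{d^2,m\}$. Summing over the $2m$ rows and dividing by $2$ to account for the symmetry $M_{l,l'}=M_{l',l}$ and $\Gamma^{(1)}_{l,l'}=\Gamma^{(1)}_{l',l}$ (diagonal entries absorbed in the $\lfloor\cdot\rfloor$), the number of distinct relevant entries of $\bfGamma^{(1)}$ is at most $2m\kappa$, proving the claim.

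The only delicate step is justifying the row sparsity of $\O$ (as opposed to the column sparsity, which is immediate from the coupling assumption). I expect this to be the main obstacle: one must verify that ``mode range $d$'' is indeed a symmetric notion relating input and output quadratures, so that both rows and columns of $\O$ inherit the $2d$-sparsity. Once this is settled, the rest is a straightforward counting argument using the symmetry of $\vec M$ and $\bfGamma^{(1)}$.
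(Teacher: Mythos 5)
Your starting point --- isolating the term $\Tr\bigl[\vec M\,\bfGamma^{(1)}\bigr]$ with $\vec M\coloneqq\O\D^{-2}\O^T$ and bounding the support of $\vec M$ via the sparsity of $\O$ --- is exactly the paper's, but your counting has a genuine gap at the step you yourself flag. You count the non-zero entries of $\vec M$ row by row, and for this you need that each \emph{row} of $\O$ has at most $2d$ non-zero entries. The paper's mode range only guarantees \emph{column} sparsity: $d$ is defined as the maximal number of input modes to which each \emph{output} mode is coupled, so each column $\o^{(k)}$ has at most $2d$ non-zero entries, but nothing in the hypotheses bounds how many output modes a single input mode feeds into. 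This is not a symmetry of the definition waiting to be "verified" --- the definition is simply one-sided, and an input mode may a priori couple to up to $m$ output modes (orthogonality of $\O$ does not rescue you here). Without row sparsity, the number of columns $k$ with $o^{(k)}_l\neq 0$ can be as large as $2m$, your per-row bound degenerates to the trivial $2m$, and you only recover $2m\kappa$ in the uninteresting regime $d^2\geq m$.

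The paper avoids the issue by counting over column pairs rather than rows. Writing
\begin{equation*}
\vec M=\sum_{j=1}^m\Bigl(s_j^{-2}\,\o^{(2j-1)}(\o^{(2j-1)})^T+s_j^{2}\,\o^{(2j)}(\o^{(2j)})^T\Bigr),
\end{equation*}
the shared-sparsity property of each pair $\o^{(2j-1)},\o^{(2j)}$ (at least $2(m-d)$ common zero entries) puts the $j$-th summand's support inside $S_j\times S_j$ for an index set $S_j$ with $|S_j|\leq 2d$, hence at most $4d^2$ non-zero entries per summand. The support of $\vec M$ is contained in the union over $j$, giving at most $\min\{4md^2,4m^2\}=4m\min\{d^2,m\}=2m\kappa$ relevant elements, with no information about the rows of $\O$ required. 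As a minor further point, your division by two using the symmetry of $\vec M$ and $\bfGamma^{(1)}$ is not needed for the lemma as stated: the paper counts all $2m\kappa$ entries on which $\Fzero$ depends and only remarks afterwards that symmetry reduces the number that must actually be measured.
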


\begin{proof}
\Eq~\eqref{eq:boundexplicit} can be written as
\begin{equation}\label{eq:Fzero_expanded}
 \Fzero
 =1 +\frac{m}{2}
 +\vec{x}^T\O \D^{-2} \O^T(2\bfgamma-\vec{x}) 
 -\Tr\bigl[\O \D^{-2} \O^T \bfGamma\bigr]  .
\end{equation}
The last term can, in turn, be expressed as
\begin{align}
\label{eq:counting_sparsity}
\Tr\bigl[\O \D^{-2} \O^T \bfGamma^{(1)}\bigr] 
 &= 
 \sum_{k=1}^{2m} D^{-2}_{k,k} (\o^{(k)})^T\bfGamma^{(1)}\o^{(k)} 
  \nonumber \\
 &=
  \Tr\biggl[ \sum_{j=1}^m \Bigl\{
    s_j^{-2} \o^{(2j-1)} (\o^{(2j-1)})^T
    +
    s_j^2 \o^{(2j)} (\o^{(2j)})^T\Bigr\}\bfGamma^{(1)}
 \biggr]
 . 
\end{align}
Due to the sparsity of $\O$, as described in Section~\ref{sec:measurement_scheme}, each matrix $s_j^{-2} \o^{(2j-1)} (\o^{(2j-1)})^T + s_j^2 \o^{(2j)} (\o^{(2j)})^T$ has at most $4d^2$ non-zero elements. Hence, summing over $j$, we see that $\Fzero$ depends on at most $4m\min\{d^2,m\} = 2 \kappa m$ elements of $\bfGamma^{(1)}$. 
\end{proof}
Note that the counting argument following \eq~\eqref{eq:counting_sparsity} does not take into account the fact that $\bfGamma^{(1)}$ is symmetric. Taking this fact into account, we see that, from the $4d^2$ relevant elements of $\bfGamma^{(1)}$ that appear in each term of the trace \eqref{eq:counting_sparsity}, only $d(2d+1)$ are independent. Thus, even though $2 m \kappa$ entries of $\bfGamma^{(1)}$ contribute to $\Fzero$, only $m\min\{d(2d+1),4m\}\leq2 m \kappa$ of them must actually be measured. 

The second auxiliary lemma bounds the deviation of $\Festzero$ from $\Fzero$ in terms of the errors made in the estimation of the individual expectation values entering $\Fzero$. 
\begin{lemma}[Stability of the Gaussian fidelity bound]
  \label{lem:stability}
  Let $\Festzero$ be defined like $\Fzero$ in \eq~\eqref{eq:boundexplicit} but with $\bfgamma$ and $\bfGamma^{(1)}$ replaced by $\bfgamma^\ast$ and $\bfGamma^{{(1)}\ast}$ and let $\epsilonmax \coloneqq \norm{\bfgamma-\bfgamma^\ast}_{\max{}}$ and $\varepsilonmax^{(1)} \coloneqq \norm{\bfGamma^{(1)}-\bfGamma^{{(1)}\ast}}_{\max{}} $. Then
  \begin{equation}
    \label{eq:stability_proof}
    |\Fzero - \Festzero| \leq 
    2 \smax^2 \left(\varepsilonmax^{(1)} \sqrt{\kappa}m + \epsilonmax \norm{\vec x}_2\sqrt{2m} \right) .
  \end{equation}
\end{lemma}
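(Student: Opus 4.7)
The plan is to start from the expanded form of $\Fzero$ given in equation \eqref{eq:Fzero_expanded} and directly subtract the analogous expression for $\Festzero$. Since the constants $1+m/2$ and the $\vec x^T\O\D^{-2}\O^T\vec x$ piece do not depend on the estimates, they cancel, leaving only
\begin{equation*}
\Fzero-\Festzero = 2\vec{x}^T\O\D^{-2}\O^T(\bfgamma-\bfgamma^\ast) \;-\; \Tr\bigl[\O\D^{-2}\O^T(\bfGamma^{(1)}-\bfGamma^{{(1)}\ast})\bigr].
\end{equation*}
The triangle inequality then reduces the problem to bounding two terms, each via a different norm argument dictated by the data Arthur has: an $\infty$-norm error $\epsilonmax$ on the first moments and an entry-wise max-norm error $\varepsilonmax^{(1)}$ on the second moments.

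For the linear (displacement) term I would use Cauchy--Schwarz, $|\vec x^T\O\D^{-2}\O^T\vec y|\leq\norm{\vec x}_2\,\norm{\O\D^{-2}\O^T}_\infty\,\norm{\vec y}_2$, with $\vec y=\bfgamma-\bfgamma^\ast$. Because $\O$ is orthogonal and the diagonal entries of $\D^{-2}$ lie in $[\smax^{-2},\smax^2]$, one has $\norm{\O\D^{-2}\O^T}_\infty=\smax^2$. Combining with the elementary bound $\norm{\bfgamma-\bfgamma^\ast}_2\leq\sqrt{2m}\,\epsilonmax$ (a special case of \eqref{eq:norm_ineqs}) yields a contribution $\leq 2\smax^2\norm{\vec x}_2\sqrt{2m}\,\epsilonmax$, matching the second term of \eqref{eq:stability_proof}.

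For the trace term I would exploit the sparsity established in Lemma~\ref{lem:sparsity}: the matrix $\vec M\coloneqq\O\D^{-2}\O^T$ has at most $2m\kappa$ nonzero entries. Writing the trace as a sum over indices and applying Hölder entry-wise gives
\begin{equation*}
\bigl|\Tr[\vec M\,(\bfGamma^{(1)}-\bfGamma^{{(1)}\ast})]\bigr| \leq \norm{\vect(\vec M)}_1\,\norm{\bfGamma^{(1)}-\bfGamma^{{(1)}\ast}}_{\max{}} = \norm{\vect(\vec M)}_1\,\varepsilonmax^{(1)}.
\end{equation*}
Then Cauchy--Schwarz on the sparse support gives $\norm{\vect(\vec M)}_1\leq\sqrt{2m\kappa}\,\norm{\vect(\vec M)}_2=\sqrt{2m\kappa}\,\norm{\vec M}_2$ via \eqref{eq:2norm_property}. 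By orthogonal invariance of the Schatten-2 norm, $\norm{\vec M}_2=\norm{\D^{-2}}_2=(\sum_{j=1}^m(s_j^{-4}+s_j^4))^{1/2}\leq\smax^2\sqrt{2m}$. Putting these together yields a contribution $\leq 2\smax^2 m\sqrt{\kappa}\,\varepsilonmax^{(1)}$, which is precisely the first term of \eqref{eq:stability_proof}.

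The argument is essentially bookkeeping in three different norms (operator, Schatten-2/Frobenius, and entry-wise $\ell_1$/$\ell_\infty$), so the main subtlety---and the only place where I expect care to be needed---is in the trace term: one must use the sparsity of $\vec M$ to convert an entry-wise $\ell_\infty$ error into the Frobenius norm of $\vec M$, rather than naively bounding $|\Tr[\vec M\vec N]|\leq\norm{\vec M}_2\norm{\vec N}_2$, because $\norm{\bfGamma^{(1)}-\bfGamma^{{(1)}\ast}}_2$ could carry an extra factor $\sqrt{2m}$ that would spoil the scaling. Apart from that, every step reduces to a standard norm inequality, and summing the two contributions gives \eqref{eq:stability_proof}.
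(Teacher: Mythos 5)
Your proof is correct and follows essentially the same route as the paper's: the same decomposition of $\Fzero-\Festzero$ into a displacement term and a trace term, the same treatment of the displacement term via $\norm{\O\D^{-2}\O^T}_\infty=\smax^2$ and $\norm{\bfgamma-\bfgamma^\ast}_2\leq\sqrt{2m}\,\epsilonmax$, and the same use of the $2m\kappa$ sparsity count from Lemma~\ref{lem:sparsity} together with the chain of norm inequalities \eqref{eq:norm_ineqs} to produce the factor $2m\sqrt{\kappa}\,\smax^2$. The only cosmetic difference is in the trace term, where you pair the entrywise $\ell_1$-norm of $\O\D^{-2}\O^T$ (bounded via its sparse support and its Frobenius norm) against $\varepsilonmax^{(1)}$, whereas the paper pairs the Schatten $\infty$-norm of $\O\D^{-2}\O^T$ against the trace norm of the error matrix restricted to its relevant entries; the two pairings simply transpose where the sparsity count and the $\sqrt{2m}$ factors are collected and yield the identical bound.
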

\begin{proof}
For convenience, we define the \emph{error vector}
    \begin{align}
      \label{eq:def_error_vector}
      \bfepsilon &\coloneqq \bfgamma-\bfgamma^\ast\in \RR^{2m} \\
            \intertext{and the \emph{error matrix}}
\label{eq:def_error_matrix}
      \Epsilon^{(1)} &\coloneqq \bfGamma^{(1)}-\bfGamma^{{(1)}\ast} .
    \end{align}
    The fidelity estimation error can then be written as 
  \begin{align}
    \label{eq:error_in_terms_of_O_D_x}
    \Fzero - \Festzero 
    &= \Tr\big[\O \D^{-2} \O^T (\Epsilon^{(1)}+2\bfepsilon\vec{x}^T)\big]  .
  \end{align}
  Due to H\"older's inequality, 
  \begin{align}
    | \Fzero - \Festzero |
\nonumber
&\leq \norm{\O{\D^{-2}}\O^T}_{\infty}\norm{\Epsilon^{(1)}+2\bfepsilon\vec{x}^T}_1 \\
&\leq \norm{\D^{-2}}_{\infty}\left( \norm{\Epsilon^{(1)}}_1+2 \norm{\bfepsilon}_2 \norm{\vec{x}}_2\right)  ,
\label{eq:ineq_in_pf_norms_L}
\end{align}
where in the last step we have used the bound \eqref{eq:norm_of_rank_one_matrix}. 
The second inequality in \eq~\eqref{eq:norm_ineqs} implies that $\norm{\bfepsilon}_2 \leq \sqrt{2m} \norm{\bfepsilon}_\infty$.
It remains to bound $\norm{\Epsilon}_1$. 
To this end, we use the first inequality in \eq~\eqref{eq:norm_ineqs} and \eq~\eqref{eq:2norm_property} to arrive at 
\begin{equation}
 \norm{\Epsilon^{(1)}}_1 \leq \sqrt{2m} \norm{\vect(\Epsilon^{(1)})}_2  .
\end{equation}
According to Lemma~\ref{lem:sparsity}, $\Fzero$ depends on at most $2 \kappa m$ entries of $\Epsilon^{(1)}$.
Without loss of generality we can hence omit all other elements of $\Epsilon^{(1)}$ and thus take $\vect(\Epsilon^{(1)})$ as a vector with at most $2 \kappa m$ elements.
Using this fact and the second inequality in \eq~\eqref{eq:norm_ineqs} we obtain 
\begin{align}
 \norm{\Epsilon^{(1)}}_1 
\nonumber
 &\leq \sqrt{2m} \sqrt{2m\kappa} \norm{\vect(\Epsilon^{(1)})}_\infty
 \\
 &= 2m\sqrt{\kappa} \norm{\Epsilon^{(1)}}_{\max{}}  , 
\end{align}
where we have used \eq~\eqref{eq:max_norm_property} in the last equality.
Finally, putting everything together and using that, by definition, $\norm{\D^{-2}}_{\infty}=\smax^2$, we arrive at the inequality \eqref{eq:stability_proof}.
\end{proof}

The third auxiliary lemma shows that the estimate of the fidelity lower-bound for target states $\rhot\in\CG$ obtained with the measurement scheme $\MG$ in Box~\ref{box:measurement_scheme} is reliable. This lemma is potentially interesting in its own right in scenarios other than certification.
\begin{lemma}[Reliable estimation of the Gaussian fidelity bound]
\label{lem:fidelity_bound_estimation}
Let $\alpha \in (0,1/2]$ and $\varepsilon>0$. Let $\Festzero$ be defined like $\Fzero$ in \eq~\eqref{eq:boundexplicit} but with $\bfgamma$ and $\bfGamma^{(1)}$ replaced by $\bfgamma^\ast$ and $\bfGamma^{{(1)}\ast}$, where $\bfgamma^\ast$ and $\bfGamma^{{(1)}\ast}$ are obtained as described by $\MG$ from  
\begin{equation} 
\label{eq:total_number_of_samples_gaussian_case}
  \ns = 2m \ns_1 + 2\kappa m \ns_2 
\end{equation}
copies of $\rhop$ 
, with $\ns_1$ and $\ns_2$ integers such that
\begin{subequations}
\label{eq:both_ns_bounds}
\begin{align}
 \ns_1 &\geq 2^6
 \frac{\sigma_1^2 (2m + 1)\,m\, \smax^4\, \norm{\vec x}^2_2}{\varepsilon^2 \ln\left(\tfrac{1}{1-\alpha} \right)}
 \label{eq:bound_for_cs1}
 \intertext{and}
 \ns_2 &\geq 2^5 
 \frac{\sigma_2^2 (2 \kappa m + 1)\, m^2\, \smax^4 \, \kappa}{\varepsilon^2 \ln\left(\tfrac{1}{1-\alpha} \right)}. 
 \label{eq:bound_for_cs2}
\end{align}
\end{subequations}
Then, 
\begin{equation}
\label{eq:statement_new_lemma}
  \PP\left[ |\Fzero - \Festzero| \leq \varepsilon\right]\geq 1-\alpha.
\end{equation}
\end{lemma}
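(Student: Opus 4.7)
The plan is to combine the deterministic stability estimate of Lemma~\ref{lem:stability} with the statistical concentration guarantee of Lemma~\ref{lem:general_estimation}, using the sparsity count of Lemma~\ref{lem:sparsity} to keep the number of second moments that need to be estimated under control.

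First I would split the total error budget. By Lemma~\ref{lem:stability}, a sufficient condition for $|\Fzero - \Festzero|\leq \varepsilon$ is
\begin{equation*}
 \epsilonmax \leq \frac{\varepsilon}{4\smax^2 \norm{\vec x}_2 \sqrt{2m}}
 \quad\text{and}\quad
 \varepsilonmax^{(1)} \leq \frac{\varepsilon}{4\smax^2 \sqrt{\kappa}\, m},
\end{equation*}
since each of the two summands in the stability bound is then at most $\varepsilon/2$. So the task reduces to ensuring that $\bfgamma^\ast$ and $\bfGamma^{(1)\ast}$ meet these max-norm accuracies jointly with high probability.

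Next I would apply Lemma~\ref{lem:general_estimation} twice, once to the $N_1 = 2m$ single-quadrature observables entering $\bfgamma$, and once to the set of \emph{relevant} second moments entering $\bfGamma^{(1)}$. By Lemma~\ref{lem:sparsity}, only $N_2 = 2\kappa m$ entries of $\bfGamma^{(1)}$ actually appear in $\Fzero$, and $\Festzero$ can be computed ignoring the rest; thus only these $N_2$ expectation values need be estimated. Split the failure probability symmetrically by asking each of the two joint events to hold with probability at least $\sqrt{1-\alpha}\geq 1/2$, so that by independence of the two groups of measurements the intersection holds with probability at least $1-\alpha$. Substituting the target accuracies above and the value $\ln(1/\sqrt{1-\alpha})=\tfrac{1}{2}\ln(1/(1-\alpha))$ into the sample-size bound of Lemma~\ref{lem:general_estimation}, with variance bounds $\sigma_1^2$ and $\sigma_2^2$ respectively, produces exactly the prefactors $2^6$ and $2^5$ appearing in \eqref{eq:bound_for_cs1} and \eqref{eq:bound_for_cs2}.

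Combining the two steps, under the stated sample budgets the max-norm accuracy requirements on $\bfgamma^\ast$ and $\bfGamma^{(1)\ast}$ are simultaneously satisfied with probability at least $1-\alpha$, and then Lemma~\ref{lem:stability} yields $|\Fzero-\Festzero|\leq \varepsilon$, proving \eqref{eq:statement_new_lemma}. The main bookkeeping subtlety is matching the numerical prefactors: one must (i) use the sparsity reduction $N_2=2\kappa m$ from Lemma~\ref{lem:sparsity} rather than the naive count $4m^2$, (ii) carefully split the confidence level as $\sqrt{1-\alpha}$ per half to pick up the factor of two in the logarithm, and (iii) verify that the sample averages for $\bfgamma^\ast$ and $\bfGamma^{(1)\ast}$ are independent, which holds because $\MG$ prescribes disjoint experimental runs for the first and second moments. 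Everything else is a direct substitution into the bound of Lemma~\ref{lem:general_estimation}.
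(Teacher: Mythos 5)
Your proposal is correct and follows essentially the same route as the paper's proof: the same error split into $\epsilonmax^\ast=\varepsilon/(4\smax^2\norm{\vec x}_2\sqrt{2m})$ and $\varepsilonmax^{\ast(1)}=\varepsilon/(4\smax^2\sqrt{\kappa}\,m)$ fed into Lemma~\ref{lem:stability}, the same two applications of Lemma~\ref{lem:general_estimation} with confidence $\ol\alpha=\sqrt{1-\alpha}$ per group combined by independence, and the same sparsity reduction to $2\kappa m$ relevant second moments from Lemma~\ref{lem:sparsity}. The prefactor bookkeeping you describe reproduces the $2^6$ and $2^5$ exactly as in the paper.
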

\begin{proof}
Our proof strategy is to show that, with probability at least $1-\alpha$, the $2m$ elements of $\bfgamma$ and the $2 m \kappa$ relevant elements of $\bfGamma^{(1)}$ are estimated within additive errors bounded as 
\begin{subequations}
\label{eq:both_epsilonmax_bound}
\begin{align}
 \epsilonmax& \leq  \epsilonmax^*\coloneqq\frac{\varepsilon}{4 \smax^2 \norm{\vec x}_2 \sqrt{2m}} 
 \label{eq:epsilonmax_bound}
\intertext{and}
\varepsilonmax^{(1)} &\leq \varepsilonmax^{*(1)}\coloneqq\frac{\varepsilon}{4  \smax^2 \sqrt{\kappa} m }.
\label{eq:varepsilonmax_bound}
\end{align}
\end{subequations}
If the inequalities \eqref{eq:both_epsilonmax_bound} are fulfiled, then, due to Lemma~\ref{lem:stability}, it holds that $|\Fzero-\Festzero|\leq\varepsilon$. 

Since all $2m$ estimates $\{\gamma^\ast_{l}\}_l$ are sample averages over independent copies of $\rhop$, the measurement outcomes to obtain the $\{\gamma^\ast_{l}\}_l$ are all independent random variables, for each $l$ described by the same probability distribution. Furthermore, by assumption, the variances of these variables are all upper-bounded by $\sigma_1$. Analogously, the measurement outcomes to obtain all  $2 m \kappa$ relevant estimates $\{\Gamma^{{(1)}\ast}_{l,l'}\}_{l,l'}$ are independent random variables with variances upper-bounded by $\sigma_2$ and described, for each $l$ and $l'$, by the same probability distribution. Hence, according to Lemma~\ref{lem:general_estimation}, with the choice $\ol \alpha = \sqrt{1-\alpha}$, 
taking
\begin{subequations}
\label{eq:both_ns_bounds_proofs}
\begin{align}
 \ns_1 &\geq 2
 \frac{\sigma_1^2 (2m + 1)}{{\epsilonmax^*}^2 \ln\left(\tfrac{1}{1-\alpha} \right)}
 \label{eq:bound_for_cs1_proof}
 \intertext{and}
 \ns_2 &\geq 2 
 \frac{\sigma_2^2 (2 \kappa m + 1)}{{\varepsilonmax^{*(1)}}^2 \ln\left(\tfrac{1}{1-\alpha} \right)} ,
 \label{eq:bound_for_cs2_proof}
\end{align}
\end{subequations}
is sufficient for both
\begin{subequations}
\label{eq:both_estimates_bounds}
\begin{equation}
  \PP\bigl[ \forall l : \ |\gamma^\ast_{l} - \gamma_{l}| \leq \epsilonmax^* \bigr]\geq\sqrt{1-\alpha}
\end{equation}
and
\begin{equation}
\PP\bigl[ \forall\ \Gamma^{(1)}_{l,l'} \text{ relevant}: \ |\Gamma^{{(1)}\ast}_{l,l'} - \Gamma^{(1)}_{l,l'}| \leq  \varepsilonmax^{*(1)} \bigr]\geq\sqrt{1-\alpha} .
\end{equation}
\end{subequations}
Since the $\{\gamma^\ast_{l}\}_l$ and the $\{\Gamma^{{(1)}\ast}_{l,l'}\}_{l,l'}$ are independent random variables, \eqs~\eqref{eq:both_estimates_bounds} imply that
\begin{align}
\label{eq:las_implication}
\PP\left[
  \begin{array}{rcl}
    \forall l : \ |\gamma^\ast_{l} - \gamma_{l}| &\leq& \epsilonmax^* \\
    \text{and}\ \forall\ \Gamma^{(1)}_{l,l'} \text{ relevant}: \ |\Gamma^{{(1)}\ast}_{l,l'} - \Gamma^{(1)}_{l,l'}| &\leq& \varepsilonmax^{*(1)} \\
  \end{array}
  \right] \geq 1-\alpha .
\end{align}
Finally, inserting the definitions \eqref{eq:both_epsilonmax_bound} of $\epsilonmax^*$ and $\varepsilonmax^{*(1)}$ into \eqs~\eqref{eq:both_ns_bounds_proofs}, we see that \eqs~\eqref{eq:both_ns_bounds} are equivalent to \eqs~\eqref{eq:both_ns_bounds_proofs}.
\end{proof}

Now, we prove the theorem on quantum certification of Gaussian states.
\begin{proof}[Proof of Theorem~\ref{thm:certification_gaussian}]
That the total number of copies of $\rhop$ (see \eq~\eqref{eq:total_number_of_samples_gaussian_case}) needed for the certification test is asymptotically upper-bounded by \eq~\eqref{eq:numb_copies_bound_Gaussian} can be verified by straightforward calculation using \eqs~\eqref{eq:both_ns_bounds}. It remains to show that
 $(i)$ if $\rhop = \rhot$, then $\mathcal{T}$ accepts with probability at least $1-\alpha$, i.e.,  
\begin{align}
\label{eq:to_prove_1}
\PP\left[\Festzero\geq F_T+\varepsilon\right]\geq 1-\alpha,
\end{align}
 and $(ii)$ if $\rhop$ is such that $F < F_T$, then $\mathcal{T}$ rejects  with probability at least $1-\alpha$, i.e.,
\begin{align}
\label{eq:to_prove_2}
\PP\left[\Festzero < \FT + \varepsilon\right]\geq1-\alpha.
\end{align}

To show $(i)$, we first recall that, if $\rhop=\rhot$, $\Fzero = 1$. With this, \eq~\eqref{eq:statement_new_lemma} in Lemma~\ref{lem:fidelity_bound_estimation} implies that 
\begin{align}
\label{eq:prove_1}
\PP\left[ \Festzero \geq 1- \varepsilon\right]\geq 1-\alpha.
\end{align}
Since, by assumption of the theorem, the total estimation error is such that $\varepsilon\leq\frac{1-\FT}{2}$, it holds that $1-\varepsilon\geq F_T+\varepsilon$. Substituting the latter inequality into \eq~\eqref{eq:prove_1} yields \eq~\eqref{eq:to_prove_1}. 

To show $(ii)$, we first note that, since $\Fzero\leq F$ for all $\rhop$, if $F < F_T$, then 
\begin{align}
\label{eq:prove_2_0}
\Fzero<\FT.
\end{align} 
On the other hand, \eq~\eqref{eq:statement_new_lemma} implies also that 
\begin{align}
\label{eq:prove_2}
\PP\left[ \Festzero\leq\Fzero+\varepsilon\right]\geq 1-\alpha.
\end{align}
Inserting \eq~\eqref{eq:prove_2_0} into \eq~\eqref{eq:prove_2} yields \eq~\eqref{eq:to_prove_2}.
\end{proof}

\subsection{Proof of Theorem~\ref{thm:certification_1n}}
\label{sec:ProofsNonGaussian}
We proceed analogously to the last section and present three auxiliary lemmas specific to the fidelity bound $\Fone$ for the linear-optical case before proving Theorem~\ref{thm:certification_1n}.

To state the first lemma in a compact form we introduce the shorthand $\bfGamma \coloneqq \bigl(\bfGamma^{(i)}\bigr)_{i = 1, \dots, n+1}$ for the collection of all the moment tensors $\bfGamma^{(i)}$.
Analogously, the collection of all the estimates $\bfGamma^{(i)\ast}$ of the moment tensors, defined in Box~\ref{box:measurement_scheme_1n}, is denoted by $\bfGamma^{\ast} \coloneqq \bigl(\bfGamma^{(i)\ast}\bigr)_{i = 1, \dots, n+1}$. 

\begin{lemma}[Sparsity of the linear-optical fidelity bound]
\label{lem:sparsity_1n}
The fidelity bound $\Fone$ defined in \eq~\eqref{eq:boundexplicitnonGaussian2} can be written as
\begin{align}
\label{eq:sum_over_i}
\Fone = 1- \sum_{j=0}^{n}(-1/2)^{n-j}f_j\left(\bfGamma^{(j)},\bfGamma^{(j+1)}\right),
\end{align} 
where, for each $j \in \{0,\dots,n\}$, $f_j$ is a linear functional
given by
\begin{equation}
\label{eq:f_def}
f_j\left(\bfGamma^{(j)},\bfGamma^{(j+1)}\right) \coloneqq
\sum_{\mu=1}^{\binom{n}{j}} 
\Biggr\{
\Tr\biggl[ \Bigl(\1 \otimes \bigotimes_{i \in \Omega^{(j)}_\mu} \P^{(i)}\Bigr) \bfGamma^{(j+1)}\biggr] 
+ \frac{m+2n}{2}\Tr\biggl[ \Bigl(\bigotimes_{i \in \Omega^{(j)}_\mu} \P^{(i)}\Bigr) \bfGamma^{(j)}\biggr]
\Biggl\} .
\end{equation}
For each $j$, the functional $f_{j}$ depends on at most $\binom n j (2d)^{2j}$ elements of $\bfGamma^{(j)}$ and on at most $\binom n j 2m(2d)^{2j}$ elements of $\bfGamma^{(j+1)}$.
We call these the \emph{relevant elements for $f_j$}.
Moreover, $\Fone$ depends on at most 
\begin{equation}
  \label{eq:total_number}
  \nCor{\leq 2(n+1)} \coloneqq (1 + 2m) (4d^2 + 1)^n\in\landauO\left(m\left(4d^2+1\right)^n\right)
\end{equation}
elements of $\bfGamma$.
We call these the \emph{relevant elements} of $\bfGamma$.
\end{lemma}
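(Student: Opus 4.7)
My plan is to verify each of the three claims in turn, with all of them reducing to counting non-zero entries in the tensors that are contracted against the moment tensors $\bfGamma^{(j)}$.

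First, to obtain the decomposition \eqref{eq:sum_over_i}, I would simply read off \eqref{eq:boundexplicitnonGaussian2} and identify the $j$-th summand as $(-1/2)^{n-j}f_j(\bfGamma^{(j)},\bfGamma^{(j+1)})$, with $f_j$ given by \eqref{eq:f_def}. This is a repackaging rather than a substantive step, and it immediately exhibits $\Fone$ as a linear combination of $n+1$ linear functionals.

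Next, for the per-$f_j$ sparsity bounds, I would use the sparsity structure of $\O$ recalled at the start of Section~\ref{sec:measurement_scheme}: each pair of columns $\o^{(2i-1)},\o^{(2i)}$ shares a common support of size at most $2d$, so the projector $\P^{(i)}$ of \eqref{Projector} has at most $(2d)^2 = 4d^2$ non-zero matrix entries, and hence the $j$-fold tensor product $\bigotimes_{i\in\Omega^{(j)}_\mu}\P^{(i)}$ has at most $(2d)^{2j}$ non-zero entries. Since a trace of the form $\Tr[\vec T\, \bfGamma^{(j)}]$ depends on only those entries of $\bfGamma^{(j)}$ whose index tuples correspond to non-zero entries of $\vec T$, the first trace in \eqref{eq:f_def} involves at most $(2d)^{2j}$ entries of $\bfGamma^{(j)}$ per $\mu$. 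For the companion trace, the extra identity $\1$ on the first pair of indices contributes $2m$ diagonal non-zero entries, giving at most $2m(2d)^{2j}$ relevant entries of $\bfGamma^{(j+1)}$ per $\mu$. Summing over the $\binom n j$ subsets $\Omega^{(j)}_\mu$ yields the two claimed bounds for $f_j$.

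Finally, to obtain the global bound \eqref{eq:total_number}, I would simply add the two per-$f_j$ bounds and sum over $j\in\{0,\dots,n\}$, which by the binomial theorem gives
\begin{equation*}
\sum_{j=0}^n \binom n j (2d)^{2j} \,+\, \sum_{j=0}^n \binom n j 2m(2d)^{2j}
= (1+2m)\sum_{j=0}^n \binom n j (4d^2)^j
= (1+2m)(1+4d^2)^n,
\end{equation*}
as claimed. This sum overcounts entries of $\bfGamma^{(j)}$ that are relevant in both $f_{j-1}$ and $f_j$, but since only an upper bound is asserted, this does not matter. The main thing to be careful about is the index bookkeeping --- in particular, verifying that ``at most $(2d)^{2j}$ non-zero entries'' is the correct count for the projector tensor product (rather than, say, a bound on its rank), and that the factor $2m$ from the identity is inserted on the correct tensor slot. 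I do not expect any serious obstacle beyond this bookkeeping, since the projector sparsity and the binomial summation are standard.
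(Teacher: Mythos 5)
Your proposal is correct and follows essentially the same route as the paper's own proof: counting the non-zero entries of $\P^{(i)}$ via the sparsity of $\O$, propagating this through the tensor products (with the extra factor $2m$ from the identity slot), summing over the $\binom{n}{j}$ subsets, and closing with the binomial theorem. The only nit is that you refer to the trace against $\bfGamma^{(j)}$ as the ``first trace'' in \eq~\eqref{eq:f_def} when it is written second there, but the counts are attached to the correct tensors, so this is purely a labelling slip.
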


The subindex ``$\leq 2(n+1)$'' in $\nCor{\leq 2(n+1)}$ makes reference to the fact that $2j$-th moments with $j\in[n+1]$ are taken into account.

\begin{proof}
\Eqs~\eqref{eq:sum_over_i} and \eqref{eq:f_def} can be checked by a straightforward calculation.
We use again the sparsity of $\O$, i.e., the property that its columns $\o^{(2j-1)}$ and $\o^{(2j)}$ have at least $2(m-d)$ zero element in common. 
Hence, each of the symmetric matrices $\P^{(j)}$, defined in \eq~\eqref{Projector}, has at most $(2d)^2$ non-zero elements. 
Consequently, the projectors $\bigotimes_{i \in \Omega^{(j)}_\mu} \P^{(i)}$ and $\1 \otimes \bigotimes_{i \in \Omega^{(j)}_\mu} \P^{(i)}$ in \eq~\eqref{eq:f_def} have at most $(2d)^{2j}$ and $2m(2d)^{2j}$ non-zero elements.
This implies that the first trace inside the sum in \eq~\eqref{eq:f_def} depends on at most $2m(2d)^{2j}$ elements of $\bfGamma^{(j+1)}$ and the second trace inside the sum on at most $(2d)^{2j}$ elements of $\bfGamma^{(j)}$. 
Hence, each $f_j$ depends on at most $\binom n j (2d)^{2j}$ elements of $\bfGamma^{(j)}$ and on at most $\binom n j 2m(2d)^{2j}$ elements of $\bfGamma^{(j+1)}$.
This proves the statements on the sparsity of the functionals $f_i$.
From this, it follows that $\Fone$ depends on at most 
  \begin{equation}
    \sum_{i=0}^n \left( \binom n i (2d)^{2i} + \binom n i 2m(2d)^{2i} \right) = (1 + 2m) (4d^2 + 1)^n 
  \end{equation}
  elements of $\bfGamma$ in total, where in the last step we have used the binomial theorem.
\end{proof}
It is important to mention that, as in Lemma~\ref{lem:sparsity} for the Gaussian case, the symmetry \eqref{eq:symmetry_property} of each $\bfGamma^{(j)}$ was not taken into account. Thus, even though the lemma gives the maximal total number of relevant elements that contribute to $\Fone$, many of them are not independent and must therefore not be measured.

The second auxiliary lemma upper-bounds the deviation of $\Festone$ from $\Fone$ in terms of the errors made in the estimation of the expectation values entering $\Fone$.
\begin{lemma}[Stability of the linear-optical fidelity bound]
\label{lem:stability_1n}
Let $\Festone$ be defined like $\Fone$ in \eq~\eqref{eq:boundexplicitnonGaussian2} but with $\bfGamma$ replaced by $\bfGamma^{\ast}$ and let $\varepsilonmax \coloneqq \norm{\bfGamma-\bfGamma^\ast}_{\max{}}$.
Then 
\begin{equation}
  \label{eq:Fone_norm_bound}
  |\Fone - \Festone| \leq\varepsilonmax \left(n+5m/2\right) \left(1/2+2d\sqrt{2nm}\right)^n .
\end{equation}
\end{lemma}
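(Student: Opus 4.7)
The plan is to use the decomposition of $\Fone$ provided by Lemma~\ref{lem:sparsity_1n} to reduce the stability question to bounding individual traces involving sparse projectors against the error tensors. Since each functional $f_j$ is linear in both of its tensor arguments, one has $f_j(\bfGamma^{(j)},\bfGamma^{(j+1)}) - f_j(\bfGamma^{(j)\ast},\bfGamma^{(j+1)\ast}) = f_j(\Epsilon^{(j)},\Epsilon^{(j+1)})$, where $\Epsilon^{(k)} \coloneqq \bfGamma^{(k)} - \bfGamma^{(k)\ast}$ satisfies $\norm{\Epsilon^{(k)}}_{\max{}} \leq \varepsilonmax$ by definition. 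Combining this with the triangle inequality applied to \eqref{eq:sum_over_i} yields $|\Fone - \Festone| \leq \sum_{j=0}^n (1/2)^{n-j} |f_j(\Epsilon^{(j)},\Epsilon^{(j+1)})|$, so the remaining task is to bound the right-hand side linearly in $\varepsilonmax$.

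The heart of the argument is a norm estimate on each individual trace in \eqref{eq:f_def}. I would apply H\"older's inequality in the form $|\Tr[AB]| \leq \norm{A}_\infty \norm{B}_1$, noting that $\norm{\bigotimes_i \P^{(i)}}_\infty = \norm{\1 \otimes \bigotimes_i \P^{(i)}}_\infty = 1$ because each $\P^{(i)}$ is an orthogonal projector. For the Schatten one-norm of the (restriction of the) error tensor reshaped as a $(2m)^j \times (2m)^j$ (respectively $(2m)^{j+1} \times (2m)^{j+1}$) matrix, I would combine $\norm{\cdot}_1 \leq \sqrt{\mathrm{rank}(\cdot)}\,\norm{\cdot}_F$ with the entry-wise Frobenius bound $\norm{\cdot}_F \leq \sqrt{s}\,\varepsilonmax$, where $s$ is the support size on which the projector is non-vanishing. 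Lemma~\ref{lem:sparsity_1n} supplies the counts: each trace involves at most $(2d)^{2j}$ relevant entries of $\bfGamma^{(j)}$ and at most $2m(2d)^{2j}$ relevant entries of $\bfGamma^{(j+1)}$. Plugging these in, together with the matrix dimension $(2m)^j$ or $(2m)^{j+1}$ for the rank, produces per-trace estimates of the form $(2d\sqrt{2m})^j \varepsilonmax$ and $2m(2d\sqrt{2m})^j \varepsilonmax$, respectively.

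Summing over the $\binom{n}{j}$ choices of $\mu$ in \eqref{eq:f_def} gives $|f_j(\Epsilon^{(j)},\Epsilon^{(j+1)})| \leq \binom{n}{j}\bigl(2m + (m+2n)/2\bigr)(2d\sqrt{2m})^j \varepsilonmax = \binom{n}{j}(n + 5m/2)(2d\sqrt{2m})^j \varepsilonmax$, which is where the prefactor $n+5m/2$ naturally appears. The outer sum over $j$ with weights $(1/2)^{n-j}$ then collapses via the binomial theorem into a factor $(1/2 + 2d\sqrt{2m})^n$, and the slightly weaker bound advertised in the lemma statement, with $\sqrt{2nm}$ in place of $\sqrt{2m}$, follows immediately from the monotonicity $\sqrt{2m} \leq \sqrt{2nm}$ for $n \geq 1$.

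The main obstacle I anticipate is careful bookkeeping rather than a deep new idea: one must fix a matrix-reshaping convention to view each $\bfGamma^{(k)}$ as a matrix for Schatten-norm inequalities, and must track which of the rank bound or the support-size bound dominates in $\norm{\cdot}_1 \leq \sqrt{\mathrm{rank}(\cdot)}\,\norm{\cdot}_F$. A subtler point is that only the entries of $\Epsilon^{(k)}$ lying on the (possibly $\mu$-dependent) support of the projector contribute to any given trace, so one may without loss of generality zero out all other entries before estimating; this is what allows the $\sqrt{s}$ factor to invoke the sparsity $s$ from Lemma~\ref{lem:sparsity_1n} rather than the ambient dimension $(2m)^{2j}$, and is the step where the full force of the sparsity structure of the optical network enters the stability estimate.
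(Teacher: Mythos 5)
Your proposal is correct and follows essentially the same route as the paper's proof: linearity of the $f_j$ reduces everything to $f_j(\Epsilon^{(j)},\Epsilon^{(j+1)})$, H\"older's inequality with $\norm{\bigotimes_i\P^{(i)}}_\infty\leq1$ controls each trace, and the chain $\norm{\cdot}_1\leq\sqrt{\mathrm{rank}}\,\norm{\cdot}_2\leq\sqrt{\mathrm{rank}}\sqrt{s}\,\varepsilonmax$ together with the sparsity counts of Lemma~\ref{lem:sparsity_1n} and the binomial theorem yields the stated bound. The only deviation is that you apply the sparsity bound per $\mu$-term (support $(2d)^{2j}$) rather than to the aggregated support $\binom{n}{j}(2d)^{2j}$ as the paper does, which spares you the paper's $\binom{n}{j}^{3/2}$ and the subsequent relaxation $\binom{n}{j}^{1/2}\leq n^{j/2}$, giving the marginally tighter $\left(1/2+2d\sqrt{2m}\right)^n$ that you then correctly weaken to the advertised $\left(1/2+2d\sqrt{2nm}\right)^n$.
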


\begin{proof}
For convenience, we define, for each $j \in [n]$, the \emph{error tensor}
  \begin{equation}
    \Epsilon^{(j)} \coloneqq  \Gamma^{(j)}-\Gamma^{(j)\ast} \in \left(\RR^{2m\times 2m}\right)^{\otimes j} .
  \end{equation}
Using \eq~\eqref{eq:sum_over_i} and the fact that $f_j$ is linear, we write the fidelity estimation error as 
\begin{align}
\label{eq:fi_epsilon}
\Fone - \Festone
&=\sum_{j=0}^{n} (-1/2)^{n-j} f_j\left(\Epsilon^{(j)},\Epsilon^{(j+1)}\right),
\end{align}
Applying H\"older's inequality and using that the Schatten $\infty$-norm of a tensor product of projectors is bounded by $1$ yields
\begin{align}
\nonumber
|f_j\left(\Epsilon^{(j)},\Epsilon^{(j+1)}\right)|
& \leq
\binom{n}{j} 
\Bigl(\normb{\tilde{\Epsilon}^{(j+1)}}_1 +
\frac{m+2n}{2}\normb{\tilde{\Epsilon}^{(j)}}_1 \Bigr), 
\end{align}
where the matrix $\tilde{\Epsilon}^{(j)}$ is defined element-wise by $\tilde{\Epsilon}^{(j)}_{\vec{k}^{(j)},\vec{l}^{(j)}} \coloneqq \Epsilon^{(j)}_{k_1,l_1,\dots, k_j, l_j}$, where $\vec{k}^{(j)} \coloneqq (k_1,\dots , k_j)$ and $\vec{l}^{(j)} \coloneqq (l_1,\dots , l_j)$.
Thanks to the first bound in \eq~\eqref{eq:norm_ineqs} and \eq~\eqref{eq:2norm_property}, we arrive at
\begin{equation}
\label{eq:f_i}
|f_j\left(\Epsilon^{(j)},\Epsilon^{(j+1)}\right)|
\leq
\binom{n}{j} (2m)^{j/2} 
\Bigl(\sqrt{2m} \normb{\vect(\tilde{\Epsilon}^{(j+1)})}_2 + \frac{m+2n}{2}\normb{\vect(\tilde{\Epsilon}^{(j)})}_2 \Bigr)  . 
\end{equation}
According to Lemma~\ref{lem:sparsity_1n}, $f_j$ depends on at most $\binom n j 2m(2d)^{2j}$ elements of $\tilde{\Epsilon}^{(j+1)}$ and on at most $\binom n j (2d)^{2j}$ of $\tilde{\Epsilon}^{(j)}$.
Without loss of generality we can hence omit, in \eq~\eqref{eq:f_i}, all other elements in $\tilde{\Epsilon}^{(j)}$ and $\tilde{\Epsilon}^{(j+1)}$ and thus take $\vect(\tilde{\Epsilon}^{(j)})$ and $\vect(\tilde{\Epsilon}^{(j+1)})$ as vectors with at most $\binom n j (2d)^{2j}$  and $\binom n j 2m(2d)^{2j}$ elements, respectively.
Then the second bound in \eq~\eqref{eq:norm_ineqs} yields 
\begin{equation}
\label{eq:last_bound}
|f_j\left(\Epsilon^{(j)},\Epsilon^{(j+1)}\right)|
\leq
{\binom{n}{j}}^{3/2} (2m)^{j/2} (2d)^{j}
\Bigl[2m \normb{\tilde{\Epsilon}^{(j+1)}}_{\max{}} +
\frac{m+2n}{2}\normb{\tilde{\Epsilon}^{(j)}}_{\max{}} \Bigr]  . 
\end{equation}
Next, from \eq~\eqref{eq:fi_epsilon}, it follows that
\begin{equation}
  \label{eq:fi_epsilon_preprefinal}
  |\Fone - \Festone|\leq\varepsilonmax\biggl[\sum_{j=0}^{n}{\binom{n}{j}}^{3/2}\left(1/2\right)^{n-j} \left(\sqrt{2m}2d\right)^{j} \times \left(5m/2+n\right) \biggr].
\end{equation}
Finally, using $\binom{n}{j}^{1/2} \leq n^{j/2}$ and the binomial formula, we obtain the inequality \eqref{eq:Fone_norm_bound}.
\end{proof}

The third auxiliary lemma shows that the estimate of the fidelity lower-bound for target states $\rhot\in\CLON$ obtained with the measurement scheme $\MLON$ in Box~\ref{box:measurement_scheme_1n} is reliable. This lemma is potentially interesting in its own right in scenarios other than certification.
\begin{lemma}[Reliable estimation of the linear-optical fidelity bound]
\label{lem:fidelity_bound_estimation_1n}
Let $\alpha \in (0,1/2]$ and $\varepsilon>0$. Let $\Festone$ be defined like $\Fone$ in \eq~\eqref{eq:boundexplicitnonGaussian2} but with $\bfGamma$ replaced by $\bfGamma^{\ast}$, where $\bfGamma^{\ast}$ is obtained as described by $\MLON$ from  
\begin{equation}
  \label{eq:number_copies_NonGaussian}
  \ns = \nCor{\leq 2(n+1)}\ns_{\leq 2(n+1)} 
\end{equation}
copies of $\rhop$, with $\nCor{\leq 2(n+1)}$ an integer given by \eq~\eqref{eq:total_number} and $\ns_{\leq 2(n+1)}$ an integer given by
\begin{align} 
\label{eq:ns_1n}
 \ns_{\leq 2(n+1)} \geq \frac{\sigma_{\leq 2(n+1)}^2(\nCor{\leq2(n+1)}+1)}{\varepsilon^2  \ln(1/(1-\alpha))} \left(n+5m/2\right)^2 \left(1/2+2d\sqrt{2nm}\right)^{2n}.
\end{align}
Then, 
\begin{equation}
\label{eq:statement_new_lemma_1n}
  \PP\left[ |\Fone - \Festone| \leq \varepsilon\right]\geq 1-\alpha.
\end{equation}
\end{lemma}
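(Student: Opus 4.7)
The plan is to mirror the structure of the proof of Lemma~\ref{lem:fidelity_bound_estimation} from the Gaussian case, replacing the stability bound of Lemma~\ref{lem:stability} by its linear-optical counterpart, Lemma~\ref{lem:stability_1n}, and the sparsity count of Lemma~\ref{lem:sparsity} by that of Lemma~\ref{lem:sparsity_1n}. The overall strategy is: first, convert the target fidelity-estimation error $\varepsilon$ into a per-correlator tolerance $\varepsilonmax^\ast$ via the stability lemma; second, invoke the statistical sampling lemma (Lemma~\ref{lem:general_estimation}) to determine how many copies per relevant correlator are sufficient to attain that tolerance simultaneously on all of them with overall failure probability at most $\alpha$.

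Concretely, I would proceed as follows. From Lemma~\ref{lem:stability_1n}, whenever $\varepsilonmax = \norm{\bfGamma - \bfGamma^\ast}_{\max{}} \leq \varepsilonmax^\ast$ with
\begin{equation}
  \varepsilonmax^\ast \coloneqq \frac{\varepsilon}{(n + 5m/2)\,(1/2 + 2d\sqrt{2nm})^{n}},
\end{equation}
it follows that $|\Fone - \Festone| \leq \varepsilon$. By Lemma~\ref{lem:sparsity_1n}, $\Fone$ depends on at most $\nCor{\leq 2(n+1)} = (1+2m)(4d^2+1)^n$ relevant entries of $\bfGamma$, so it suffices to estimate each of these within error $\varepsilonmax^\ast$. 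The measurement scheme $\MLON$ of Box~\ref{box:measurement_scheme_1n} estimates each relevant $2j$-th moment (with $j \in [n+1]$) by a finite sample average over $\ns_{\leq 2(n+1)}$ independent copies of $\rhop$, so the resulting estimates are mutually independent random variables with variances uniformly bounded by $\sigma_{\leq 2(n+1)}^2$.

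Applying Lemma~\ref{lem:general_estimation} to the collection of $\nCor{\leq 2(n+1)}$ independent estimates with common variance bound $\sigma_{\leq 2(n+1)}^2$, error target $\varepsilonmax^\ast$, and confidence $1-\alpha$, I obtain that
\begin{equation}
  \ns_{\leq 2(n+1)} \geq \frac{\sigma_{\leq 2(n+1)}^2 \,(\nCor{\leq 2(n+1)}+1)}{(\varepsilonmax^\ast)^2 \,\ln(1/(1-\alpha))}
\end{equation}
suffices for all relevant entries to lie within $\varepsilonmax^\ast$ of their true values simultaneously, with probability at least $1-\alpha$. Substituting the explicit form of $\varepsilonmax^\ast$ yields exactly the bound stated in \eq~\eqref{eq:ns_1n}, and the total copy count is then $\ns = \nCor{\leq 2(n+1)} \,\ns_{\leq 2(n+1)}$ as claimed.

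The only nontrivial ingredient is verifying that the hypothesis of Lemma~\ref{lem:general_estimation} is actually met for this enlarged family of observables: namely, the independence of the sample averages and the uniform variance bound $\sigma_{\leq 2(n+1)}^2$ on products of up to $2(n+1)$ quadrature operators in $\rhop$. Independence is built into $\MLON$, since each of the $\nCor{\leq 2(n+1)}$ correlators is estimated on a fresh batch of copies. The variance bound is by assumption, using the definition of $\sigma_{\leq i}$ as a uniform bound on variances of any product of up to $i$ quadratures in $\rhop$. Hence no subtlety arises beyond careful bookkeeping of constants, and the main ``hard part'' was already absorbed into the stability estimate of Lemma~\ref{lem:stability_1n}; this lemma is essentially a clean statistical wrapper around that analytical bound.
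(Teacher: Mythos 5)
Your proposal is correct and follows essentially the same route as the paper's proof: convert $\varepsilon$ into the per-correlator tolerance $\varepsilonmax^\ast$ via Lemma~\ref{lem:stability_1n}, count the relevant entries with Lemma~\ref{lem:sparsity_1n}, and apply Lemma~\ref{lem:general_estimation} with $\ol\alpha = 1-\alpha$ to all $\nCor{\leq 2(n+1)}$ independent sample averages at once. The paper's argument is exactly this, including the same choice of $\varepsilonmax^\ast$ and the final substitution yielding \eq~\eqref{eq:ns_1n}.
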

\begin{proof}
Our proof strategy is similar to that of Lemma~\ref{lem:fidelity_bound_estimation}. That is, we show that, with probability at least $1-\alpha$, the $\nCor{\leq 2(n+1)}$ relevant elements of $\bfGamma$ are estimated  within additive errors bounded as
\begin{align}
\varepsilonmax &\leq \varepsilonmax^*\coloneqq\frac{\varepsilon}{\left(n+5m/2\right) \left(1/2+2d\sqrt{2nm}\right)^n}.
\label{eq:varepsilonmax_bound_1n}
\end{align}
If this inequality is fulfiled, then, due to Lemma~\ref{lem:stability_1n}, it holds that $|\Fone-\Festone|\leq\varepsilon$.

According to Lemma~\ref{lem:general_estimation}, with the choice $\ol \alpha = 1-\alpha$, 
taking
\begin{equation} 
\label{eq:ns_1n_proof}
 \ns_{\leq 2(n+1)} \geq \frac{\sigma_{\leq 2(n+1)}^2(\nCor{\leq2(n+1)}+1)}{{\varepsilonmax^*}^2  \ln(1/(1-\alpha))}.
\end{equation}
is sufficient to get 
\begin{equation}
\PP\left[ \forall\ \Gamma^{(i)}_{k_1,l_1,\dots , k_i, l_i} \text{ relevant}:
|\Gamma^{(i)\ast}_{k_1,l_1,\dots , k_i, l_i} - \Gamma^{(i)}_{k_1,l_1,\dots , k_i, l_i}| \leq \varepsilonmax^* \right]\geq1-\alpha
\end{equation}
Finally, inserting the definition \eqref{eq:varepsilonmax_bound_1n} of $\varepsilonmax^*$ into \eq~\eqref{eq:ns_1n_proof}, we see that \eq~\eqref{eq:statement_new_lemma_1n} is equivalent to \eq~\eqref{eq:ns_1n_proof}.
\end{proof}

Now, we prove the theorem on quantum certification of linear-optical network states.
\begin{proof}[Proof of Theorem~\ref{thm:certification_1n}]
The proof is analogous to the proof of Theorem~\ref{thm:certification_gaussian}, but with \eq~\eqref{eq:number_copies_NonGaussian}, \eq~\eqref{eq:numb_copies_bound}, $\Fone$, $\Festone$, Lemma~\ref{lem:fidelity_bound_estimation_1n}, and \eq~\eqref{eq:statement_new_lemma_1n} playing respectively the roles of \eq~\eqref{eq:total_number_of_samples_gaussian_case}, \eq~\eqref{eq:numb_copies_bound_Gaussian}
, $\Fzero$, $\Festzero$, Lemma~\ref{lem:fidelity_bound_estimation} and \eq~\eqref{eq:statement_new_lemma}.
\end{proof}

\subsection{Proof of Corollary~\ref{cor:certification_post_selected_G} }
\label{sec:proof_corollary_G}
The proof relies on three auxiliary lemmas equivalent to Lemmas~\ref{lem:sparsity}, \ref{lem:stability}, and \ref{lem:fidelity_bound_estimation}. 
\begin{lemma}[Sparsity of the locally post-selected Gaussian fidelity bound]
\label{lem:sparsity_PS}
$\FSyszero$ depends on at most $2 m \kappa$ elements of $\bfGamma_\Sys^{(1)}$.
We call these the \emph{relevant elements} of $\bfGamma_\Sys^{(1)}$.
\end{lemma}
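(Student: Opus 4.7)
The plan is to reduce the claim directly to Lemma~\ref{lem:sparsity}. First, I would rewrite \eq~\eqref{eq:fid_bound_final_LPSG} in the equivalent form
\[
\FSyszero = \frac{1 - \left\langle \hat N^{(0)}\right\rangle_{\rhosysp \otimes \ket{\bphi}_\A\bra{\bphi}_\A}}{\mathbb{P}(\bphi_\A|\rhot)},
\]
which shows that, up to the overall multiplicative factor $1/\mathbb{P}(\bphi_\A|\rhot)$, $\FSyszero$ has the same functional dependence on the first and second moments of $\rhosysp \otimes \ket{\bphi}_\A\bra{\bphi}_\A$ as $\Fzero$ does on the corresponding moments of $\rhop$ in Lemma~\ref{lem:sparsity}. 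By the definitions in \eqs~\eqref{eq:def_gamma_S} and \eqref{eq:def_Gamma_S}, these are precisely the entries of $\bfgamma_\Sys$ and $\bfGamma_\Sys^{(1)}$.

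Next, I would apply the expansion \eqref{eq:Fzero_expanded} verbatim with the substitutions $\bfgamma \to \bfgamma_\Sys$ and $\bfGamma^{(1)} \to \bfGamma_\Sys^{(1)}$, after which the dependence on $\bfGamma_\Sys^{(1)}$ enters only through the trace $\Tr[\O \D^{-2} \O^T \bfGamma_\Sys^{(1)}]$. The counting argument from the proof of Lemma~\ref{lem:sparsity} then applies unchanged: by the Euler decomposition and sparsity of $\O$, this trace reduces to
\[
\Tr\!\left[\sum_{j=1}^m \bigl( s_j^{-2}\, \o^{(2j-1)} (\o^{(2j-1)})^T + s_j^2\, \o^{(2j)} (\o^{(2j)})^T \bigr)\, \bfGamma_\Sys^{(1)}\right],
\]
and each of the $m$ rank-two matrices in the sum has at most $4d^2$ non-zero entries due to the sparsity and shared-zero-pattern properties of the columns $\o^{(2j-1)}, \o^{(2j)}$. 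This immediately yields at most $4m \min\{d^2, m\} = 2m\kappa$ relevant entries of $\bfGamma_\Sys^{(1)}$.

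No essentially new idea beyond Lemma~\ref{lem:sparsity} is required: the sparse structure of $\O \D^{-2} \O^T$ is a purely combinatorial property of the target-state symplectic matrix $\S$ and is therefore inherited by any expression that differs from the one in Lemma~\ref{lem:sparsity} only in the state with respect to which the moments are evaluated. The main (minor) bookkeeping step, which I do not foresee as an obstacle, is to verify that the same set of index pairs $(l,l')$ labels the relevant entries in both $\bfGamma^{(1)}$ and $\bfGamma_\Sys^{(1)}$; this is immediate since the index set is determined by the support of $\O \D^{-2} \O^T$ and not by the state.
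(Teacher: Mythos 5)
Your proof is correct and takes essentially the same route as the paper, which simply states that the proof is analogous to that of Lemma~\ref{lem:sparsity}; you have correctly identified that $\FSyszero$ equals the expression \eqref{eq:boundexplicit} divided by $\mathbb{P}(\bphi_\A|\rhot)$ with the moments replaced by those of $\rhosysp\otimes\ket{\bphi}_\A\bra{\bphi}_\A$, and that the sparsity count depends only on the support of $\O\D^{-2}\O^T$, not on the state.
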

\begin{proof}
The proof of the lemma is analogous to that of Lemma~\ref{lem:sparsity}.
 \end{proof}
\begin{lemma}[Stability of the locally post-selected Gaussian fidelity bound]
  \label{lem:stability_PS}
  Let $\FSysestzero$ be defined by the same expression to $\Fzero$ in \eq~\eqref{eq:boundexplicit} but divided by $\mathbb{P}(\bphi_\A|\rhot)$ and with $\bfgamma$ and $\bfGamma^{(1)}$ replaced by $\bfgamma_\Sys^\ast$ and $\bfGamma_\Sys^{{(1)}\ast}$, and let $\epsilonmax \coloneqq \norm{\bfgamma_\Sys-\bfgamma_\Sys^\ast}_{\max{}}$ and $\varepsilonmax^{(1)} \coloneqq \norm{\bfGamma_\Sys^{(1)}-\bfGamma_\Sys^{{(1)}\ast}}_{\max{}} $. Then
  \begin{equation}
    \label{eq:stability_proof_PS}
    |\FSyszero - \FSysestzero| \leq 
     \frac{2\smax^2}{\mathbb{P}(\bphi_\A|\rhot)} \left(\varepsilonmax^{(1)} \sqrt{\kappa}m + \epsilonmax \norm{\vec x}_2\sqrt{2m} \right) .
  \end{equation}
\end{lemma}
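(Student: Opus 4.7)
The plan is to reduce the claim directly to Lemma~\ref{lem:stability} by observing that $\FSyszero$ is, up to the overall factor $1/\mathbb{P}(\bphi_\A|\rhot)$, exactly the same linear-algebraic expression in moments as $\Fzero$, just evaluated on the extended state $\rhosysp\otimes\ket{\bphi}_\A\bra{\bphi}_\A$ rather than on $\rhop$.

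First I would substitute the definition of $\hat{N}^{(0)}_\Sys$ from \eq~\eqref{eq:fid_bound_final_LPSG} into $\FSyszero=1-\langle \hat{N}^{(0)}_\Sys\rangle_{\rhosysp}$ and use the identity $\langle\bra{\bphi}_\A\hat{N}^{(0)}\ket{\bphi}_\A\rangle_{\rhosysp}=\langle\hat{N}^{(0)}\rangle_{\rhosysp\otimes\ket{\bphi}_\A\bra{\bphi}_\A}$, which follows immediately from the definition of partial trace. A one-line rearrangement then gives
\[
\FSyszero=\frac{1-\langle\hat{N}^{(0)}\rangle_{\rhosysp\otimes\ket{\bphi}_\A\bra{\bphi}_\A}}{\mathbb{P}(\bphi_\A|\rhot)},
\]
so the numerator is nothing but the right-hand side of \eq~\eqref{eq:boundexplicit} with $\bfgamma$ and $\bfGamma^{(1)}$ replaced by the post-selected moments $\bfgamma_\Sys$ and $\bfGamma_\Sys^{(1)}$ of \eqs~\eqref{eq:def_gamma_S}--\eqref{eq:def_Gamma_S}. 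By the definition of $\MLPSG$ in Section~\ref{sec:PS_measurement_scheme}, the estimate $\FSysestzero$ has the same form with $\bfgamma_\Sys^\ast$ and $\bfGamma_\Sys^{(1)\ast}$ in place of $\bfgamma_\Sys$ and $\bfGamma_\Sys^{(1)}$.

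Next I would pull out the common factor $1/\mathbb{P}(\bphi_\A|\rhot)$ and write, in analogy with \eq~\eqref{eq:error_in_terms_of_O_D_x},
\[
\FSyszero-\FSysestzero=\frac{1}{\mathbb{P}(\bphi_\A|\rhot)}\,\Tr\!\left[\O\D^{-2}\O^T\bigl((\bfGamma_\Sys^{(1)}-\bfGamma_\Sys^{(1)\ast})+2(\bfgamma_\Sys-\bfgamma_\Sys^\ast)\vec{x}^T\bigr)\right].
\]
The bracketed quantity is then bounded exactly as in the proof of Lemma~\ref{lem:stability}: apply H\"older's inequality to split $\norm{\O\D^{-2}\O^T}_\infty\leq\smax^2$ from the rest, bound $\norm{\bfepsilon\vec x^T}_\infty\leq\norm{\bfgamma_\Sys-\bfgamma_\Sys^\ast}_2\norm{\vec x}_2$ via \eq~\eqref{eq:norm_of_rank_one_matrix}, and convert Schatten $1$-norms to max-norms via \eq~\eqref{eq:norm_ineqs} and \eq~\eqref{eq:max_norm_property}. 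The only change from the original proof is that the sparsity count must be justified using Lemma~\ref{lem:sparsity_PS} rather than Lemma~\ref{lem:sparsity}; but since the algebraic form of $\FSyszero$ in $\bfGamma_\Sys^{(1)}$ is identical to that of $\Fzero$ in $\bfGamma^{(1)}$, the same $2m\kappa$ entries are relevant and the counting goes through unchanged.

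There is no real obstacle here; this is a clean reduction. The only point worth flagging is that the factor $1/\mathbb{P}(\bphi_\A|\rhot)$ survives from the first step to the final bound, which is precisely the source of the worse stability of $\FSyszero$ compared to $\Fzero$ and is what forces the rescaling $\varepsilon\mapsto\mathbb{P}(\bphi_\A|\rhot)\,\varepsilon$ that later appears in Corollary~\ref{cor:certification_post_selected_G}. Multiplying the Lemma~\ref{lem:stability} bound by this factor delivers \eq~\eqref{eq:stability_proof_PS}.
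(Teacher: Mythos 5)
Your proof is correct and follows exactly the route the paper intends: its own proof of Lemma~\ref{lem:stability_PS} is just the one-line remark that the argument is analogous to Lemma~\ref{lem:stability} with the modifications of Section~\ref{sec:PS_measurement_scheme}, and your sketch fills in precisely those details (the rearrangement exposing the overall factor $1/\mathbb{P}(\bphi_\A|\rhot)$, the substitution of the post-selected moments, and the unchanged H\"older/sparsity estimates via Lemma~\ref{lem:sparsity_PS}). No gaps.
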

\begin{proof}
The proof of the lemma is similar to that of Lemma~\ref{lem:stability}, with the differences already explained in Section~\ref{sec:PS_measurement_scheme}. 
\end{proof}
\begin{lemma}[Reliable estimation of the locally post-selected Gaussian fidelity bound]
\label{lem:fidelity_bound_estimation_PS}
Let $\alpha \in (0,1/2]$ and $\varepsilon>0$. Let $\FSysestzero$ be defined by the same expression to $\FSyszero$ in \eq~\eqref{eq:boundexplicit} but divided by $\mathbb{P}(\bphi_\A|\rhot)$ and with $\bfgamma$ and $\bfGamma^{(1)}$ replaced by $\bfgamma_\Sys^\ast$ and $\bfGamma_\Sys^{{(1)}\ast}$, where $\bfgamma_\Sys^\ast$ and $\bfGamma_\Sys^{{(1)}\ast}$ are obtained as described in Section~\ref{sec:PS_measurement_scheme} from  
\begin{equation} 
\label{eq:total_number_of_samples_gaussian_case_PS}
  \ns = 2m \ns_1 + 2\kappa m \ns_2 
\end{equation}
copies of $\rhosysp$ 
, with $\ns_1$ and $\ns_2$ integers such that
\begin{subequations}
\label{eq:both_ns_bounds_PS}
\begin{align}
 \ns_1 &\geq 2^6
 \frac{\varsigma_1^2 (2m + 1)\,m\, \smax^4\, \norm{\vec x}^2_2}{\left[\mathbb{P}(\bphi_\A|\rhot)\varepsilon\right]^2 \ln\left(\tfrac{1}{1-\alpha} \right)}
 \label{eq:bound_for_cs1_PS}
 \intertext{and}
 \ns_2 &\geq 2^5 
 \frac{\varsigma_2^2 (2 \kappa m + 1)\, m^2\, \smax^4 \, \kappa}{\left[\mathbb{P}(\bphi_\A|\rhot)\varepsilon\right]^2 \ln\left(\tfrac{1}{1-\alpha} \right)}. 
 \label{eq:bound_for_cs2_PS}
\end{align}
\end{subequations}
Then, 
\begin{equation}
\label{eq:statement_new_lemma_PS}
  \PP\left[ |\FSyszero - \FSysestzero| \leq \varepsilon\right]\geq 1-\alpha.
\end{equation}
\end{lemma}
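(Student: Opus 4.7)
The proof proceeds in close analogy to the proof of Lemma~\ref{lem:fidelity_bound_estimation}, but using the stability bound of Lemma~\ref{lem:stability_PS} in place of Lemma~\ref{lem:stability}, and substituting the generalised variance bounds $\varsigma_1, \varsigma_2$ for $\sigma_1, \sigma_2$ when invoking Lemma~\ref{lem:general_estimation}. The overall strategy is again to first identify sufficient additive error thresholds on the individual estimated moments, and then to translate those thresholds into sample-count requirements via the general large-deviation bound.

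First, I would set the target per-moment error thresholds
\begin{equation}
 \epsilonmax^\ast \coloneqq \frac{\mathbb{P}(\bphi_\A|\rhot)\,\varepsilon}{4\smax^2\norm{\vec x}_2\sqrt{2m}},
 \qquad
 \varepsilonmax^{\ast(1)} \coloneqq \frac{\mathbb{P}(\bphi_\A|\rhot)\,\varepsilon}{4\smax^2 \sqrt{\kappa}\,m},
\end{equation}
chosen precisely so that, when these thresholds hold simultaneously for $\epsilonmax$ and $\varepsilonmax^{(1)}$, Lemma~\ref{lem:stability_PS} yields $|\FSyszero - \FSysestzero|\leq \varepsilon$. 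The extra factor $\mathbb{P}(\bphi_\A|\rhot)$ in the numerators absorbs the divisor that appears in the post-selected stability bound \eqref{eq:stability_proof_PS}.

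Next, I would observe that the measurement scheme $\MLPSG$ (as described in Section~\ref{sec:PS_measurement_scheme}) produces each estimate $\gamma^\ast_{\Sys\,l}$ and each relevant $\Gamma^{(1)\ast}_{\Sys\,l,l'}$ as a sample average over independent copies of $\rhosysp$. By Lemma~\ref{lem:sparsity_PS}, there are at most $2m$ first-moment estimates and $2m\kappa$ relevant second-moment estimates; moreover, the variances of the corresponding single-run random variables are upper bounded by $\varsigma_1^2$ and $\varsigma_2^2$ respectively, by virtue of the definition \eqref{eq:new_variance} and the factorisation exemplified by \eqref{eq:exp_value_factor}. Applying Lemma~\ref{lem:general_estimation} separately to the two groups of estimates with confidence parameter $\ol\alpha = \sqrt{1-\alpha}$ then gives that
\begin{equation}
 \ns_1 \geq \frac{2\varsigma_1^2(2m+1)}{(\epsilonmax^\ast)^2\,\ln(1/(1-\alpha))},
 \qquad
 \ns_2 \geq \frac{2\varsigma_2^2(2\kappa m+1)}{(\varepsilonmax^{\ast(1)})^2\,\ln(1/(1-\alpha))}
\end{equation}
suffices for each group to satisfy its error bound with probability at least $\sqrt{1-\alpha}$.

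Finally, since all first-moment and second-moment sample averages are independent, the joint probability of both good events is at least $(\sqrt{1-\alpha})^2 = 1-\alpha$. Substituting the explicit thresholds $\epsilonmax^\ast$ and $\varepsilonmax^{\ast(1)}$ into the two sample-size inequalities reproduces exactly the bounds \eqref{eq:bound_for_cs1_PS} and \eqref{eq:bound_for_cs2_PS}, whence \eqref{eq:statement_new_lemma_PS} follows. The main bookkeeping point — not really an obstacle but the step requiring care — is to verify that $\varsigma_1$ and $\varsigma_2$ are indeed valid uniform upper bounds on the single-run variances of the quantities actually measured under $\MLPSG$ after the product structure across $\Sys$ and $\A$ has been used to rewrite each target correlator of $\rhosysp \otimes \ketbra{\bphi}{\bphi}_\A$ as an (a priori computable) $\A$-factor times a $\Sys$-correlator; the definition \eqref{eq:new_variance} is precisely calibrated to make this hold.
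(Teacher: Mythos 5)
Your proposal is correct and follows essentially the same route as the paper: the paper's own proof of this lemma is simply the one-line remark that it is analogous to the proof of Lemma~\ref{lem:fidelity_bound_estimation}, with the per-moment error thresholds rescaled by $\mathbb{P}(\bphi_\A|\rhot)$ to absorb the divisor in Lemma~\ref{lem:stability_PS} and with $\sigma_i$ replaced by $\varsigma_i$, which is exactly what you carry out. Your substitution of the rescaled thresholds into Lemma~\ref{lem:general_estimation} with $\ol\alpha=\sqrt{1-\alpha}$ indeed reproduces the constants $2^6$ and $2^5$ in \eqref{eq:bound_for_cs1_PS} and \eqref{eq:bound_for_cs2_PS}, so nothing is missing.
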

\begin{proof}
The proof of the lemma is analogous to that of Lemma~\ref{lem:fidelity_bound_estimation}. 
 \end{proof}
\begin{proof}[Proof of Corollary~\ref{cor:certification_post_selected_G}]
The proof is analogous to the proof of Theorem~\ref{thm:certification_gaussian} but with Lemmas~\ref{lem:sparsity_PS}, \ref{lem:stability_PS}, and \ref{lem:fidelity_bound_estimation_PS} playing respectively the roles of Lemmas~\ref{lem:sparsity}, \ref{lem:stability}, and \ref{lem:fidelity_bound_estimation}.
\end{proof}

\subsection{Proof of Corollary~\ref{cor:certification_post_selected_LON}}
\label{sec:proof_corollary_LON}
As in the previous subsection, the proof relies on three auxiliaryary lemmas equivalent to Lemmas~\ref{lem:sparsity_1n}, \ref{lem:stability_1n}, and \ref{lem:fidelity_bound_estimation_1n}. The proofs of the lemmas are analogous to, and follow immediately from, those of the latter. 
\begin{lemma}[Sparsity of the locally post-selected linear-optical fidelity bound]
\label{lem:sparsity_1n_PS}
The fidelity bound $\FSysone$, defined by the same expression as $\Fone$ in \eq~\eqref{eq:boundexplicitnonGaussian2} but divided by $\mathbb{P}(\bphi_\A|\rhot)$ and with $\bfGamma$ replaced by $\bfGamma_\Sys$,
can be written as
\begin{align}
\label{eq:sum_over_i_PS}
\FSysone = \frac{1}{\mathbb{P}(\bphi_\A|\rhot)}\left[1- \sum_{j=0}^{n}(-1/2)^{n-j}f_j\left(\bfGamma_\Sys^{(j)},\bfGamma_\Sys^{(j+1)}\right)\right],
\end{align} 
where, for each $j \in \{0,\dots,n\}$, $f_j$ is the same linear functional as in Lemma~\ref{lem:sparsity_1n}, defined by \eq~\eqref{eq:f_def}.
Moreover, $\FSysone$ depends on at most $\nCor{\leq 2(n+1)}$ elements of $\bfGamma_\Sys$, with $\nCor{\leq 2(n+1)}$ the same as in Lemma~\ref{lem:sparsity_1n} and given by \eq~\eqref{eq:total_number}.
\end{lemma}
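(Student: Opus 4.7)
The plan is to reduce the lemma to Lemma~\ref{lem:sparsity_1n} applied to the augmented state $\rhosysp\otimes\ket{\bphi}_\A\bra{\bphi}_\A$, so that the decomposition \eqref{eq:sum_over_i_PS} and the sparsity count are inherited directly. The key observation is that the linear functional $f_j$ of Lemma~\ref{lem:sparsity_1n} depends only on the network $\hat U$ through the projectors $\P^{(i)}$, and not on the state whose moments are plugged in.

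First I would take $\FSysone = 1 - \langle \hat{N}^{(n)}_\Sys \rangle_{\rhosysp}$ from \eqref{eq:fid_bound_final_LPSLON} and simplify. A short algebraic manipulation, in which the $\mathbb{P}(\bphi_\A|\rhot)$ in the numerator of $\hat{N}^{(n)}_\Sys$ cancels against the constant $1$ in front, yields
\begin{equation}
\FSysone = \frac{1 - \langle \bra{\bphi}_\A \hat{N}^{(n)} \ket{\bphi}_\A \rangle_{\rhosysp}}{\mathbb{P}(\bphi_\A|\rhot)}.
\end{equation}
Using the elementary identity $\Tr[\rhosysp \bra{\bphi}_\A \hat{O} \ket{\bphi}_\A] = \Tr[(\rhosysp\otimes\ket{\bphi}_\A\bra{\bphi}_\A)\hat{O}]$ valid for any operator $\hat{O}$, the numerator becomes $1 - \langle \hat{N}^{(n)}\rangle_{\rhosysp\otimes\ket{\bphi}_\A\bra{\bphi}_\A}$, which is exactly the right-hand side of \eqref{eq:boundexplicitnonGaussian2} evaluated on the augmented state.

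Second, I would apply Lemma~\ref{lem:sparsity_1n} to this expression. Since the proof of that lemma proceeds purely algebraically on the level of the observable $\hat{N}^{(n)}$, its conclusion is valid for any state. Replacing $\rhop$ by the augmented state replaces the $2j$-th moment tensors $\bfGamma^{(j)}$ by those of $\rhosysp\otimes\ket{\bphi}_\A\bra{\bphi}_\A$, which are by definition \eqref{eq:def_Gamma_S} exactly $\bfGamma_\Sys^{(j)}$. Dividing the resulting identity by $\mathbb{P}(\bphi_\A|\rhot)$ then yields \eqref{eq:sum_over_i_PS}. The sparsity count is inherited verbatim: each $f_j$ is the same functional acting on tensors of the same dimension $(\RR^{2m\times 2m})^{\otimes j}$, so the counting of non-zero components of $\bigotimes_{i\in\Omega^{(j)}_\mu}\P^{(i)}$ is unchanged, and the binomial sum reproduces $\nCor{\leq 2(n+1)}$.

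There is no genuine obstacle, as the lemma is essentially a bookkeeping corollary of Lemma~\ref{lem:sparsity_1n}. The only mild subtlety is that $\bfGamma_\Sys^{(j)}$, despite its subscript $\Sys$, is still indexed by all $m$ modes (ancillas included), so that its contraction with the $2m$-dimensional projectors $\bigotimes_i\P^{(i)}$ appearing in $f_j$ is well-defined; this is guaranteed directly by \eqref{eq:def_Gamma_S}, whose quadrature indices $k_1,l_1,\ldots,k_j,l_j$ run over all of $[2m]$.
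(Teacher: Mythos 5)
Your proof is correct and follows essentially the same route as the paper, which simply declares the proof ``analogous to that of Lemma~\ref{lem:sparsity_1n}'' and relies on the reduction to the augmented state $\rhosysp\otimes\ket{\bphi}_\A\bra{\bphi}_\A$ already outlined in Section~\ref{sec:PS_measurement_scheme}. Your explicit cancellation of $\mathbb{P}(\bphi_\A|\rhot)$ in $\hat{N}^{(\n)}_\Sys$ and the identification of the augmented state's moment tensors with $\bfGamma_\Sys^{(j)}$ via \eq~\eqref{eq:def_Gamma_S} fill in exactly the bookkeeping the paper leaves implicit.
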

\begin{proof}
The proof of the lemma is analogous to that of Lemma~\ref{lem:sparsity_1n}. 
\end{proof}

\begin{lemma}[Stability of the  locally post-selected  linear-optical fidelity bound]
\label{lem:stability_1n_PS}
Let $\FSysestone$ be defined by the same expression as $\Fone$ in \eq~\eqref{eq:boundexplicitnonGaussian2} but divided by $\mathbb{P}(\bphi_\A|\rhot)$ and with $\bfGamma$ replaced by $\bfGamma_\Sys^{\ast}$, and let $\varepsilonmax \coloneqq \norm{\bfGamma_\Sys-\bfGamma_\Sys^\ast}_{\max{}}$.
Then 
\begin{equation}
  \label{eq:Fone_norm_bound_PS}
  |\FSysone - \FSysestone| \leq\frac{\varepsilonmax}{\mathbb{P}(\bphi_\A|\rhot)} \left(n+5m/2\right) \left(1/2+2d\sqrt{2nm}\right)^n .
\end{equation}
\end{lemma}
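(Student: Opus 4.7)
The plan is to reduce Lemma~\ref{lem:stability_1n_PS} to a near-verbatim replay of the proof of Lemma~\ref{lem:stability_1n}, exploiting the fact that $\FSysone$ differs from $\Fone$ only by (i) the overall prefactor $1/\mathbb{P}(\bphi_\A|\rhot)$ and (ii) replacing the moment tensors $\bfGamma$ computed with respect to $\rhop$ by the tensors $\bfGamma_\Sys$ computed with respect to $\rhosysp\otimes\ketbra{\bphi}{\bphi}_\A$. Crucially, the linear functionals $f_j$ appearing in Lemma~\ref{lem:sparsity_1n_PS} are literally the same as those in Lemma~\ref{lem:sparsity_1n}, and the projectors $\P^{(i)}$ entering their definition in \eqref{eq:f_def} are unchanged. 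Therefore the sparsity bounds on the number of relevant elements of $\bfGamma_\Sys$ (at most $\binom{n}{j}(2d)^{2j}$ elements of $\bfGamma_\Sys^{(j)}$ and $\binom{n}{j}2m(2d)^{2j}$ elements of $\bfGamma_\Sys^{(j+1)}$ for each $f_j$) carry over identically.

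Concretely, I would first define the error tensors $\Epsilon_\Sys^{(j)}\coloneqq\bfGamma_\Sys^{(j)}-\bfGamma_\Sys^{(j)\ast}$ for $j\in\{0,\dots,n+1\}$ and invoke Lemma~\ref{lem:sparsity_1n_PS} together with the linearity of $f_j$ to write
\begin{equation}
\FSysone-\FSysestone=-\frac{1}{\mathbb{P}(\bphi_\A|\rhot)}\sum_{j=0}^{n}(-1/2)^{n-j}f_j\!\left(\Epsilon_\Sys^{(j)},\Epsilon_\Sys^{(j+1)}\right).
\end{equation}
Taking absolute values and pulling $1/\mathbb{P}(\bphi_\A|\rhot)$ outside the sum, the problem reduces to bounding $\sum_{j=0}^{n}(1/2)^{n-j}|f_j(\Epsilon_\Sys^{(j)},\Epsilon_\Sys^{(j+1)})|$ in terms of $\varepsilonmax=\norm{\bfGamma_\Sys-\bfGamma_\Sys^\ast}_{\max{}}$. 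This is exactly the quantity estimated in the proof of Lemma~\ref{lem:stability_1n}, where Hölder's inequality, the bound $\norm{\bigotimes_i\P^{(i)}}_\infty\leq 1$, the norm relations in \eqref{eq:norm_ineqs}, and the sparsity counts are chained together to yield the combined factor $(n+5m/2)(1/2+2d\sqrt{2nm})^n$.

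The final step is simply to quote the resulting per-term bound, sum it as in \eqref{eq:fi_epsilon_preprefinal}, apply the binomial formula once more, and divide the whole expression by $\mathbb{P}(\bphi_\A|\rhot)$. This produces the inequality \eqref{eq:Fone_norm_bound_PS}. There is no genuine obstacle here: the only substantive check is that every norm bound used in the proof of Lemma~\ref{lem:stability_1n} depends on the moment tensors exclusively through their max-norm deviation and sparsity pattern, both of which are inherited by $\bfGamma_\Sys$ and $\Epsilon_\Sys$ with no modification. Consequently the proof is essentially a transcription of the linear-optical case, with the single additional bookkeeping step of tracking the uniform $1/\mathbb{P}(\bphi_\A|\rhot)$ prefactor.
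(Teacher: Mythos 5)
Your proposal is correct and takes essentially the same approach as the paper: the paper's own proof of this lemma consists precisely of observing that the argument of Lemma~\ref{lem:stability_1n} carries over verbatim once the moment tensors are replaced by $\bfGamma_\Sys$ and the uniform prefactor $1/\mathbb{P}(\bphi_\A|\rhot)$ is pulled out of the error sum. Your write-up merely makes explicit the bookkeeping (unchanged functionals $f_j$, unchanged sparsity counts, max-norm dependence only) that the paper leaves implicit.
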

\begin{proof}
The proof of the lemma is similar to that of Lemma~\ref{lem:stability_1n}, with the differences already explained in Section~\ref{sec:PS_measurement_scheme}. 
 \end{proof}
\begin{lemma}[Reliable estimation of the locally post-selected linear-optical fidelity bound]
\label{lem:fidelity_bound_estimation_1n_PS}
Let $\alpha \in (0,1/2]$ and $\varepsilon>0$. Let $\FSysestone$ be defined like $\Fone$ in \eq~\eqref{eq:boundexplicitnonGaussian2} but with $\bfGamma$ replaced by $\bfGamma_\Sys^{\ast}$, where $\bfGamma_\Sys^{\ast}$ is obtained as described in Section~\ref{sec:PS_measurement_scheme} from  
\begin{equation}
  \label{eq:number_copies_NonGaussian_PS}
  \ns = \nCor{\leq 2(n+1)}\ns_{\leq 2(n+1)} 
\end{equation}
copies of $\rhosysp$, with $\nCor{\leq 2(n+1)}$ an integer given by \eq~\eqref{eq:total_number} and $\ns_{\leq 2(n+1)}$ an integer given by
\begin{align} 
\label{eq:ns_1n_PS}
 \ns_{\leq 2(n+1)} \geq \frac{\varsigma_{\leq 2(n+1)}^2(\nCor{\leq2(n+1)}+1)}{\left[\mathbb{P}(\bphi_\A|\rhot)\,\varepsilon\right]^2  \ln(1/(1-\alpha))} \left(n+5m/2\right)^2 \left(1/2+2d\sqrt{2nm}\right)^{2n}.
\end{align}
Then, 
\begin{equation}
\label{eq:statement_new_lemma_1n_PS}
  \PP\left[ |\FSysone - \FSysestone| \leq \varepsilon\right]\geq 1-\alpha.
\end{equation}
\end{lemma}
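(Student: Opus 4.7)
The plan is to mirror the proof of Lemma~\ref{lem:fidelity_bound_estimation_1n} step by step, paying attention only to the one structural difference: the appearance of the divisor $\mathbb{P}(\bphi_\A|\rhot)$ in the stability bound of Lemma~\ref{lem:stability_1n_PS}. The overall strategy is to define a target max-norm error $\varepsilonmax^\ast$ on the relevant moment estimates such that, by Lemma~\ref{lem:stability_1n_PS}, $\varepsilonmax \leq \varepsilonmax^\ast$ already implies $|\FSysone - \FSysestone| \leq \varepsilon$; and then to use Lemma~\ref{lem:general_estimation} to certify that this event occurs with probability at least $1-\alpha$ whenever the per-correlator sample size $\ns_{\leq 2(n+1)}$ is as in \eqref{eq:ns_1n_PS}.

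Concretely, I would first set
\begin{equation}
 \varepsilonmax^\ast \coloneqq \frac{\mathbb{P}(\bphi_\A|\rhot)\,\varepsilon}{(n+5m/2)\bigl(1/2+2d\sqrt{2nm}\bigr)^n} ,
\end{equation}
so that, directly from \eqref{eq:Fone_norm_bound_PS}, the inequality $\varepsilonmax \leq \varepsilonmax^\ast$ guarantees $|\FSysone - \FSysestone| \leq \varepsilon$. Next, I would invoke Lemma~\ref{lem:sparsity_1n_PS} to conclude that $\FSysone$ depends on at most $\nCor{\leq 2(n+1)}$ relevant entries of $\bfGamma_\Sys$, so only this many independent sample averages need to be controlled simultaneously. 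Each such entry is, by the construction described in Section~\ref{sec:PS_measurement_scheme} and in particular the factorisation exemplified in \eqref{eq:exp_value_factor}, an expectation value with respect to $\rhosysp$ of a product of at most $2(n+1)$ quadrature operators (on the system modes only), multiplied by an a priori known factor coming from $\ket{\bphi}_\A$; by the definition of the generalised variances in \eqref{eq:new_variance}, the variance of the corresponding estimator is bounded by $\varsigma_{\leq 2(n+1)}^2$.

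With these ingredients, I would apply Lemma~\ref{lem:general_estimation} with $\sigma = \varsigma_{\leq 2(n+1)}$, $N = \nCor{\leq 2(n+1)}$, target error $\varepsilonmax^\ast$, and $\ol\alpha = 1-\alpha$, to conclude that
\begin{equation}
 \ns_{\leq 2(n+1)} \geq \frac{\varsigma_{\leq 2(n+1)}^2\,(\nCor{\leq 2(n+1)}+1)}{{\varepsilonmax^\ast}^{2}\ln(1/(1-\alpha))}
\end{equation}
suffices to ensure $\varepsilonmax \leq \varepsilonmax^\ast$ with probability at least $1-\alpha$. Substituting the definition of $\varepsilonmax^\ast$ gives exactly \eqref{eq:ns_1n_PS}, and combining with the stability bound yields \eqref{eq:statement_new_lemma_1n_PS}.

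The only step that requires genuine care is the identification of the correct variance bound: one must check that the independent random variables associated with the measurement of each relevant component of $\bfGamma_\Sys$ really do have variance bounded by $\varsigma_{\leq 2(n+1)}^2$. This is where the definition of the generalised variance in \eqref{eq:new_variance} does its work, since it was tailored precisely to absorb the a priori known prefactors produced by projecting on $\ket{\bphi}_\A$ and thus upper-bound the variance of what is actually sampled on $\rhosysp$. Once this is observed, the rest is the same Chebyshev-plus-independence argument used in Lemma~\ref{lem:fidelity_bound_estimation_1n}, with $\varepsilon$ uniformly replaced by $\mathbb{P}(\bphi_\A|\rhot)\,\varepsilon$ to account for the extra factor in Lemma~\ref{lem:stability_1n_PS}; no new analytic difficulty arises.
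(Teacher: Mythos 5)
Your proposal is correct and follows exactly the route the paper intends: the paper's own proof of this lemma is the one-line remark that it is analogous to Lemma~\ref{lem:fidelity_bound_estimation_1n}, and you have spelled out that analogy faithfully --- rescaling the target max-norm error to $\varepsilonmax^\ast = \mathbb{P}(\bphi_\A|\rhot)\,\varepsilon/[(n+5m/2)(1/2+2d\sqrt{2nm})^n]$ via Lemma~\ref{lem:stability_1n_PS}, counting the relevant entries with Lemma~\ref{lem:sparsity_1n_PS}, and applying Lemma~\ref{lem:general_estimation} with $\sigma=\varsigma_{\leq 2(n+1)}$, $N=\nCor{\leq 2(n+1)}$, and $\ol\alpha=1-\alpha$. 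Your added remark that one must verify the generalised variances \eqref{eq:new_variance} really bound the variance of the renormalised estimators sampled on $\rhosysp$ is precisely the point the paper delegates to Section~\ref{sec:PS_measurement_scheme}, so nothing is missing.
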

\begin{proof}
The proof of the lemma is analogous to that of Lemma~\ref{lem:fidelity_bound_estimation_1n}. 
\end{proof}


\begin{proof}[Proof of Corollary~\ref{cor:certification_post_selected_LON}]
The proof is analogous to the proof of Theorem~\ref{thm:certification_1n} but with Lemmas~\ref{lem:sparsity_1n_PS}, \ref{lem:stability_1n_PS}, and \ref{lem:fidelity_bound_estimation_1n_PS} playing respectively the roles of Lemmas~\ref{lem:sparsity_1n}, \ref{lem:stability_1n}, and \ref{lem:fidelity_bound_estimation_1n}.
\end{proof}
\subsection{Proof of Theorem~\ref{thm:strong_certification}}
\label{Proofofthm:strong_certificationGaussian}
Crucial for the proof of this theorem is the expansion \eqref{eq:rhop_in_terms_rhot} of $\rhop$ in terms of $\rhot$ and $\rhot^{\perp}$, which leads to the definition \eqref{eq:def_phononmismatch} of the photon mismatch $\ntperp$. Also, before the proof, we note that the fidelity gap cannot be smaller than $\Delta \geq 2 \varepsilon$: 
The condition for acceptance of the test is  $\Festone \geq \FT+\Delta-\varepsilon$, whereas that for rejection is $\Festone < \FT+\varepsilon$. So, the threshold of acceptance, $F=\FT+\Delta-\varepsilon$, is not smaller than that of rejection, $F=\FT+\varepsilon$, iff $\Delta \geq 2 \varepsilon$.

\begin{proof}[Proof of Theorem~\ref{thm:strong_certification}] 
Theorems~\ref{thm:certification_gaussian} and \ref{thm:certification_1n} imply that $\rhop$ is rejected with probability at least $1-\alpha$ whenever $F < \FT$. Thus, it remains to show that if $\rhop$ is such that  $F\geq \FT + \Delta$, with $\Delta$ given by \eq~\eqref{eq:delta},
 then $\rhop$ is accepted with probability at least $1-\alpha$, i.e., that
\begin{equation}
\label{eq:robust_to_prove}
\PP\left[\Festone \geq \FT+\varepsilon\right] \geq 1-\alpha.
\end{equation}

So, let $F\geq \FT + \Delta$, with $\Delta$ given by \eqref{eq:delta}.
Using \eqs~\eqref{eq:eq:firstineqgeneric}, \eqref{eq:rhop_in_terms_rhot}, and \eqref{eq:def_phononmismatch}, we write $\Fone$ as 
\begin{equation}
\label{eq:bound_Delta}
\Fone = 1 - (1-F)\ntperp \geq 1-(1-(\FT+\Delta))\ntperp  .
\end{equation}
Using that
\begin{equation}
\label{eq:condondelta2}
\Delta\geq\frac{2\varepsilon+(1-\FT)(\tilde{n}^{\perp}-1)}{\tilde{n}^{\perp}}
\end{equation}
and inserting it into the inequality \eqref{eq:bound_Delta}, we obtain
\begin{equation}
\label{eq:Fnbound_in_robustness_proof}
\Fone \geq \FT+2\varepsilon  .
\end{equation} 
Finally, using \eqs~\eqref{eq:Fnbound_in_robustness_proof} and \eqref{eq:statement_new_lemma_1n}, we obtain 
\eq~\eqref{eq:robust_to_prove}.  
\end{proof}

\section{Number of measurement settings}
\label{sec:meas_settings}
In this section, we upper-bound the number of local measurement settings required for the estimation of our fidelity lower bounds. We do this explicitly only for the Gaussian and linear-optical network target states, the cases of the post-selected target states following immediately from them. 
\subsection{Gaussian case}
\label{sec:SettingsGaussian}
Here, we show that the $2 m d$ single-quadrature and the $m \kappa$ two-quadrature observables listed in Box~\ref{box:measurement_scheme}, required for the measurement scheme $\MG$, can all be measured using $m+3$ different experimental arrangements. We do this by explicitly describing a measurement strategy that features such a scaling. 

The two-body observables $\hat{q}_j\hat{q}_k$, $\hat{q}_j\hat{p}_k$, and $\hat{p}_j\hat{p}_k$, for $j\neq k$, can be measured by simultaneously homodyning modes $j$ and $k$. For all possible pairs of modes, this consumes $m+2$ different homodyne settings: A single setting $(\hat{q}_1,\hat{q}_2, \hdots , \hat{q}_m)$ for all the second moments of the form $\langle\hat{q}_j\hat{q}_k\rangle_{\rhop}$; another single setting $(\hat{p}_1,\hat{p}_2, \hdots , \hat{p}_m)$ for those of the form  $\langle\hat{p}_j\hat{p}_k\rangle_{\rhop}$; and the $m$ settings $(\hat{p}_1,\hat{q}_2, \hdots , \hat{q}_m)$, $(\hat{q}_1,\hat{p}_2, \hat{q}_3, \hdots , \hat{q}_m)$, $\hdots$, and $(\hat{q}_1, \hdots , \hat{q}_{m-1}, \hat{p}_m)$ for those of the form $\langle\hat{q}_j\hat{p}_k\rangle_{\rhop}$ and $\langle\hat{p}_j\hat{q}_k\rangle_{\rhop}$ with $j\neq k$. In addition, all the single-body observables $\hat{q}_j$, $\hat{p}_j$, $\hat{q}_j^2$, and $\hat{p}_j^2$, are measured also with these same settings. With this, 
we have accounted, so far, for all the first moments $\gamma_l$ and all the second moments $\Gamma^{(1)}_{l,l'}$ with $(l,l')\neq(2j-1,2j)$ for all $ j\in [m]$.

The remaining second moments, $\Gamma^{{(1)}}_{2j-1,2j}$ with $ j\in [m]$, correspond to the single-mode observables 
$(\hat{q}_j\hat{p}_j+\hat{p}_j\hat{q}_j)/2$. To measure these, Arthur can homodyne  each mode $j$ independently in the rotated quadrature $(\hat{q}_j+\hat{p}_j)/\sqrt{2}$. This requires a single setting: $\left[ (\hat{q}_1+\hat{p}_1)/\sqrt{2} , (\hat{q}_2+\hat{p}_2)/\sqrt{2}, \hdots , (\hat{q}_1+\hat{p}_1)/\sqrt{2}\right]$.
In this setting,  he can estimate all the moments of the form $\langle(\hat{q}_j+\hat{p}_j)^2/2\rangle_{\rhop}$.
The latter estimates, upon subtraction of $\langle\hat{q}_j^2\rangle_{\rhop}/2$ and 
$\langle \hat{p}_j^2\rangle_{\rhop}/2$, whose settings have already been accounted for, finally make it possible to calculate an estimate of $\langle(\hat{q}_j\hat{p}_j+\hat{p}_j\hat{q}_j)/2 \rangle_{\rhop}$, using the equation
\begin{equation}
\label{eq:relation_qp}
\frac{1}{2}(\hat{q}_j\hat{p}_j+\hat{p}_j\hat{q}_j)=\left(\frac{\hat{q}_j+\hat{p}_j}{\sqrt{2}}\right)^2-\frac{\hat{q}_j^2}{2}-\frac{\hat{p}_j^2}{2} .
\end{equation}
The last setting, plus the $m+2$ ones already accounted for in the previous paragraph, yields a total of $m+3$ different homodyne settings, as promised.  

Finally, a comment on the error estimation is in order. In any measurement strategy where moments are estimated indirectly, their errors must be obtained from those of the directly measured quantities via error propagation. For instance, in the strategy just described, the error of each $\Gamma^{{(1)}}_{2j-1,2j}$ needs to be calculated from those of  $\langle(\hat{q}_j+\hat{p}_j)^2/2\rangle_{\rhop}$, $\langle\hat{q}_j^2\rangle_{\rhop}$, and 
$\langle \hat{p}_j^2\rangle_{\rhop}$. 
This leads,  for each indirectly estimated moment, to an increase in the number of copies of $\rhop$ required to attain a given error.
Nevertheless, this usually has no impact on the leading terms of the total resource scaling of the protocol. For example, in the described strategy
the global scaling given in \eq~\eqref{eq:numb_copies_bound_Gaussian} 
remains unaltered.
\subsection{Linear-optical case}
\label{sec:Settingsnon-Gaussian}
Here, we show that the $\nCor{\leq 2(n+1)}\in\landauO\left(m\left(4d^2+1\right)^n\right)$ observables listed in Box~\ref{box:measurement_scheme_1n}, required for the measurement scheme $\MLON$, can all be measured using  at most  $\binom m n  2^{n+1}$ different experimental arrangements. As in the previous section, we do this by explicitly describing a measurement strategy that features the promised scaling.

The scheme $\MLON$ requires the measurement of products of an even number between $2$ and $2(n+1)$  quadrature operators. We describe the measurement strategy as follows. First, we upper-bound the number of homodyne settings required for the measurement of all possible products of $2n$ quadrature operators, necessary for estimating all $n$-th moments $\bfGamma^{(n)}$. The measurement of products of fewer quadrature operators can clearly be carried out with the same settings. Then, we show that the particular products of $2(n+1)$ quadrature operators that appear in $\MLON$, corresponding to the relevant elements of $\bfGamma^{(n+1)}$, do not require extra settings either.

Consider all products of $2n$ quadrature operators. Among these, we focus first on those containing exclusively either $\hat{q}_j$ or $\hat{p}_j$ (or powers thereof) for each $j$-th mode but exclude observables such as $(\hat{q}_j\hat{p}_j+\hat{p}_j\hat{q}_j)/2$, which we address in the next paragraph.
Let us divide this family into two subfamilies: $(i)$ those for which the number of operators $\hat{q}_j$ is smaller or equal than that of the operators $\hat{p}_j$ and $(ii)$ those for which the number of operators $\hat{q}_j$ is greater than that of the operators $\hat{p}_j$. All correlators in the subfamily $(i)$ can be measured with homodyne settings where $n$ modes are detected in the position quadrature $\hat{q}_j$ and the remaining $m-n$ ones in the momentum  quadrature $\hat{p}_j$. All those in the subfamily $(ii)$ can be measured with homodyne settings where $n$ modes are detected in momentum and the remaining $m-n$ ones in position. Taking the two subfamilies into account, there are at most $2 \binom m n$ different such settings.

Let us now focus on the products of $2n$ quadrature operators that include different quadrature operators on a same mode, such as $(\hat{q}_j\hat{p}_j+\hat{p}_j\hat{q}_j)/2$ (or powers thereof). At most, $n$ factors as $(\hat{q}_j\hat{p}_j+\hat{p}_j\hat{q}_j)/2$ can appear in each product of $2n$ quadrature operators. From \eq~\eqref{eq:relation_qp}, we know that by replacing in each of the settings for the subfamily $(i)$ above a quadrature $\hat{q}_j$ with the rotated quadrature $(\hat{q}_j+\hat{p}_j)/\sqrt{2}$, Arthur can indirectly estimate the expectation values of all the $2n$-quadrature products of the form: 
\begin{align}
\label{eq:indirect_est1}
\nonumber
&\frac{1}{2}(\hat{q}_j\hat{p}_j+\hat{p}_j\hat{q}_j)\\
\nonumber
&\times\ \text{up to $n-1$ position operators}\\
&\times\ \text{at least $n$ momentum operators}.
\end{align}
In turn, by replacing, in each of the resulting settings, a further quadrature $\hat{q}_{j'}$ with $(\hat{q}_{j'}+\hat{p}_{j'})/\sqrt{2}$, he can measure all the observables of the form
\begin{align}
\label{eq:indirect_est2}
\nonumber
&\frac{1}{2}(\hat{q}_j\hat{p}_j+\hat{p}_j\hat{q}_j)\times \frac{1}{2}(\hat{q}_{j'}\hat{p}_{j'}+\hat{p}_{j'}\hat{q}_{j'})\\
\nonumber
&\times\ \text{up to $n-2$ position operators}\\
&\times\ \text{at least $n$ momentum operators}.
\end{align}
Concatenating this procedure, he can measure all the $2n$-quadrature products where each mode contributes with either $\hat{q}_j\hat{p}_j+\hat{p}_j\hat{q}_j$, $\hat{q}_j$, or $\hat{p}_j$, and the number of operators $\hat{q}_j$ is smaller or equal than that of the operators $\hat{p}_j$.
Equivalently, by proceeding analogously with the subfamily $(ii)$ and the quadratures $\hat{p}_j$, he can measure all $2n$-quadrature products where each mode contributes with either $\hat{q}_j\hat{p}_j+\hat{p}_j\hat{q}_j$, $\hat{q}_j$, or $\hat{p}_j$, and the number of operators $\hat{q}_j$ is greater than that of the operators $\hat{p}_j$. This is enough to indirectly estimate  the expectation values of all $2n$-quadrature products. 
For each setting of the two subfamilies, $n$ modes can be rotated, giving rise to $2^n$ setting ramifications. Hence, taking into account all the settings of the two subfamilies and their ramifications, we count a total of at most $2 \binom m n  2^n=\binom m n  2^{n+1}$ different settings.
This counting clearly over-counts the necessary settings but is enough for our purposes.

Finally, we consider the products of $2(n+1)$ quadrature operators appearing in the relevant elements of $\bfGamma^{(n+1)}$. The $(n+1)$-th moment tensor $\bfGamma^{(n+1)}$ is special in that, in contrast to the lower-moment tensors, it appears in just the first of the two traces in \eqs~\eqref{eq:boundexplicitnonGaussian2} and \eqref{eq:f_def}. In particular, according to Box~\ref{box:measurement_scheme_1n}, $\Gamma^{(n+1)}_{k_1,l_1,\dots, k_n, l_n, k_{n+1}, l_{n+1}}$ is a relevant element of $\bfGamma^{(n+1)}$ if, and only if, $k_{n+1}=l_{n+1}$. This implies that the observables containing the factor $(\hat{q}_{n+1}\hat{p}_{n+1}+\hat{p}_{n+1}\hat{q}_{n+1})$ do not contribute to the relevant elements of $\bfGamma^{(n+1)}$, only those containing either $\hat{q}^2_{n+1}$ or $\hat{p}^2_{n+1}$ are relevant. Hence, the relevant $2(n+1)$ quadrature products are those composed of the $2n$ quadrature products relevant for $\bfGamma^{(n)}$ times either $\hat{q}^2_{n+1}$ or $\hat{p}^2_{n+1}$. Now, in each  setting of 
the two subfamilies of the previous paragraph, $2n$ modes are used to measure a $2n$-quadrature observable relevant for $\bfGamma^{(n)}$ and the other $m-n$ modes, which are all set either to position or momentum, are ignored. Thus, each relevant element of $\bfGamma^{(n+1)}$ can be estimated by not ignoring one out of the latter $m-n$ modes. That is, the settings to estimate the $2n$-moments $\bfGamma^{(n)}$ already cover also the estimation of $2(n+1)$-moments $\bfGamma^{(n+1)}$.
So, the total number of settings used throughout is at most $\binom m n  2^{n+1}$.

As in the end of the previous section, we make a final remark on the error estimation. Also here, the errors of the indirectly estimated moments must be obtained via error propagation, which leads again to an increase in the total number of copies of $\rhop$. 
Nevertheless, their global scaling with $n$ remains of the same order as that given in \eq~\eqref{eq:numb_copies_bound}
.

\section{Stability against systematic errors}
\label{sec:systematic_errors}
Apart from statistical errors, Arthur's measurement procedure could also have systematic errors. That is, if the characterisation of his single-mode measurement channels is erroneous, he could actually be measuring different observables from the ones he thinks he does. Theorems~\ref{thm:certification_gaussian} and \ref{thm:certification_1n}, as well as their Corollaries~\ref{cor:certification_post_selected_G} and \ref{cor:certification_post_selected_LON}, consider only statistical errors, i.e., those that can be decreased by increasing the number of measurement repetitions (and, hence, the number of copies of $\rhop$). Since systematic errors cannot be decreased by accumulating statistics, no certification method based exclusively on the measurement statistics can rule them out. However, the stability analyses of Lemmas~\ref{lem:stability}, \ref{lem:stability_1n}, \ref{lem:stability_PS} and \ref{lem:stability_1n_PS} hold regardless of the nature of errors. Thus, the experimental 
estimates $\Festzero$, $\Festone$, $\FSysestzero$, and $\FSysestone$ (and, therefore, also the certification tests)
turn out to be robust also against small systematic errors: The total fidelity deviations due to systematic errors scales linearly with the magnitude of the largest systematic error and polynomially in all the other relevant parameters as given in \eqs~\eqref{eq:stability_proof}, \eqref{eq:Fone_norm_bound}, \eqref{eq:stability_proof_PS}, and \eqref{eq:Fone_norm_bound_PS}.

Still, it is illustrative to consider a physically relevant example. A typical systematic error is non-unit quantum efficiency of the detectors used for homodyning. In that case, the probability density function $\tilde{\PPP}$ of measurement outcomes $r$ of a quadrature $\hat{r}$ equals the ideal one $\PPP$ convolutioned with the normal distribution $\mathcal{N}$ of mean zero and squared variance $(1-\eta)/4\eta$, where $\eta$ is the quantum efficiency of the detectors \cite{Ferraro}. That is, $\tilde{\PPP}(r)=(\PPP\ast\mathcal{N})(r)\coloneqq\int dr'\PPP(r')\mathcal{N}(r-r')$. Using that the first and second \emph{non-central moments} of $\mathcal{N}$ satisfy 
\begin{subequations}
\begin{align}
\langle r\rangle_{\mathcal{N}}&\coloneqq\int dr r \mathcal{N}(r-r')=r'\\
\intertext{and} 
 \langle r^2\rangle_{\mathcal{N}}&\coloneqq\int dr r^2 \mathcal{N}(r-r')=r'^2+\frac{1-\eta}{4\eta},
\end{align}
\end{subequations}
respectively, one obtains that 
\begin{subequations}
\begin{align}
\langle r\rangle_{\tilde{\PPP}}&=\langle r\rangle_{\PPP}\\
\intertext{and} 
\langle r^2\rangle_{\tilde{\PPP}}&=\langle r^2\rangle_{\PPP}+\frac{1-\eta}{4\eta}.
\end{align}
\end{subequations}
That is, the expectation value of $\hat{r}$ is not affected by this type of systematic errors and that of $\hat{r}^2$ deviates from the ideal one by $(1-\eta)/(4\eta)$. 
Furthermore, the expectation values of products of quadrature operators acting on different modes are also not affected, as this type of systematic error acts independently on different modes. 

In the absence of statistical errors, this leads to an error vector $\bfepsilon=\mathbf{0}$ and an error matrix $\Epsilon^{(1)}$ that is diagonal and such that $\norm{\Epsilon^{(1)}}_{\max{}}\leq({1-\eta})/{(4\eta)}$, 
so that $\norm{\Epsilon^{(1)}}_{1}\leq m ({1-\eta})/{(2\eta)}$. 
Inserting this into \eq~\eqref{eq:ineq_in_pf_norms_L}, we see for instance that, for Gaussian targets, the contribution to the deviation of the fidelity estimate due to non-ideal detector efficiency in the homodyne detectors is smaller than $\smax^2m\frac{1-\eta}{2\eta}$.
This, in turn, is smaller or equal than a desired constant maximal error $\varepsilon$ if
\begin{equation}
  \label{eq:scaling_eta}
  \eta\geq\frac{\smax^2 m}{2\varepsilon+ \smax^2 m} \approx 1-\frac{2 \varepsilon}{\smax^2 m},
\end{equation}
where the approximation holds whenever $\smax^2 m\gg2\varepsilon$.
The scaling given by the bound \eqref{eq:scaling_eta} is experimentally convenient in that, in particular, it implies that the detector inefficiency $1-\eta$ needs to decrease only inversely proportional with the number of modes $m$.

Another typical systematic error is the limited power of the local oscillator field used for the homodyne detection: The homodyne (photocurrent difference) statistics, i.e., the distribution of homodyne measurement outcomes, match exactly the statistics of the corresponding quadrature  only in the limit of an intense local-oscillator beam \cite{Braunstein90}. The most obvious difference is that the homodyne statistics is discrete whereas the quadrature statistics is continuous, with the former approximating the latter increasingly better as the local-oscillator power increases. However, we emphasise that our method relies on the estimation of only the expectation values of quadratures and not their full statistics. It can be seen that, provided that the local oscillator is in a coherent state, the effect of limited power is just to increase the variance of the effective quadrature without changing its expectation value with respect to the ideal case. Furthermore, in the multi-mode scenario, if the 
different modes are homodyned with independent local oscillators, the latter is also true for products of quadratures, as the ones considered in this work. Therefore, the effect of systematic errors due to limited homodyne local-oscillator power in our fidelity estimates is expected not to be critical either.

\section{Auxiliary mathematical relations}
\label{sec:auxiliary_maths}

\subsection{Derivation of the properties of the operator valued Pochhammer-Symbol}
\label{sec:proof_of_eq:general_fidelity_bound}
We begin with \eq~\eqref{eq:adaggera}. The general relationship
\begin{equation}
\label{eq:appendix:adaggera}
 (a\ad_j)^t\hat{n}_j(a_j)^{t} = p_t(\hat{n}_j),
\end{equation}
for $t \in \NN$, can be shown by induction starting from $p_{0}(\hat{n}_j)=\hat{n}_j$ and noting that, for all $t\geq -1$, 
\begin{align}
\nonumber 
 a\ad_j  p_{t}(\hat n)  a_j &= a\ad_j  \hat n_j  (\hat n_j - 1)  (\hat n_j - 2 )  \cdots  (\hat n_j -t)  a_j\\
\nonumber 
 &= a\ad_j  \hat n_j  (\hat n_j - 1)  (\hat n_j - 2 )  \cdots  (\hat n_j -(t-1)) a_j  (\hat n_j -(t+1))\\
\nonumber
 &= p_{t}(\hat n_j)  (\hat n_j -(t+1)) \\
 &= p_{t+1}(\hat n_j) , 
\end{align}
as can be verified using the commutation relations between $a_j$ and $a\ad_j$. Setting $t=n_j$ gives \eq~\eqref{eq:adaggera}
.

In turn, \eq~\eqref{eq:adaggerb} 
can be shown by noting that 
\begin{align}
\label{eq:appendix:adaggerb}
(a\ad_j)^{n_j}(a_j)^{n_j}=(a\ad_j)^{n_j-1} \hat{n}_j(a_j)^{n_j-1}
\end{align}
and applying \eq~\eqref{eq:appendix:adaggera}, for $t=n_j-1$, to the right-hand side of \eqref{eq:appendix:adaggerb}.

\subsection{Proof of the bound \texorpdfstring{\eqref{eq:bound_for_cs}}{}}
\label{sec:app_bound}
Note that for $x=0$ both sides of \eq~\eqref{eq:bound_for_cs} yield $1$ and hence the bound holds in that case. 
We make the substitution $y = 1 /x$ and show that the bound \eqref{eq:bound_for_cs} holds for all $x>0$ by proving the following: 
\begin{equation} \label{eq:bound_for_cs_y}
 \frac 1 {1-\e^{-y}} \leq \frac 1 y + \frac 1 {2(1+1/y)} + \frac 1 2 \quad \forall y\geq 0  .
\end{equation}
But this is equivalent to 
\begin{equation}
 2 y^2 + 3 y + 2 \leq \e^y(2 + y) . 
\end{equation}
A straight forward calculation shows that both sides and also the first derivatives of both sides coincide at $y=0$, while the second derivative of the right hand side is always larger than the second derivative of the left hand side. This proves \eq~\eqref{eq:bound_for_cs_y} and hence finishes the proof of the bound \eqref{eq:bound_for_cs}.

\end{document}